\documentclass[11pt]{imsart}

\usepackage[margin=1in]{geometry}

\usepackage[toc]{appendix}
\usepackage{statmath}
\usepackage{bbold}
\RequirePackage[OT1]{fontenc}
\RequirePackage[authoryear]{natbib}
\usepackage{adjustbox}
\usepackage{amsmath,amssymb,amsthm,bm,latexsym}
\usepackage{graphicx,psfrag,epsf}
\usepackage{epigraph,xcolor}
\usepackage{enumerate}
\usepackage{url}
\usepackage{booktabs}
\usepackage{multirow}
\usepackage{hyperref}
\hypersetup{colorlinks=true,citecolor=blue,
	pdfpagemode=FullScreen,linktocpage=true}
\usepackage{cleveref}
\usepackage{amsfonts}
\usepackage{amsmath}
\usepackage{orcidlink}
\usepackage{amsthm}
\usepackage[T1]{fontenc}
\usepackage{soul}
\usepackage{caption}
\usepackage{subcaption}

\startlocaldefs
\def\E{\mathbb{E}}
\def\I{\mathbf{I}}
\def\P{\mathbb{P}}

\def\L{\bm{L}}

\def\Z{\mathbb{Z}}

\def\var{{\textrm{var}}\,}
\def\Cov{{\textrm{Cov}}\,}

\def\N{\mathbb{N}}
\def\Z{\mathbb{Z}}

\def\R{\mathbb{R}}

\newcommand{\convD}{\xrightarrow[]{\mathcal{D}}}
\newcommand{\convP}{\xrightarrow[]{\mathcal{P}}}

\newcommand{\id}{\mathbf{I}}

\renewcommand{\P}{\mathbb{P}}
\renewcommand{\L}{\mathcal{L}}

\newcommand{\RNum}[1]{\uppercase\expandafter{\romannumeral #1\relax}}

\newcommand{\norm}[1]{\left\| #1 \right\|}

\newtheoremstyle{general}
{3mm} 
{3mm} 
{\it} 
{} 
{\bfseries} 
{.} 
{.5em} 
{} 

\theoremstyle{general}

\newtheorem{theorem}{Theorem}
\newtheorem{corollary}{Corollary}

\endlocaldefs

\begin{document}

\begin{frontmatter}
\title{A nonparametric approach to understand multivariate quantile dynamics in financial time series}

\runtitle{ Nonparametric methods in multivariate financial time series}

\begin{aug}
\author[IIMB]{\fnms{Kunal} \snm{Rai}\,\ead[label=e1]{kunal.rai21@iimb.ac.in}},
\author[IISERP]{\fnms{Archi} \snm{Roy}\,\ead[label=e2]{roy.archi@students.iiserpune.ac.in}},
\author[ISR]{\fnms{Itai} \snm{Dattner}\,\ead[label=e3]{idattner@stat.haifa.ac.il}},
\author[IIMB]{\fnms{Soudeep} \snm{Deb}\,\orcidlink{0000-0003-0567-7339}\ead[label=e5]{soudeep@iimb.ac.in}}

\address[IIMB]{Indian Institute of Management Bangalore, Bannerghatta Main Rd, Bangalore, KA 560076, India.}
\address[IISERP]{Indian Institute of Science Education and Research, Dr Homi Bhabha Rd, Pune, MH 411008, India.}
\address[ISR]{Department of Statistics, University of Haifa, Rabin Building, Mount Carmel, Haifa 3498838, Israel.}


\runauthor{Rai et al. }
\end{aug}
\bigskip

\begin{abstract}
 Over the last decade, nonparametric methods have gained increasing attention for modeling complex data structures due to their flexibility and minimal structural assumptions. In this paper, we study a general multivariate nonparametric regression framework that encompasses a broad class of parametric models commonly used in financial econometrics. Both the response and the covariate processes are allowed to be multivariate with fixed finite dimensions, and the framework accommodates temporal dependence, thereby introducing additional modeling and theoretical hurdles. To address these challenges, we adopt a functional  dependence structure which permits flexible dynamic behavior while maintaining tractable asymptotic analysis. Within this setting, we establish strong and weak convergence results for the estimators of the conditional mean and volatility functions. In addition, we investigate conditional geometric quantiles in the multivariate time series context and prove their consistency under mild regularity conditions. The finite sample performance is examined through comprehensive simulation studies, and the methodology is illustrated by modeling the stock returns of Maersk and Lockheed Martin as a nonparametric function of a geopolitical risk index.
\end{abstract}

\begin{keyword}
    \kwd{geometric quantiles}
    \kwd{Nadaraya-Watson estimator}
    \kwd{stochastic regression}
    \kwd{time series}
\end{keyword}

\end{frontmatter}

\section{Introduction} 
\label{sec:introduction}

Despite their widespread use in modeling time series data, parametric frameworks have been criticized for their limited flexibility, particularly in some types of financial datasets where they fail to capture the rich and evolving dynamics appropriately \citep[see, for example,][]{bingham2002semi}. This has spurred interest in nonparametric methods, which offer a flexible and scalable alternative. In this paper, our focus in this paper is on a novel extension of existing nonparametric theory which can be useful in modeling and analyzing the dynamics of multivariate financial data. Specifically, we develop a unified nonparametric framework, along with suitable asymptotic theory, to explore the drift, volatility and quantiles in a multivariate time series.

It is of essence here to recall that one of the earliest approaches in modeling financial returns nonparametrically was through continuous time diffusion models \citep[see][among others]{jiang1997nonparametric, fan2005selective}: $dY_t=\mu(s_t)dt+\sigma(s_t)dW_t$, where $\{Y_t\}$ is the series to be modeled as a function of a covariate $\{s_t\}$, $\mu(\cdot)$ and $\sigma(\cdot)$ are respectively the drift and the volatility function and $\{dW_t\}$ is a standard Brownian motion. In a nonparametric approach, no structural assumptions are imposed on the functional forms of $\mu(\cdot)$ and $\sigma(\cdot)$ functions; rather, they are globally estimated using a kernel smoother. Notably, \cite{zhao2008confidence} established the asymptotic properties of these functional estimators in the univariate setting. However, extending this theory to a multivariate framework, especially in the presence of temporal dependence, introduces substantial technical challenges, which we address in this paper. We consider the model
\begin{equation}
    \label{eq:main_model_multivariate}
    \bm{Y_t}=\mu(\bm{X_t})+\Sigma^{1/2}(\bm{X_t})\bm{\bm{e}_t}, \; t=1,\dots,n,
\end{equation}
where $\{\bm{Y_t}\}\in\mathbb{R}^p$ represents a $p$-variate response variable, $\{\bm{X}_t\} \in \mathcal{X} \subseteq \mathbb{R}^k$ is a $k$-dimensional covariate process with a compact $\mathcal{X}$, and $\{ \bm{e}_i \} \in \R^p$ represents a multivariate stochastic independent and identically distributed noise. Our theoretical results are derived under the multi-dimensional case where $p < \infty$ and fixed, and $n\to\infty$. Further, our setup allows for both heavy-tailed and long memory in $\bm{X}_t$, making it especially appropriate for application to financial data.  While we do not impose any parametric form on the functions $\mu:\mathbb{R}^k\rightarrow \mathbb{R}^p$ and $\Sigma:\mathbb{R}^{k}\rightarrow \mathbb{R}^{p\times p}$, we emphasize that \eqref{eq:main_model_multivariate} encompasses a large class of parametric models commonly encountered in time series literature. For example, putting $\mu(\bm{x}) = A \bm{x}$ and $\Sigma(\bm{x}) = \Sigma$ for a constant matrix $A\in \mathbb{R}^{k\times p}$ and a positive definite matrix $\Sigma\in \mathbb{R}^{p\times p}$, with $\bm{X}_t=\bm{Y}_{t-1}$ gives us the VAR(1) model. If $\bm X_t$ further includes exogenous variables, then it is a VARX model. Similarly, putting $\mu(\bm{x}) = A_1 \bm{x} \bm{1}_{\{\bm{x}\in \mathcal{R}_1\}} + A_2 \bm{x} \bm{1}_{\{\bm{x}\in \mathcal{R}_2\}}$ with $\mathcal{R}_1,\mathcal{R}_2$ being a partition of $\mathcal{X}$, $A_1, A_2 \in \mathbb{R}^{p\times k}$ and $\Sigma(\bm{x})=\Sigma$, gives us the multivariate threshold TAR model. A multivariate GARCH(1,1) model is obtained by specifying a constant (or linear) conditional mean and allowing the conditional covariance matrix to evolve dynamically. For example, we obtain a BEKK-GARCH type specification by putting $\Sigma(\cdot)=H_t$ where $H_t=C+A_1 \bm{Y}_{t-1}\bm{Y}_{t-1}^\intercal A_1^\intercal+B_1 H_{t-1} B_1^\intercal$ for matrices $A_1,B_1,C$ of suitable dimensions. Similarly, other models can also be obtained by suitably choosing the specifications.
 
In subsequent sections, we define the multivariate local constant estimates of drift and volatility in the generalized framework \eqref{eq:main_model_multivariate}, establish their consistency and weak convergence. Then, we focus on the estimation of conditional quantiles, motivated by their relevance in financial risk estimation. At this stage, it is natural to address why a multivariate notion of risk is required. In recent years, multivariate risk measures have attracted considerable attention since they explicitly account for cross-sectional dependence across assets. When tail behavior must be assessed jointly for multiple financial returns whose dependence structure evolves with economic conditions, a vector-valued conditional quantile becomes essential. A motivating example for us is connected to modeling the joint lower-tail behavior of defense and transportation stocks as a function of geopolitical risk indices,  where separate univariate quantile estimates fail to capture the possibility of simultaneous extreme losses and the strengthening of dependence during periods of heightened geopolitical tension. A multivariate conditional quantile, returning a vector of the same dimension as the response, provides a coherent object that captures both marginal tail behavior and their joint structure. Empirical evidence, such as \cite{santos2013comparing}, suggests that multivariate approaches outperform univariate methods in portfolio value-at-risk evaluation. In this context some early contributions \citep[e.g.][]{jouini2004vector,ekeland2012comonotonic,cousin2013multivariate} proposed various multivariate risk measures, which were later criticized for lacking clear economic interpretation. Recent contributions include order-statistic and quantile-regression–based methods \citep{chun2012conditional, sun2018integrated}, copula-based constructions \citep{coblenz2018nonparametric}, and optimal-transport–based quantiles \citep{bercu2024monge}. In this study, we adopt geometric quantiles \citep{chaudhuri1996geometric, chowdhury2019nonparametric}, which provide a natural and well-defined multivariate extension of univariate quantiles without requiring an ordering in higher dimensions. Defined through a convex optimization problem, geometric quantiles enjoy uniqueness under mild conditions, are affine equivariant, and retain a direct connection to classical univariate quantile regression \citep{koenker1978regression}. This makes them particularly suitable for conditional multivariate time series modeling.

Overall, the contribution of this paper is three-fold. First, we use Nadaraya-Watson type nonparametric methods for estimating multivariate conditional mean and variance and derive suitable asymptotic theory which has not been formally done before. Second, we propose a nonparametric methodology for estimating conditional geometric quantiles of multivariate time series, where the resulting quantile is vector-valued and matches the dimension of the response. Relevant asymptotic properties of the proposed conditional quantile estimator are derived under mild regularity conditions, allowing for heavy-tailed innovations and general dependence structures in the covariates. Overall, this provides an integrated treatment of location, scale, and tail behavior for multivariate financial data. As the third contribution of this paper, we provide detailed empirical exploration of the developed theory with a real-life study on the behavior of two important geopolitical risk-prone stocks -- Lockheed Martin, which is a defense company in United States of America, and Maersk, which is a shipping company in Denmark -- under the influence of the geopolitical risk. A simulation study that mimics similar time series data is also included in the paper for completeness. 

The rest of the article is organized in the following manner. We review relevant literature in \Cref{sec:literature_review}. The main theoretical results pertaining to the asymptotic theory of the drift, volatility and quantile estimates are established in \Cref{sec:main_results}. The efficacy of the proposed estimators are illustrated with a brief simulation study and a detailed real data application on defense and shipping markets in \Cref{sec:simulation_studies} and \Cref{sec:application}, respectively. Finally, we conclude with some necessary remarks in \Cref{sec:conclusions}. In the interest of space, all proofs are deferred to the Appendix.

\section{Literature Review}
\label{sec:literature_review}

A vast literature has developed on nonparametric regression for estimating the conditional mean function $\mu(\cdot)$. Standard approaches include local constant, local linear, and local polynomial estimators, along with robust variants such as nonparametric M-estimators and smoothing splines. Early methodological contributions include design-adaptive regression \citep{fan1992design}, nearest-neighbor methods \citep{altman1992introduction}, and regression with noisy covariates \citep{mammen2012nonparametric}. Theoretical properties of these estimators are well understood in the independent data setting \citep{takezawa2005introduction}, while extensions to higher-dimensional settings have largely relied on sparsity assumptions or variable selection techniques \citep{lin2006component, bertin2008selection, yang2015minimax}. In contrast, we consider multivariate nonparametric regression in a time series framework with temporal dependence and without imposing sparsity restrictions.

Compared to the literature of conditional mean estimation, there have been less work devoted to the estimation of the conditional volatility. Early contributions in this direction include \cite{muller1987estimation, hall1990asymptotically}, among others. These studies primarily focused on estimating the conditional variance as a single scalar quantity, rather than modeling it as a function of the covariate process. A particularly influential contribution is \cite{wang2008effect}, which established minimax optimal rates for the estimation of conditional heteroskedasticity in nonparametric regression settings. In context of estimation of the conditional variance function, \cite{chib2006analysis} proposed a Bayesian approach based on Markov chain Monte Carlo sampling from standard distributions, drawing inspiration from techniques developed for the estimation of stochastic volatility models, as illustrated in \cite{kim1998stochastic}. An innovative approach was proposed by \cite{cai2008adaptive}, who developed a data-driven estimator of the conditional heteroskedasticity function by applying wavelet thresholding to the squared first-order differences of the observations. Parallel approaches to the estimation of the conditional variance include least squares–based estimators \citep{tong2005estimating}, methods based on weighted empirical processes \citep{chown2018detecting}, techniques employing second-order U-statistics \citep{liu2021adaptive}, and more recent approaches based on variational mode decomposition \citep{palanisamy2017smoothing}. While most of these techniques can be adapted to multivariate data, they are largely reliant on the data generating process being independent. In \cite{kulik2011nonparametric}, the authors considered the heteroskedasticity estimation problem for nonparametric regression with dependent errors, and more recently \cite{hu2013nonparametric} proposed a kernel-based estimator which retained asymptotic consistency under strongly mixing in the data generating process. Complementing the existing literature, we propose estimation of the conditional variance under very general structure of the multi-dimensional covariate process, which covers the case of both short and long memory. We further establish their asymptotic consistency and weak convergence properties under minimal structural assumptions.

Turning attention to the estimation of multivariate quantiles, we note the parametric and semi-parametric approaches of \cite{chakraborty2003multivariate, ma2019quantile, jurevckova2024estimation}, while nonparametric extensions of local linear methods for real-valued responses appear in \cite{lejeune1988quantile, yu1998local, horowitz2005nonparametric}. However, these methods are primarily designed for scalar responses and do not directly address genuinely multivariate quantile objects. More recent work on multivariate quantiles includes directional, kernel-based, copula-based, and conditioning-based constructions; see \cite{chavas2018multivariate}, \cite{huang2018nonparametric}, \cite{coblenz2018nonparametric}, and \cite{galvao2025multivariate}. Our approach instead builds on the geometric quantile framework, which formulates multivariate quantiles through a convex optimization problem and avoids the need for ordering, conditioning sequences, or dimension-reduction schemes. Unlike much of the existing literature, which is developed under independence or structured model assumptions, we establish asymptotic results for conditional geometric quantiles in a multivariate time series setting with general dependence. This places multivariate quantile estimation within a fully nonparametric dynamic framework and complements the existing body of work on mean and variance estimation under dependence.

\section{Methodology}
\label{sec:main_results}

Throughout this article, the symbols $\R,\Z,\L^k$ are used to indicate respectively the set of real numbers, the set of integers, and the set of all random variables with finite moments up to the $k^{th}$ order. Whenever used, $\id$ represents an identity matrix of appropriate order. Otherwise, bold-faced letters are used to denote vectors unless otherwise stated. We shall use $\convD$ (similarly, $\convP$) to indicate convergence in distribution (similarly, convergence in probability). Finally, $\mathcal{N}_p(\bm{\theta},\Sigma)$ refers to a $p$-variate Gaussian distribution with mean $\bm{\theta}$ and covariance matrix $\Sigma$.

\subsection{Mathematical framework}
\label{subsec:Mathematical framework}

Recall our stochastic regression model described in \eqref{eq:main_model_multivariate} that covers a wide variety of commonly encountered multivariate time series models. We first define the problem setting in which our unified approach of multivariate nonparametric estimation of conditional mean, variance, and quantiles is developed. The covariate process $\{\bm{X_t}\}$ is taken to be stationary and independent of the error process $\{\bm e_t\}$. It is defined through a measurable function of an independently and identically distributed (iid) sequence of random variables $\{\eta_t\}_{t\in \Z}$ as
\begin{equation}
    \bm{X}_t = \mathcal{G}(\mathcal{F}_t),
\end{equation}
where $\mathcal{F}_t$ is the $\sigma$-field generated by $(\dots, {\eta}_{t-1}, \eta_t)$, and $\mathcal{G}$ is a measurable function such that $\bm{X}_t$ is well defined. Such an assumption is commonly used in time series literature; see \cite{fan2005selective} for a detailed review. As an example to illustrate this data generation process, let $\bm{X}_i = \bm{c} + \sum_{j=1}^{q} \bm{B}_j \bm{\varepsilon}_{i-j} + \bm{\varepsilon}_i$ where $\bm{\varepsilon}_{i-j}$ is a past white noise vector of index $i-j$, $\bm{c}$ is a vector of constants, $\bm{B}_j$ are matrices of coefficients for each lag $j$ and $\bm{\varepsilon}_i$ is a vector of error terms. To capture the dependence structure in $\{\bm{X_t}\}$ we define a projection operator $\mathcal{P}_k$, which for a random variable $W \in \L^1$ and for $k \in \Z$, is defined as: $\mathcal{P}_kW = \E(W \mid \mathcal{F}_k) - \E(W \mid \mathcal{F}_{k-1})$. Denote $F_{\bm{X}}$ and $F_{\bm{e}}$ as the multivariate distribution functions of $\bm{X}_0$ and $\bm{e}_0$ respectively; with $f_{\bm{X}} = F'_{\bm{X}}$ and $f_{\bm{e}} = F'_{\bm{e}}$ being the corresponding density functions. Further, for $i \in \Z$, let $F_{\bm{X}}(\bm{x} \mid \mathcal{F}_i) = \P(\bm{X}_{i+1} \leqslant \bm{x} \mid \mathcal{F}_i)$ denote the multivariate conditional distribution function of $\bm{X}_{i+1}$ given $\mathcal{F}_i$ and $f_{\bm{X}}(\bm{x} \mid \mathcal{F}_i) = \partial F_{\bm{X}}(\bm{x} \mid \mathcal{F}_i) / \partial x $ the corresponding conditional density. Then, following \cite{wu2005nonlinear}, define the functional dependence measure
\begin{equation}
\label{eq:functional_dependence_measure}
    \theta_i = \sup_{\bm{x} \in \R^k} [\norm{\mathcal{P}_0 f_{\bm{X}}(\bm{x} \mid \mathcal{F}_i)} + \norm{\mathcal{P}_0 f'_{\bm{X}}(\bm{x} \mid \mathcal{F}_i)}],
\end{equation}
where $f'_{\bm{X}}(\bm{x} \mid \mathcal{F}_i) = \partial f_{\bm{X}}(\bm{x} \mid \mathcal{F}_i) / \partial x$, if it exists. Roughly, the term $\theta_i$ measures the contribution of $\bm{e}_0$ in predicting $\bm{X}_{i+1}$. Using this, for $n \in \N$, we further define 
\begin{equation}
\label{eq:srd_lrd}        
    \Theta_n = \sum_{i=1}^n \theta_i, \;
    \quad
    \Lambda_n = n\Theta_{2n}^2 + \sum_{k=n}^{\infty} \left( \Theta_{n+k} - \Theta_k \right)^2.
\end{equation}
We consider $\Theta_{\infty} < \infty$, which implies that the cumulative contribution of $\eta_0$ in predicting future values is finite, thus pointing to the short-range dependence (SRD) setting. In this case, $\Lambda_n = \mathcal{O}(n)$. When $\Theta_{\infty} \to \infty$, we have the long range order.

\subsection{Estimation procedure and asymptotic theory for conditional mean and volatility}
\label{subsec:Estimation procedure and asymptotic theory for conditional mean and volatility}

To establish the theoretical properties of the estimates in \eqref{eq:multivariate_mu_expression} and \eqref{eq:multivariate_variance_function}, we begin by introducing some notation of future interest. Let $\mathcal{C}^0(\bm{x})$ denote a set of continuous functions in the $\epsilon$-neighborhood of $\bm{x}$, defined as $\bm{x}^{\epsilon}  = \bigcup_{\bm{y} \in \R^k} \{ \bm{y}: \norm{\bm{x}-\bm{y}} \leqslant \epsilon \} \in \R^k$ for some $\epsilon > 0$. Next, define $\mathcal{C}^N(\bm{x}^{\epsilon}):=\{ g(\cdot): \sup_{\bm{x} \in \bm{x}^{\epsilon}} |g^{(l)}(\bm{x})|<\infty, \; \text{for } l=0,\dots,N \in \Z\}$ as the set of all the functions that have bounded derivatives on $\bm{x}^{\epsilon}$ up to $N$. The density function $f_{\bm{X}}$ is assumed to satisfy the conditions $f_{\bm{X}} > 0$ and $f_{\bm{X}} \in \mathcal{C}^2(\{\bm{x}^{\bm{\epsilon}}\})$ for all $\bm\epsilon$ satisfying $\|\bm{\epsilon}\| > 0$. Next, we assume $\bm{e}_t \in \mathcal{L}^2$, where a random variable $W$ is considered to be in $\mathcal{L}^p$ for $p>0$, if $\|W\|_p := [\E(|W|^p)]^{1/p} < \infty$. In the same context, denote $\mathbb{V}_e = \var[\bm{e}_t]$. Finally, let $\mathcal{K}$ be the set of kernels which are symmetric, whose values are bounded, and have bounded derivative and bounded support. The kernel functions used in our methods are considered to be taken from this set.  Using $\langle \cdot,\cdot\rangle$ to indicate inner products, two terms of interest, $\psi_K$ and $\phi_K$, are defined as $\psi_K = \frac{1}{2} \int_{\R^k} \langle \bm{t}, \bm{t} \rangle K(\bm{t})d\bm{t}$ and $\phi_K = \int_{\R^k}K^2(\bm{t})d\bm{t}$.

The model \eqref{eq:main_model_multivariate} explores relationships among various covariates and the multi-dimensional dependent variable through the mean function $\mu(\cdot)$, and the conditional variance function $\Sigma(\cdot)$. We let $\mu, \Sigma \in \mathcal{C}^2(\{\bm{x}\}^{\bm{\epsilon}})$ for all $\bm\epsilon$ satisfying $\|\bm{\epsilon}\| > 0$, along with $\Sigma(\bm x)$ being a positive definite matrix for all $\bm x$. We also assume that our time series variables are stationary and $\E[\|\bm{Y}_0\|^4] < \infty$. We use nonparametric Nadaraya-Watson type estimators to estimate these functions. It is imperative to point out that a local constant estimator instead of local linear estimator is attractive due to mathematical properties, however the asymptotic theory follows in similar lines with slight modifications. The estimator for the multivariate mean and variance function are respectively given by
\begin{equation}
    \label{eq:multivariate_mu_expression}
            \widehat{\mu}(\bm{x}) = \sum_{t=1}^{n}{\bm{Y}_t\nu_t(\bm{x})},
\end{equation}
and
\begin{equation}
    \label{eq:multivariate_variance_function}
    \widehat\Sigma(\bm{x})=\sum_{t=1}^{n}{\left(\bm{Y}_t-\widehat\mu(\bm{X}_t)\right) \, \nu_t \, (\bm{x})\left(\bm{Y}_t-\widehat\mu(\bm{X}_t)\right)^\intercal},
\end{equation}
where
\begin{equation}
\label{eq:weight_definition}
    \nu_t(\bm{x})=\frac{K_{b_n}\left(\bm{x}-\bm{X}_t\right)}{\sum_{t=1}^{n}{K_{b_n}\left(\bm{x}-\bm{X}_t\right)}}, K_{b_n}(\bm{v})=\frac{1}{b_n^k}K\left(\frac{\bm{v}}{b_n}\right).
\end{equation}

In the above, $K:\R^k\rightarrow\R$ is an appropriately chosen kernel function and  $b_n$ is the bandwidth satisfying  $b_n \to 0$ and $nb_n \to \infty$ as $n \to \infty$. For notational ease, we shall use the same bandwidth $b_n$ throughout this paper, although each estimator can also be computed using different bandwidth choices, with minimal modifications to the theory.

Now, the following theorem establishes the consistency and the asymptotic Gaussian behavior of the pointwise estimate  of conditional mean under certain bandwidth conditions.

\begin{theorem}
    \label{thm:theorem1}
    \begin{enumerate}
        \item[] 
        \item[i)] For any fixed $\bm{x} \in \R^k$, under the aforementioned settings, the proposed estimate of the mean function converges to the population equivalent in probability, i.e., $\widehat\mu(\bm{x}) \convP \mu(\bm{x})$.
        \item[ii)] Further assume the bandwidth satisfies
\begin{equation}
    b_n + nb_n^{k+4} + \frac{1}{nb_n^k} + \Lambda_n \left( \frac{b_n^3}{n} + \frac{1}{n^2} \right) \rightarrow 0,
\end{equation}
and let $\rho_{\mu}(\bm{x})=\nabla^2 \mu(\bm{x})+2\nabla \mu(\bm{x})\frac{\nabla f_{\bm{X}}(\bm{x})}{f_{\bm{X}}(\bm{x})}$. Then as $n \rightarrow \infty$,
\begin{equation}
\label{eq:thm1}
    \frac{\sqrt{nb_n^k \hat{f}_{\bm{X}}(\bm{x})}}{\sqrt{\phi_K}} (\Sigma^{1/2}(\bm{x})\mathbb{V}_e\Sigma^{1/2}(\bm{x})^{\top})^{-1/2}(\bm{x}) \left[ \widehat{\mu}_{b_n}(\bm{x})-\mu(\bm{x})-b_n^{2}\psi_K\rho_{\mu}(\bm{x}) \right] \convD N_p(\bm{0}, \bm{I}).
\end{equation}
    \end{enumerate} 
\end{theorem}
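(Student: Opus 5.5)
We will follow the standard ratio decomposition for Nadaraya--Watson estimators, adapted to the functional dependence framework of \cite{wu2005nonlinear} and to a vector-valued response. Write $\hat f_{\bm X}(\bm x)=n^{-1}\sum_{t=1}^n K_{b_n}(\bm x-\bm X_t)$. Then
\begin{equation*}
\widehat\mu(\bm x)-\mu(\bm x)=\frac{1}{n\hat f_{\bm X}(\bm x)}\sum_{t=1}^n K_{b_n}(\bm x-\bm X_t)\bigl[\mu(\bm X_t)-\mu(\bm x)\bigr]+\frac{1}{n\hat f_{\bm X}(\bm x)}\sum_{t=1}^n K_{b_n}(\bm x-\bm X_t)\Sigma^{1/2}(\bm X_t)\bm e_t=:B_n(\bm x)+M_n(\bm x).
\end{equation*}
The plan is to show three things: $\hat f_{\bm X}(\bm x)\convP f_{\bm X}(\bm x)$; $B_n(\bm x)=b_n^2\psi_K\rho_\mu(\bm x)+o_P\bigl((nb_n^k)^{-1/2}\bigr)$; and $M_n(\bm x)$ is asymptotically Gaussian. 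Part (i) then follows from the same three steps, since $b_n\to0$ and $nb_n^k\to\infty$ give $B_n\to0$ and $M_n=O_P\bigl((nb_n^k)^{-1/2}\bigr)$.

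\textbf{Density and bias terms.} For the density, split $\hat f_{\bm X}-f_{\bm X}$ into its expectation and a centered part. A second-order Taylor expansion of $f_{\bm X}\in\mathcal C^2$, together with the symmetry of $K$, gives $\E\hat f_{\bm X}(\bm x)-f_{\bm X}(\bm x)=O(b_n^2)$. For the centered part, decompose
\begin{equation*}
K_{b_n}(\bm x-\bm X_t)-\E K_{b_n}(\bm x-\bm X_t)=\Bigl[K_{b_n}(\bm x-\bm X_t)-\E\bigl\{K_{b_n}(\bm x-\bm X_t)\mid\mathcal F_{t-1}\bigr\}\Bigr]+\Bigl[\E\bigl\{K_{b_n}(\bm x-\bm X_t)\mid\mathcal F_{t-1}\bigr\}-\E K_{b_n}(\bm x-\bm X_t)\Bigr].
\end{equation*}
The first bracket is a martingale difference array with variance of order $n^{-1}b_n^{-k}$. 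The second equals $\int K(\bm u)f_{\bm X}(\bm x-b_n\bm u\mid\mathcal F_{t-1})\,d\bm u$ minus its mean, and we will bound its sum through the projections $\mathcal P_{j}$. Writing the sum as $\sum_j \mathcal P_j(\cdot)$ and using orthogonality of the projections with the definition of $\theta_i$ in \eqref{eq:functional_dependence_measure}, we expect a bound of order $n^{-2}\Lambda_n$ on its squared $\mathcal L^2$ norm. Using the derivative part of $\theta_i$ after Taylor expanding in $b_n\bm u$, the $b_n$-dependent piece should give the extra $\Lambda_n b_n^3/n$-type contribution. The bandwidth condition is exactly what makes these terms negligible.

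The numerator of $B_n$ is handled by the same martingale-plus-projection split. Its mean is computed by expanding both $\mu(\bm x-b_n\bm u)-\mu(\bm x)$ and $f_{\bm X}(\bm x-b_n\bm u)$ to second order. The first-order terms vanish by symmetry, and what remains is $b_n^2\psi_K\bigl[f_{\bm X}\nabla^2\mu+2\nabla\mu\cdot\nabla f_{\bm X}\bigr]$. Dividing by $f_{\bm X}$ yields $b_n^2\psi_K\rho_\mu(\bm x)$, with an $o(b_n^2)$ remainder. The condition $nb_n^{k+4}\to0$ is needed to ensure that $\sqrt{nb_n^k}\,o(b_n^2)\to0$. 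Its stochastic fluctuation carries an extra factor $\|\mu(\bm X_t)-\mu(\bm x)\|=O(b_n)$ on the kernel support, so it is $o_P\bigl((nb_n^k)^{-1/2}\bigr)$.

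\textbf{Noise term (main step).} Because $\{\bm e_t\}$ is iid and independent of $\{\bm X_t\}$, the summands $\xi_{n,t}=(nb_n^k)^{-1/2}b_n^kK_{b_n}(\bm x-\bm X_t)\Sigma^{1/2}(\bm X_t)\bm e_t$ form a martingale difference array with respect to $\mathcal G_t=\sigma(\mathcal F_{t+1},\bm e_s,s\le t)$. We will apply a martingale CLT for arrays together with the Cramér--Wold device. The conditional covariance is
\begin{equation*}
\sum_t\E\bigl(\xi_{n,t}\xi_{n,t}^\top\mid\mathcal G_{t-1}\bigr)=\frac{1}{nb_n^k}\sum_t b_n^{2k}K_{b_n}^2(\bm x-\bm X_t)\,\Sigma^{1/2}(\bm X_t)\mathbb V_e\Sigma^{1/2}(\bm X_t)^\top .
\end{equation*}
By the density argument applied with $K^2$ in place of $K$, this converges in probability to $\phi_Kf_{\bm X}(\bm x)\Sigma^{1/2}(\bm x)\mathbb V_e\Sigma^{1/2}(\bm x)^\top$, using the continuity of $\Sigma$. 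The conditional Lindeberg condition follows from boundedness of $K$ and $\bm e_t\in\mathcal L^2$ via a truncation argument: the $\mathcal L^2$ tail of $\bm e_t$ is uniformly small, and $\max_t|\xi_{n,t}|\lesssim(nb_n^k)^{-1/2}\|\bm e_t\|$. Combining this limit with $\hat f_{\bm X}\convP f_{\bm X}$ and Slutsky's lemma gives
\begin{equation*}
\sqrt{nb_n^k\hat f_{\bm X}(\bm x)}\,M_n(\bm x)\convD N_p\bigl(\bm 0,\phi_K\Sigma^{1/2}\mathbb V_e\Sigma^{1/2\top}\bigr),
\end{equation*}
and standardizing by this covariance yields \eqref{eq:thm1}.

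We expect the main obstacle to be the projection bound on the predictable part $\sum_t\bigl[\E\{\cdot\mid\mathcal F_{t-1}\}-\E(\cdot)\bigr]$ for the density, bias and variance sums. The task there is to extract precisely the $\Lambda_n(b_n^3/n+n^{-2})$ rate from $\theta_i$, in particular the Taylor step that introduces $f'_{\bm X}(\cdot\mid\mathcal F_i)$.
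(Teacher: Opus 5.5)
Your proposal is correct. Its skeleton (Taylor expansion of the bias, martingale CLT, Slutsky for $\hat f_{\bm X}$) matches the paper, but you handle the dependence differently.

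The paper proceeds in four steps:
\begin{itemize}
\item It linearizes $\widehat g_n/\widehat f_n$ around $(g,f_{\bm X})$.
\item It reduces the centred estimator to $n^{-1/2}\sum_i\bm\xi_{n,i}$ with $\bm\xi_{n,i}\propto K((\bm x-\bm X_i)/b_n)\{\bm Y_i-\mu(\bm X_i)\}$.
\item It replaces this sum by a martingale difference array $\bm D_{n,i}$, using a Zhao--Wu type approximation with error $C\Lambda_n(b_n^3/n+n^{-2})$.
\item It applies Helland's martingale CLT.
\end{itemize}
Along the way, the fluctuation of $\sum_iK(\cdot)\{\mu(\bm X_i)-\mu(\bm x)\}$ is dismissed by an order-of-magnitude argument. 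In part (i), the bound $\mathrm{Var}(\widehat f_n)\le C\,\E[Z_{0,n}^2]/n$ is asserted from covariance summability rather than derived.

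You differ in three ways:
\begin{itemize}
\item You use the exact split $\widehat\mu-\mu=B_n+M_n$ with $\hat f_{\bm X}$ kept in the denominator. There is then no ratio remainder, and only $\hat f_{\bm X}\convP f_{\bm X}$ is required.
\item You observe that the noise sum is already an exact martingale difference array, so no approximation step is needed there.
\item You apply the martingale-plus-projection decomposition with $\theta_i$ to the covariate-only sums (density, bias numerator, conditional covariance). That is where the dependence actually has to be controlled.
\end{itemize}
The paper's version is shorter because it leans on Zhao and Wu (2008). Yours is self-contained and shows exactly where $\Lambda_n$ enters.

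On the step you flag as the main obstacle, you do not need the derivative part of $\theta_i$. Because you divide by $\hat f_{\bm X}$ exactly, the density only needs $\Lambda_n/n^2\to0$. The predictable part of the bias numerator has $\mathcal L^2$ norm of order $n^{-1}\sqrt{\Lambda_n+n}\int|K(\bm u)|\,\|\mu(\bm x-b_n\bm u)-\mu(\bm x)\|\,d\bm u=O\bigl(b_n\sqrt{\Lambda_n+n}/n\bigr)$; the extra $n$ covers the $\theta_0$ term. At the $\sqrt{nb_n^k}$ scale you therefore need $b_n^{k+2}(\Lambda_n+n)/n\to0$. The stated condition implies this, and it is automatic under the standing assumption $\Theta_\infty<\infty$.

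Two smaller points:
\begin{itemize}
\item Take $\psi_K=\frac12\int\langle\bm t,\bm t\rangle K(\bm t)\,d\bm t$ and an isotropic kernel, i.e.\ $\int\bm u\bm u^\top K(\bm u)\,d\bm u=\kappa\mathbf{I}$. Your second-order computation then gives $b_n^2(\psi_K/k)\rho_\mu(\bm x)$, not $b_n^2\psi_K\rho_\mu(\bm x)$. The paper carries the same constant. This is harmless, because $nb_n^{k+4}\to0$ makes any $O(b_n^2)$ centring $o\bigl((nb_n^k)^{-1/2}\bigr)$.
\item Your filtration $\mathcal G_t$ requires $\{\bm e_t\}$ to be independent of $\{\eta_t\}$. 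The filtration $\mathcal G_t=\sigma(\{\bm X_s\}_{s\in\Z},\bm e_s,\,s\le t)$ works under independence from $\{\bm X_t\}$ alone, which is what the paper assumes.
\end{itemize}
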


The first part of the proof of \Cref{thm:theorem1} follows from the Taylor expansion of the nonparametric estimate. The asymptotic distribution is derived by expressing $\left(\widehat\mu(\bm{x})-\mu(\bm{x})\right)$ as a martingale difference sequence; followed by an application of the martingale central limit theorem. The details, for brevity, are added in the Appendix. 
Note from model \eqref{eq:main_model_multivariate} that the conditional covariance matrix of $\bm{Y}_t$ given $\bm{X}_t=\bm{x}$ is $\Upsilon(\bm{x}) := \text{Var}(\bm{Y}_t \mid \bm{X}_t=\bm{x}) = (\Sigma^{1/2}(\bm{x}) \mathbb{V}_e \Sigma^{1/2}(\bm{x})^\top)$, which appears as the scaling factor in the above asymptotic distribution. We hereafter suppose, $\mathbb{V}_e = \I$ for the purpose of brevity and convenience, similar results can be established with a general, $\mathbb{V}_e$, as well. Note that if $\mathbb{V}_e = \I$ then $\Upsilon(\bm{x}) = \Sigma(\bm{x})$. The next result establishes the consistency of the variance function. It allows us to substitute the population conditional variance function by a sample estimate, as discussed below. 

\begin{theorem}
    \label{thm:theorem2} For any fixed $\bm{x} \in \R^k$, under the aforementioned settings, the proposed estimate of the covariance function converges to the population equivalent in probability, i.e., $\widehat{\Sigma}(\bm{x}) \inprob \Sigma (\bm{x})$. 
\end{theorem}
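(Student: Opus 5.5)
We reduce the consistency of $\widehat\Sigma(\bm{x})$ to the consistency of a Nadaraya--Watson smoother applied to the matrix-valued ``responses'' $\bm{Y}_t\bm{Y}_t^\intercal$, plus a remainder that comes from plugging in $\widehat\mu(\bm{X}_t)$ for $\mu(\bm{X}_t)$. Write $\bm{\varepsilon}_t = \Sigma^{1/2}(\bm{X}_t)\bm{e}_t$, so that $\bm{Y}_t-\widehat\mu(\bm{X}_t) = \bm{\varepsilon}_t + \bm{\Delta}_t$ with $\bm{\Delta}_t = \mu(\bm{X}_t)-\widehat\mu(\bm{X}_t)$. Expanding the outer product gives
\begin{equation*}
\widehat\Sigma(\bm{x}) = \sum_{t=1}^n \nu_t(\bm{x})\bm{\varepsilon}_t\bm{\varepsilon}_t^\intercal + \sum_{t=1}^n \nu_t(\bm{x})\left(\bm{\varepsilon}_t\bm{\Delta}_t^\intercal + \bm{\Delta}_t\bm{\varepsilon}_t^\intercal\right) + \sum_{t=1}^n \nu_t(\bm{x})\bm{\Delta}_t\bm{\Delta}_t^\intercal =: T_1 + T_2 + T_3 .
\end{equation*}
We show $T_1 \convP \Sigma(\bm{x})$ and $T_2, T_3 \convP 0$.

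\textbf{Leading term $T_1$.} Since $\mathbb{V}_e=\I$ and $\bm{e}_t$ is iid and independent of $\{\bm{X}_t\}$, we decompose
\begin{equation*}
T_1 = \sum_t \nu_t(\bm{x})\Sigma(\bm{X}_t) + \sum_t \nu_t(\bm{x})\Sigma^{1/2}(\bm{X}_t)(\bm{e}_t\bm{e}_t^\intercal-\I)\Sigma^{1/2}(\bm{X}_t)^\intercal .
\end{equation*}
The first piece is a deterministic-function smoother. Because $K$ has bounded support, only indices with $\|\bm{X}_t-\bm{x}\|\le Cb_n$ carry weight. By continuity of $\Sigma$ on $\bm{x}^{\epsilon}$ (indeed $\Sigma\in\mathcal{C}^2$), this piece differs from $\Sigma(\bm{x})$ by at most $\sup_{\|\bm{y}-\bm{x}\|\le Cb_n}\|\Sigma(\bm{y})-\Sigma(\bm{x})\|\to 0$.

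The second piece is $\sum_t \nu_t(\bm{x})\bm{\xi}_t$, where $\bm{\xi}_t$ is a martingale difference sequence with respect to the filtration generated by $\{\eta_s\}$ and $\{\bm{e}_s\}_{s\le t}$. To control it, multiply numerator and denominator by $(nb_n^k)^{-1}$. The denominator is $\widehat f_{\bm{X}}(\bm{x})\convP f_{\bm{X}}(\bm{x})>0$, exactly as established in the proof of \Cref{thm:theorem1}, and this uses the SRD condition $\Theta_\infty<\infty$. The numerator's conditional second moment given $\{\bm{X}_t\}$ is $O\big((nb_n^k)^{-2}\sum_t K^2((\bm{x}-\bm{X}_t)/b_n)\big)=O_P\big((nb_n^k)^{-1}\big)\to 0$. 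This step needs $\E\|\bm{e}_t\|^4<\infty$, which follows from $\E\|\bm{Y}_0\|^4<\infty$ together with positive definiteness and local boundedness of $\Sigma$. Hence $T_1\convP\Sigma(\bm{x})$.

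\textbf{Plug-in remainders $T_2$ and $T_3$.} By Cauchy--Schwarz for weighted sums,
\begin{equation*}
\|T_2\| \le 2\Big(\sum_t\nu_t(\bm{x})\|\bm{\varepsilon}_t\|^2\Big)^{1/2}\Big(\sum_t\nu_t(\bm{x})\|\bm{\Delta}_t\|^2\Big)^{1/2}, \qquad \|T_3\| \le \sum_t\nu_t(\bm{x})\|\bm{\Delta}_t\|^2 .
\end{equation*}
The first factor in the bound for $T_2$ is $O_P(1)$ by the $T_1$ argument. So both remainders vanish once we show $\sum_t\nu_t(\bm{x})\|\bm{\Delta}_t\|^2\convP 0$. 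Since the weights select $\bm{X}_t$ within $Cb_n$ of $\bm{x}$, it suffices to have
\begin{equation*}
\sup_{\|\bm{y}-\bm{x}\|\le \delta}\|\widehat\mu(\bm{y})-\mu(\bm{y})\|\convP 0
\end{equation*}
for some small fixed $\delta$: a \emph{local uniform} version of \Cref{thm:theorem1}(i).

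This uniform step is the main obstacle, since \Cref{thm:theorem1} is only pointwise. We handle it by a chaining (grid) argument on the compact ball.
\begin{enumerate}
\item[(a)] The bias $\sup_{\bm{y}}\|\E[\cdot]-\mu(\bm{y})\|=O(b_n^2)$ uniformly, because $\mu, f_{\bm{X}}\in\mathcal{C}^2$ on a neighbourhood.
\item[(b)] Place a grid of $N_n=O(b_n^{-k}n^{c})$ points. At each grid point, bound the stochastic term with a Burkholder/moment inequality of order $q$ (using $4$th moments, or truncating the errors at level $n^{1/4}$), giving a maximal bound $N_n^{1/q}(nb_n^k)^{-1/2}\to 0$.
\item[(c)] Between grid points, use the Lipschitz property of $K$ (bounded derivative in $\mathcal{K}$): it controls oscillations of both numerator and denominator by $b_n^{-k-1}\times$ mesh, which is negligible for polynomially fine meshes.
\item[(d)] Uniform positivity of $\widehat f_{\bm{X}}$ on the ball follows in the same way.
\end{enumerate}
If the fourth-moment route gives too crude a rate in (b), the fallback is a weaker requirement. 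Since $\sum_t\nu_t\|\bm{\Delta}_t\|^2$ is itself a smoother, it suffices to bound $\E\|\bm{\Delta}_t\|^2$ on the event $\|\bm{X}_t-\bm{x}\|\le Cb_n$ uniformly in $t$. One then applies Markov's inequality, noting that $\bm{\Delta}_t$ shares with $\widehat\mu(\bm{x})$ the variance order $(nb_n^k)^{-1}$ plus squared bias $O(b_n^4)$, after treating the self-term $t$ in $\widehat\mu(\bm{X}_t)$ as an $O((nb_n^k)^{-1})$ perturbation.

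With $T_1 \convP \Sigma(\bm{x})$ and $T_2, T_3 \convP 0$, Slutsky's lemma yields $\widehat\Sigma(\bm{x})\convP\Sigma(\bm{x})$.
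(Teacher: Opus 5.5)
Your decomposition is the paper's. $T_1$ is its oracle $\widetilde{\Sigma}(\bm{x})$, your $T_2,T_3$ are its $T_{1t},T_{2t}$ terms, and both are reduced to local uniform consistency of $\widehat\mu$ via a grid and the Lipschitz property of $K$ (\Cref{thm:theorem6}). Your handling of $T_1$ (conditioning on $\{\bm{X}_t\}$, and getting $\E\|\bm{e}_t\|^4<\infty$ from $\E\|\bm{Y}_0\|^4<\infty$ through independence of $\bm{e}_t$ and $\bm{X}_t$) is fine. So is the Cauchy--Schwarz bound for $T_2$, using $\sum_t\nu_t(\bm{x})\|\bm{\varepsilon}_t\|^2=\mathrm{tr}\,T_1$.

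The gap is in (b)--(c). There you reduce to a \emph{fixed} ball $\{\|\bm{y}-\bm{x}\|\leqslant\delta\}$, and that intermediate claim needs more than $b_n\to0$ and $nb_n^k\to\infty$. A grid on that ball with mesh $o(b_n)$ has $N_n\gg b_n^{-k}$ points. Take $b_n=(\log n/n)^{1/k}$, so that $nb_n^k=\log n\to\infty$; then your maximal bound gives $N_n^{1/4}(nb_n^k)^{-1/2}\geqslant n^{1/4}(\log n)^{-3/4}\to\infty$. This is not just a weak estimate: fixed-ball uniform consistency generally fails unless $nb_n^k/\log n\to\infty$. For example, with iid Gaussian errors and $nb_n^k\asymp\log n$, the maximum over the roughly $b_n^{-k}$ disjoint bandwidth cells stays of order one. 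Truncation plus exponential inequalities would also need exponential bounds for the dependent $\bm{X}_t$, and the SRD variance bound does not supply these. (The paper pays here too: \Cref{thm:theorem6} assumes $\log n/(nb_n^k)\to0$ and $2+\delta$ moments.) Your fallback does not close the gap as stated either, because $\E\|\bm{\Delta}_t\|^2$ involves the random denominator of $\widehat\mu(\bm{X}_t)$.

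The fix is already in your own reduction. Only the \emph{shrinking} ball $B_n=\{\bm{y}:\|\bm{y}-\bm{x}\|\leqslant Cb_n\}$ matters, because $\sum_t\nu_t(\bm{x})\|\bm{\Delta}_t\|^2\leqslant\sup_{\bm{y}\in B_n}\|\widehat\mu(\bm{y})-\mu(\bm{y})\|^2$. Take $K\geqslant0$, set $\widehat f_n(\bm{y})=(nb_n^k)^{-1}\sum_sK((\bm{y}-\bm{X}_s)/b_n)$ and $\bm{G}_n(\bm{y})=(nb_n^k)^{-1}\sum_sK((\bm{y}-\bm{X}_s)/b_n)\bm{\varepsilon}_s$. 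Then $\widehat\mu(\bm{y})-\mu(\bm{y})=\sum_s\nu_s(\bm{y})\{\mu(\bm{X}_s)-\mu(\bm{y})\}+\bm{G}_n(\bm{y})/\widehat f_n(\bm{y})$.

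The first term is bounded, deterministically, by the modulus of continuity of $\mu$ at scale $Cb_n$ near $\bm{x}$. For $\widehat f_n$ and $\bm{G}_n$, put a grid of mesh $\epsilon b_n$ on $B_n$. It has $\mathcal{O}(\epsilon^{-k})$ points, a number independent of $n$, so Chebyshev at the grid points suffices: conditionally on $\{\bm{X}_t\}$ for $\bm{G}_n$, and via the SRD variance bound for $\widehat f_n$. Between grid points, only indices with $\|\bm{X}_s-\bm{x}\|\leqslant 2Cb_n$ contribute. Hence the oscillation of $(\widehat f_n,\bm{G}_n)$ is at most $\epsilon L\,(nb_n^k)^{-1}\sum_s\mathbb{I}\{\|\bm{X}_s-\bm{x}\|\leqslant 2Cb_n\}(1+\|\bm{\varepsilon}_s\|)$, which is $\epsilon\,\mathcal{O}_{\P}(1)$ by Markov.

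Letting $n\to\infty$ and then $\epsilon\to0$ gives $\sup_{B_n}\|\widehat\mu-\mu\|\convP 0$, and $\inf_{B_n}\widehat f_n\geqslant f_{\bm{X}}(\bm{x})/2$ with probability tending to one. This uses only $b_n\to0$ and $nb_n^k\to\infty$. It completes your proof, under weaker conditions than the paper's route through \Cref{thm:theorem6}.
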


The proof of \Cref{thm:theorem2} uses the oracle estimator $\widetilde{\Sigma}$ and compares it to the Nadaraya Watson based estimator $\widehat{\Sigma}$, subsequently connecting it to the true covariance function, $\Sigma$. The details are deferred to the Appendix. It further helps us in obtaining the following result.

\begin{corollary}
\label{cor:corollary1}
    For significance level $\alpha$, a $100(1-\alpha)\%$ confidence interval for linear functional of the form $\bm a^{\intercal}\mu, \, \bm a \in \R^p$, is obtained as
    \begin{equation}
    \label{eq:confidence_bands}
    \bm a^{\intercal}\mu(\bm{x}) \in \bm a^{\intercal}\widehat{\mu}^{*}_{b_n}(\bm{x}) \pm \sqrt{\frac{\phi_K}{nb_n^k\hat{f}_{\bm{X}}(\bm{x})}\chi^2_{p;1-\alpha}{\bm a^{\intercal}\widehat{\Sigma}(\bm{x}) \bm a}}.
    \end{equation}
\end{corollary}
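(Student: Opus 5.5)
The plan is to combine the asymptotic normality in Theorem~\ref{thm:theorem1}(ii) with the consistency of $\widehat{\Sigma}(\bm{x})$ from Theorem~\ref{thm:theorem2}, using Slutsky's lemma and the continuous mapping theorem. I read $\widehat{\mu}^{*}_{b_n}(\bm{x})$ as the bias-corrected estimator $\widehat{\mu}_{b_n}(\bm{x}) - b_n^2\psi_K\widehat{\rho}_{\mu}(\bm{x})$, where $\widehat\rho_\mu$ is any consistent plug-in for $\rho_\mu$. Alternatively, if undersmoothing is used so that $nb_n^{k+4}\to 0$, the correction is negligible: its contribution after scaling is $O\big((nb_n^{k+4})^{1/2}\big)\to 0$. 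Either way, the bias term drops out at the $\sqrt{nb_n^k}$ scale. We work with $\mathbb{V}_e=\I$, so $\Upsilon(\bm{x})=\Sigma(\bm{x})$.

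\textbf{Step 1.} Define
\[
\bm{Z}_n = \sqrt{nb_n^k\hat f_{\bm X}(\bm x)/\phi_K}\;\Sigma^{-1/2}(\bm x)\big(\widehat\mu^*_{b_n}(\bm x)-\mu(\bm x)\big).
\]
By Theorem~\ref{thm:theorem1}(ii) and Slutsky, $\bm{Z}_n\convD N_p(\bm 0,\I)$.

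\textbf{Step 2.} Since $\Sigma(\bm x)$ is positive definite and the matrix square root and inverse are continuous there, Theorem~\ref{thm:theorem2} gives $\widehat\Sigma^{-1/2}(\bm x)\Sigma^{1/2}(\bm x)\convP\I$. Hence
\[
\widehat{\bm Z}_n = \sqrt{nb_n^k\hat f_{\bm X}(\bm x)/\phi_K}\;\widehat\Sigma^{-1/2}(\bm x)\big(\widehat\mu^*_{b_n}(\bm x)-\mu(\bm x)\big)\convD N_p(\bm 0,\I),
\]
and by continuous mapping $\|\widehat{\bm Z}_n\|^2\convD\chi^2_p$.

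\textbf{Step 3.} This gives the Scheffé-type projection step. By Cauchy--Schwarz in the $\widehat\Sigma$-metric, for every $\bm a\in\R^p$,
\[
\big(\bm a^\intercal(\widehat\mu^*-\mu)\big)^2 \le \big(\bm a^\intercal\widehat\Sigma\bm a\big)\,(\widehat\mu^*-\mu)^\intercal\widehat\Sigma^{-1}(\widehat\mu^*-\mu).
\]
The last factor equals $\frac{\phi_K}{nb_n^k\hat f_{\bm X}(\bm x)}\|\widehat{\bm Z}_n\|^2$. On the event $\{\|\widehat{\bm Z}_n\|^2\le\chi^2_{p;1-\alpha}\}$, whose probability tends to $1-\alpha$ because the $\chi^2_p$ law is continuous, the displayed interval~\eqref{eq:confidence_bands} therefore covers $\bm a^\intercal\mu(\bm x)$. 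This holds simultaneously for all $\bm a$, so the coverage is at least $1-\alpha$ asymptotically.

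\textbf{Main obstacle.} The delicate point is the bias term. If the plug-in route is taken, we need $\widehat\rho_\mu(\bm x)\convP\rho_\mu(\bm x)$, which requires consistent estimates of $\nabla\mu$, $\nabla^2\mu$ and $\nabla f_{\bm X}/f_{\bm X}$. I would avoid proving this by defining $\widehat\mu^*$ under the undersmoothing condition, or by simply assuming a consistent $\widehat\rho_\mu$. A smaller point is that $\hat f_{\bm X}(\bm x)\convP f_{\bm X}(\bm x)>0$, which is needed so that the random normalisation is harmless; this is already implicit in Theorem~\ref{thm:theorem1}(ii). The same argument with $\chi^2_1$ in place of $\chi^2_p$ gives a narrower pointwise interval for a single fixed $\bm a$. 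The stated $\chi^2_p$ interval is conservative for any one $\bm a$ but uniform over all $\bm a$.
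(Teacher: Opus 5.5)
Your proof is correct and follows the route the paper indicates (Theorem~\ref{thm:theorem1}(ii), Theorem~\ref{thm:theorem2} and Slutsky's theorem). The paper leaves implicit the continuous-mapping step to $\chi^2_p$ and the Cauchy--Schwarz (Scheff\'e) projection; your Step~3 supplies them, and it is exactly what justifies the paper's remark that the band covers all linear contrasts simultaneously. The bias issue you flag is already settled by the hypothesis $nb_n^{k+4}\to 0$ in Theorem~\ref{thm:theorem1}(ii), so your reading of $\widehat{\mu}^{*}_{b_n}$ is the one consistent with the displayed constant $\phi_K$; a genuine jackknife combination $2\widehat{\mu}_{b_n}-\widehat{\mu}_{\sqrt{2}b_n}$, as the paper's wording suggests, would instead replace $\phi_K$ by $\int (K^*)^2$ with $K^*(\bm t)=2K(\bm t)-2^{-k/2}K(\bm t/\sqrt{2})$.
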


The above gives us a band that covers all possible linear contrasts. Here, following standard literature, $\widehat{\mu}^{*}_{b_n}(\bm{x})$ is a jackknife-type estimator to provide bias-corrected multivariate mean estimate. Note that setting $\bm a$ equal to empirical basis vectors would yield marginal intervals for each component $\mu_j(\bm x)$ with family-wise coverage $1-\alpha$. \Cref{cor:corollary1} is straightforward from the results of \Cref{thm:theorem1}, part $(ii)$, \Cref{thm:theorem2}, and Slutsky's theorem.

Estimation of the conditional mean function in \eqref{eq:main_model_multivariate} characterizes the expected or average behavior of the response variable given the covariates. For a sample of size $n$ and covariate dimension $p$, the conditional mean of the dependent variable is expressed in \eqref{eq:multivariate_mu_expression}, where the associated weights are defined in \eqref{eq:weight_definition}. Beyond modeling the average behavior, the framework also allows estimation of the conditional variance function which captures the time-varying dispersion of the response variable. Together, these two components describe both the expected return and the associated uncertainty conditional on available information. In financial applications, this distinction is particularly important. The conditional mean provides insight into expected returns or price movements, while the conditional variance reflects risk or volatility. Accurate estimation of both functions enables improved forecasting, portfolio allocation, and risk management, as investors require not only predictions of average performance but also reliable measures of the variability and potential downside risk surrounding those predictions.

We propose to use multivariate Epanechnikov or parabolic kernel function. It is bounded on its support, the $p$-dimensional unit ball, and is given by:
\begin{equation}
\label{eq:epanechnikov}
K(\bm{u}) = \frac{k+2}{2 c_k} \left(1 - \bm{u}^\intercal \bm{u}\right) \mathbb{I} \left(\norm{\bm{u}} \leqslant 1\right),
\end{equation}
where $c_k = \pi^{k/2}/\Gamma(\frac{k}{2} + 1)$ is defined through the gamma function, and $\norm{\bm{u}} = \sqrt{\sum_{i=1}^{k} u_i^2}$. The purpose of using a kernel function with a bounded support in this case is the computational ease that it offers. We may also use kernels with no bounded support, such as the multivariate gaussian kernel function, and the performance will remain similar.

\subsection{Estimation procedure and asymptotic theory for conditional quantiles}
\label{subsec:Estimation procedure and asymptotic theory for conditional quantiles}

We adopt the geometric quantile definition to estimate the multivariate quantiles following the formulation proposed by \cite{chowdhury2019nonparametric}. The use of geometric quantiles stems from its theoretical properties, including existence, uniqueness, and strict convexity of the objective function as well as empirical usefulness in capturing directional features of multivariate distributions. Geometric quantiles provide a coherent and non crossing characterizations indexed by direction vectors. We first define population version of the conditional geometric quantile of $\bm{Y}_t$ given the covariate $\bm{X}_t = \bm{x}$ is defined as
\begin{equation}
\label{eq:population_quantile}
    Q(\bm{u},\bm{x}) := \argmin_{q \in \R^p} M_{\bm{u}}^{(p)}(\bm{q}); \,\, \text{where,} \,\, M_{\bm{u}}^{(p)}(\bm{q}) := \E [\mid\mid \bm{Y}_t-\bm{q}\mid\mid + \langle \bm{u}\mid_{\mathcal{S}_p} \bm{Y}_t-\bm{q}\rangle \mid \bm{X}_t=\bm{x}],
\end{equation}

Further, we define the conditional quantiles for sample based on the definition for population,
\begin{equation}
\label{eq:estimated_quantile}
    \hat{\bm{q}}_n(\bm{u},\bm{x}) := \arg\min_{\bm{q}\in\R^p} M_{\bm{u},n}^{(p)}(\bm{q}); \,\, \text{where,} \,\,     M_{\bm{u},n}^{(p)}(\bm{q}) := \frac{1}{n} \sum_{t=1}^n \{ \| \bm{Y}_t-\bm{q}\|_{K_{b_n}(\cdot)} + \langle \bm{u}| \bm{Y}_t-\bm{q} \rangle_{K_{b_n}(\cdot)}\}.
\end{equation}

The minimization problem involved in \eqref{eq:estimated_quantile} is non-trivial. We propose using an iteratively re-weighted least squares (IRLS) estimation technique, which offers an efficient method for the estimation of the required quantiles. The key updating step of the minimization algorithm is given as
\begin{equation}
    \label{eq:IRLS_equation}
    \hat{\bm{q}}_{n}^{(k+1)}(\bm{u},\bm{x}) = \frac{\frac{1}{2}\sum_{t=1}^n K_{b_n}(\bm{X}_t-\bm{x})\bm{u} + \sum_{t=1}^n w_t(\hat{\bm{q}}^{(k)})K_{b_n}(\bm{X}_t-\bm{x})^2 \bm{Y}_t}{\sum_{t=1}^n w_t(\hat{\bm{q}}^{(k)})K_{b_n}(\bm{X}_t-\bm{x})^2},
\end{equation}
where $w_t(\bm{q}^{(k)}) = \|\bm{Y}_t - \hat{\bm{q}}_{n}^{(k)}(\bm{u} \mid \bm{x})\|^{-1}_{K_{b_n}(\bm{x}-\bm{X}_t)}$ is a sequence of weights. 

The following theorem establishes both the algorithmic stability and the structural characterization of the geometric quantile estimator. It first demonstrates the numerical convergence of the IRLS scheme by utilizing its descent property to show that the sequence of iterates converges to the unique global minimizer of the objective function. This guarantees the stability of computation and the validity of the algorithmic implementation. In addition, the theorem provides the first order optimality condition satisfied by the sample minimizer and its corresponding population counterpart. In this way, it links the computational procedure to the underlying theoretical structure, establishes the existence and uniqueness of the minimizer, and connects the sample and population geometric quantiles through their respective estimating equations.

\begin{theorem}
\label{thm:theorem3}
Assume that for each accumulation point $\bar{\bm{q}}$ of the IRLS iterates 
$\{\bm{q}^{(k)}\}_{k\geqslant 0}$, there exists a constant $\delta>0$ (independent of $n$ and $p$) such that $\min_{1\leqslant t\leqslant n}\ \|\bm{Y}_t-\bar{\bm{q}}\|_{K(\cdot)} \geqslant \delta$. Under the previously stated regularity assumptions and the bandwidth condition, the IRLS scheme defined through \eqref{eq:IRLS_equation} produces a sequence of iterates whose objective values 
$\{M^{(p)}_{\bm{u},n}(\bm{q}^{(k)})\}_{k\geqslant 0}$ are non increasing, and the iterates 
$\bm{q}^{(k)}$ converge to the unique global minimizer 
$\hat{\bm{q}}_n(\bm{u},\bm{x})$ of the objective function $M^{(p)}_{\bm{u},n}$. Moreover, the sample minimizer $\hat{\bm{q}}_n(\bm{u},\bm{x})$ satisfies the first order optimality condition
\begin{equation}
\label{eq:objective_function_incarnation_sample}
\sum_{t=1}^n K_{b_n}(\bm{X}_t-\bm{x})
\left[
\frac{\bm{Y}_t-\hat{\bm{q}}_n(\bm{u},\bm{x})}
{\|(\bm{Y}_t-\hat{\bm{q}}_n(\bm{u},\bm{x}))\|}
+ \bm{u}
\right]
=0.
\end{equation}

The population quantile $Q(\bm{u},\bm{x})$, defined as the minimizer of the population objective $M^{(p)}_{\bm{u}}$, satisfies the corresponding population version
\begin{equation}
\E_{\,\bm{Y}_i\,|\,\bm{X}_i=\bm{x}}
\left[
\frac{\bm{Y}_i-Q(\bm{u},\bm{x})}
{\|\bm{Y}_i-Q(\bm{u},\bm{x})\|}
+ \bm{u}
\right]
=0.
\end{equation}
\end{theorem}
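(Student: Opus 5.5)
We treat the statement as a Weiszfeld-type result for a kernel-weighted, tilted spatial median. Fix $\bm{u}$ with $\|\bm{u}\|<1$ and $\bm{x}$, and write $\kappa_t = K_{b_n}(\bm{X}_t-\bm{x})\geqslant 0$. We read the sample objective (up to the factor $1/n$) as
\[
M(\bm{q}) = \sum_{t=1}^n \kappa_t\bigl\{\|\bm{Y}_t-\bm{q}\| + \langle \bm{u}, \bm{Y}_t-\bm{q}\rangle\bigr\}.
\]
This is convex, as a nonnegative combination of norms plus a linear term. It is coercive because $\|\bm{u}\|<1$: for large $\|\bm{q}\|$, $M(\bm{q})\geqslant (1-\|\bm{u}\|)\sum_t\kappa_t\|\bm{q}\| - C$. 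Hence a minimizer exists.

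\textbf{Uniqueness.} Assume the $\bm{Y}_t$ with $\kappa_t>0$ are not collinear; this holds almost surely for $p\geqslant 2$ since $\bm{Y}_t$ has a density through $\bm{e}_t$. Then $\bm{q}\mapsto\sum_t\kappa_t\|\bm{Y}_t-\bm{q}\|$ is strictly convex, which gives uniqueness. Under the hypothesis $\min_t\|\bm{Y}_t-\bar{\bm{q}}\|\geqslant\delta$, $M$ is differentiable at the relevant points. Setting the gradient to zero yields exactly \eqref{eq:objective_function_incarnation_sample}, and by convexity this condition characterizes the minimizer.

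\textbf{Monotonicity of the IRLS values.} We use the majorization argument. For the current iterate $\bm{q}^{(k)}$, let $w_t=\|\bm{Y}_t-\bm{q}^{(k)}\|^{-1}$ and define the surrogate
\[
G(\bm{q}\mid \bm{q}^{(k)})=\sum_t\kappa_t\Bigl\{\tfrac{1}{2}w_t\|\bm{Y}_t-\bm{q}\|^2+\tfrac{1}{2}w_t^{-1}+\langle\bm{u},\bm{Y}_t-\bm{q}\rangle\Bigr\}.
\]
By AM--GM, $\|\bm{a}\|\leqslant \tfrac12(w\|\bm{a}\|^2+w^{-1})$ for any $w>0$, with equality when $w=\|\bm{a}\|^{-1}$. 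Therefore $G(\cdot\mid\bm{q}^{(k)})\geqslant M$ everywhere, with equality at $\bm{q}^{(k)}$.

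$G$ is a quadratic in $\bm{q}$, and its minimizer is the weighted average of the $\bm{Y}_t$ shifted by the $\bm{u}$ term, i.e. the update \eqref{eq:IRLS_equation}. The squared-kernel weighting there should be read as matching the paper's weighted norm convention. It follows that
\[
M(\bm{q}^{(k+1)})\leqslant G(\bm{q}^{(k+1)}\mid\bm{q}^{(k)})\leqslant G(\bm{q}^{(k)}\mid\bm{q}^{(k)})=M(\bm{q}^{(k)}).
\]
The decrease is strict unless $\bm{q}^{(k+1)}=\bm{q}^{(k)}$, which happens exactly when the gradient vanishes.

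\textbf{Convergence.} The iterates stay in the compact sublevel set $\{M\leqslant M(\bm{q}^{(0)})\}$, so accumulation points exist. Let $T$ denote the update map. Near any accumulation point $\bar{\bm{q}}$, the $\delta$-separation assumption keeps the weights bounded, so $T$ is continuous there. Along a subsequence $\bm{q}^{(k_j)}\to\bar{\bm{q}}$, the values $M(\bm{q}^{(k)})$ are monotone and bounded below, hence converge. This gives $M(T\bar{\bm{q}})=M(\bar{\bm{q}})$, so $\bar{\bm{q}}$ is a fixed point of $T$. A fixed point satisfies \eqref{eq:objective_function_incarnation_sample}, hence equals the unique minimizer $\hat{\bm{q}}_n$. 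Since every accumulation point is $\hat{\bm{q}}_n$ and the sequence lies in a compact set, the whole sequence converges to $\hat{\bm{q}}_n$.

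\textbf{The main obstacle.} The delicate point is the possibility that an iterate lands on a data point, where the Weiszfeld map is singular. The stated $\delta$ assumption is exactly what rules this out. We use it only to obtain continuity of $T$ and differentiability of $M$ at accumulation points.

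\textbf{Population version.} Apply the same reasoning to $M^{(p)}_{\bm{u}}$. It is convex and coercive, and strictly convex when the conditional law of $\bm{Y}_t$ given $\bm{X}_t=\bm{x}$ is not concentrated on a line; this holds because that law has a density from $f_{\bm{e}}$ and $\Sigma(\bm{x})>0$. Because the law has a density, $\bm{q}\mapsto \E\|\bm{Y}-\bm{q}\|$ is differentiable everywhere. We differentiate under the expectation by dominated convergence: the difference quotients of the norm are bounded by $1$, and the set $\{\bm{Y}=\bm{q}\}$ is null.

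The gradient is $-\E[(\bm{Y}-\bm{q})/\|\bm{Y}-\bm{q}\|+\bm{u}\mid\bm{X}=\bm{x}]$. Setting it to zero at the unique minimizer $Q(\bm{u},\bm{x})$ gives the stated population estimating equation.
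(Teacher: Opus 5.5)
Your proposal follows the paper's majorize--minimize skeleton: the AM--GM quadratic surrogate, monotone descent, bounded sublevel sets from $\|\bm u\|<1$, continuity of the update map $T$ at accumulation points via the $\delta$-separation, then fixed point $\Rightarrow$ stationary point $\Rightarrow$ unique minimizer, and directional derivatives for the population equation.

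Where you genuinely differ is in showing that an accumulation point $\bar{\bm q}$ is a fixed point. The paper gets $\bm q^{(k_j+1)}=T(\bm q^{(k_j)})\to T(\bar{\bm q})$. It then asserts $\bm q^{(k_j+1)}\to\bar{\bm q}$ ``because it is also a subsequence of the iterates''. That does not follow: it would need $\bm q^{(k_j+1)}-\bm q^{(k_j)}\to\bm 0$, which is never established. Your argument is $M(T\bar{\bm q})=\lim_j M(\bm q^{(k_j+1)})=\lim_j M(\bm q^{(k_j)})=M(\bar{\bm q})$, combined with strict descent of the MM step unless $T\bar{\bm q}=\bar{\bm q}$ (since $G(\cdot\mid\bar{\bm q})$ is a strictly convex quadratic with unique minimizer $T\bar{\bm q}$). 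This is the standard correct argument, and it closes that gap.

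Your population step is also more careful than the paper's. Dominated convergence, with difference quotients bounded by $\|\bm v\|$ and $\{\bm Y=\bm q\}$ null, justifies differentiating under $\E$. You also tie strict convexity correctly to the law not being concentrated on a line, rather than to ``strict convexity of the norm'', which fails along rays. Writing the weights as $\kappa_t/\|\bm Y_t-\bm q\|$ with the plain norm makes the separation hypothesis sensible too. Under the paper's convention, $\|\bm Y_t-\bar{\bm q}\|_{K(\cdot)}=\kappa_t\|\bm Y_t-\bar{\bm q}\|$ vanishes for every observation outside the kernel window.

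One step you should not take on faith is that the minimizer of your surrogate ``is'' \eqref{eq:IRLS_equation}. With your $w_t=\|\bm Y_t-\bm q^{(k)}\|^{-1}$, setting $\nabla G=\bm 0$ gives
\[
\bm q^{(k+1)}=\frac{\sum_t\kappa_t\,\bm u+\sum_t\kappa_t w_t\bm Y_t}{\sum_t\kappa_t w_t},
\]
with coefficient one on the $\bm u$-term. By contrast, \eqref{eq:IRLS_equation} carries $\tfrac12\sum_t K_{b_n}(\bm X_t-\bm x)\bm u$; the paper's proof makes the same slip when differentiating its surrogate. Taken literally, \eqref{eq:IRLS_equation} is the MM step for $\bm u/2$. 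Its fixed points therefore solve the estimating equation with $\bm u/2$, and its descent is for $M^{(p)}_{\bm u/2,n}$. You need to state the corrected update explicitly before concluding that a fixed point satisfies \eqref{eq:objective_function_incarnation_sample}.

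Two smaller points. First, the $\delta$-hypothesis constrains only accumulation points, so it does not by itself prevent an individual iterate from landing on a data point; you and the paper both tacitly assume the iterates are well defined. Second, non-collinearity needs at least three observations in the kernel window, so it holds with probability tending to one rather than almost surely.
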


In implementations one may, instead of $w_t(\bm{q}^{(k)})$, use stabilized weights $w_{t,\vartheta}(\bm{q}) = 1/(\|\bm{Y}_t-\bm{q}\|_{K(\cdot)}+\vartheta)$ with a small fixed $\vartheta>0$ to avoid numerical instability when an iterate approaches a data point. The convergence argument above is stated for $\vartheta=0$ under the separation assumption, while the stabilized version corresponds to minimizing a smooth perturbed objective and enjoys the same monotone descent property.

\begin{corollary}
\label{cor:corollary2}
Assume that for each fixed $\bm{x}\in\mathcal{X}$ the population objective $M^{(p)}_{\bm{u}}(\bm{q})$ is strictly convex in $\bm{q}$ and admits a unique minimizer $Q(\bm{u},\bm{x})$. Then, for each fixed $\bm{x}$, the mapping $\bm{u} \mapsto Q(\bm{u},\bm{x})$ where $\bm{u} \in B^{p-1}$, is injective. In particular, if $\bm{u}_1 \neq \bm{u}_2$, then $Q(\bm{u}_1,\bm{x}) \neq Q(\bm{u}_2,\bm{x})$, and hence the geometric quantiles possess an intrinsic non crossing structure across directions.
\end{corollary}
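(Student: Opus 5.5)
The plan is to argue by contradiction through the population first-order condition in \Cref{thm:theorem3}. For fixed $\bm{x}\in\mathcal{X}$ and $\bm{u}\in B^{p-1}$ (the open unit ball), write
\[
\Phi(\bm{q}) := \E_{\bm{Y}\mid\bm{X}=\bm{x}}\left[\frac{\bm{Y}-\bm{q}}{\|\bm{Y}-\bm{q}\|}\right].
\]
The objective $M^{(p)}_{\bm{u}}$ is convex and finite because $\E\|\bm{Y}_0\|^4<\infty$. Its subgradient at $\bm{q}$ is $-\Phi(\bm{q})-\bm{u}$, up to the contribution of a possible atom of the conditional law at $\bm{q}$. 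The corollary assumes strict convexity and a unique minimizer. Under these assumptions, and by \Cref{thm:theorem3}, $Q(\bm{u},\bm{x})$ is characterized by the stationarity equation $\Phi(Q(\bm{u},\bm{x}))=-\bm{u}$.

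First, I will check that this equation genuinely holds, and is not just a subgradient inclusion. This amounts to confirming that the conditional distribution of $\bm{Y}_t$ given $\bm{X}_t=\bm{x}$ has no atom at $Q(\bm{u},\bm{x})$. That follows from model \eqref{eq:main_model_multivariate}: conditionally, $\bm{Y}_t=\mu(\bm{x})+\Sigma^{1/2}(\bm{x})\bm{e}_t$, where $\Sigma(\bm{x})$ is positive definite and $\bm{e}_t$ has a density $f_{\bm{e}}$. The conditional law is therefore absolutely continuous, and $M^{(p)}_{\bm{u}}$ is differentiable everywhere with gradient $-\Phi(\bm{q})-\bm{u}$. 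This is the step where I must be careful. Without it, two different directions $\bm{u}_1,\bm{u}_2$ could both lie in the subdifferential at a single atom, and injectivity would fail. I expect this to be the only genuine obstacle, and absolute continuity handles it.

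Second, suppose $Q(\bm{u}_1,\bm{x})=Q(\bm{u}_2,\bm{x})=:\bm{q}^*$. Then $-\bm{u}_1=\Phi(\bm{q}^*)=-\bm{u}_2$, so $\bm{u}_1=\bm{u}_2$. Equivalently, the map $\bm{q}\mapsto -\Phi(\bm{q})$ is a well-defined function, namely the gradient of the strictly convex function $\bm{q}\mapsto \E\|\bm{Y}-\bm{q}\|$, and it serves as a left inverse of $\bm{u}\mapsto Q(\bm{u},\bm{x})$. Any map with a left inverse is injective. The contrapositive is the stated non-crossing property: $\bm{u}_1\neq\bm{u}_2$ implies $Q(\bm{u}_1,\bm{x})\neq Q(\bm{u}_2,\bm{x})$.

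Finally, I will remark on what the argument buys. Strict convexity makes $\nabla\E\|\bm{Y}-\bm{q}\|$ strictly monotone, so the gradient map is itself injective. Hence $\bm{u}\mapsto Q(\bm{u},\bm{x})$ is in fact a bijection onto its image, with inverse $\bm{q}\mapsto-\Phi(\bm{q})$. This gives the intrinsic non-crossing interpretation across directions. I will also note that existence of $Q(\bm{u},\bm{x})$ requires $\|\bm{u}\|<1$, which is why $\bm{u}$ is restricted to $B^{p-1}$.
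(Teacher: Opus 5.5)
Your proof is correct and follows the same route as the paper: both argue by contradiction from the population first-order condition $\E[(\bm{Y}-\bm{q})/\|\bm{Y}-\bm{q}\| \mid \bm{X}=\bm{x}] = -\bm{u}$, which cannot equal both $-\bm{u}_1$ and $-\bm{u}_2$ at a common quantile. You also check that the conditional law of $\bm{Y}_t$ given $\bm{X}_t=\bm{x}$ has no atom, using the density of $\bm{e}_t$ and positive definiteness of $\Sigma(\bm{x})$; this is a genuine improvement on the paper, which leaves it implicit even though strict convexity alone gives only a subgradient inclusion, and injectivity can fail at an atom.
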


In the geometric quantile framework, quantiles are indexed by directions $\bm{u} \in B^{p-1}$ rather than by a scalar level. For each fixed $\bm{x}$, the geometric quantile $Q(\bm{u}, \bm{x})$ is defined as the unique minimizer of a strictly convex objective. Strict convexity ensures uniqueness for each fixed $\bm{u}$.

To establish the non crossing property across directions, 
consider the population first order condition satisfied by the geometric quantile:
\[
\E\!\left[
\frac{\bm{Y} - Q(\bm{u},\bm{x})}
     {\|\bm{Y} - Q(\bm{u},\bm{x})\|}
\,\Big|\, \bm{X}=\bm{x}
\right]
= -\bm{u}.
\]
Suppose, for contradiction, that two distinct directions 
$\bm{u}_1 \neq \bm{u}_2$ produce the same quantile 
$q = Q(\bm{u}_1,\bm{x}) = Q(\bm{u}_2,\bm{x})$. 
Then the first order condition would imply
\[
\E\!\left[
\frac{\bm{Y} - q}
     {\|\bm{Y} - q\|}
\,\Big|\, \bm{X}=\bm{x}
\right]
= -\bm{u}_1
\quad \text{and} \quad
\E\!\left[
\frac{\bm{Y} - q}
     {\|\bm{Y} - q\|}
\,\Big|\, \bm{X}=\bm{x}
\right]
= -\bm{u}_2,
\]
which is impossible since $\bm{u}_1 \neq \bm{u}_2$. Therefore, distinct directions must produce distinct quantiles. This establishes mapping is injective, and hence the geometric quantiles possesses an intrinsic non crossing structure.

The following result establishes consistency of our multivariate quantile estimator by showcasing the convergence of the sample estimate, $\hat{q}_n(\bm{u},\bm{x})$, to the population estimate, $Q(\bm{u}, \bm{x})$ in probability hence reinforcing the believe in the accuracy of our estimator as the number of observations increases.

\begin{theorem}
\label{thm:theorem5}
    Under the previous assumptions on quantile estimates and bandwidth, the estimator $\hat{\bm{q}}_n(\bm{u},\bm{x})$ of the population quantile $Q(\bm{u},\bm{x})$, satisfies
    \begin{equation}
        \| \hat{q}_n(\bm{u},\bm{x}) - Q(\bm{u}, \bm{x}) \| = \mathcal{O}_{\P}(\sqrt{p/nb_n^k}),
    \end{equation}
    for all $\bm{x} \in \mathcal{X}$ where $\mathcal{X} \subset \R^k$ is compact and $\int_{\mathcal{X}}\|\bm{u}(s)\|^2ds \leqslant 1$.
\end{theorem}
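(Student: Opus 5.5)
The plan is to treat $\hat{\bm{q}}_n(\bm{u},\bm{x})$ as a kernel-weighted M-estimator and use the standard convexity argument for minimizers of convex random criteria, in the spirit of the argument of \cite{chowdhury2019nonparametric}, adapted to the dependent covariates. Normalize the empirical objective by $\hat f_{\bm X}(\bm x)=n^{-1}\sum_t K_{b_n}(\bm x-\bm X_t)$. Writing $\bm{q}=Q(\bm{u},\bm{x})+\bm{h}/\sqrt{nb_n^k}$, consider the centred, rescaled criterion
\[
\Delta_n(\bm{h}) = nb_n^k\,\hat f_{\bm X}(\bm x)^{-1}\Big[M^{(p)}_{\bm{u},n}\big(Q(\bm{u},\bm{x})+\bm{h}/\sqrt{nb_n^k}\big)-M^{(p)}_{\bm{u},n}\big(Q(\bm{u},\bm{x})\big)\Big].
\]
$\Delta_n$ is convex in $\bm{h}$, and its minimizer is $\sqrt{nb_n^k}\,(\hat{\bm{q}}_n-Q)$. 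By convexity it then suffices to show that, for every $\varepsilon>0$, there are constants $C$ and $N$ such that $\P\big(\inf_{\|\bm h\|=C\sqrt p}\Delta_n(\bm h)>0\big)\ge 1-\varepsilon$ for $n\ge N$. A convex function that is positive on a sphere and zero at the centre has its minimizer inside the ball, which gives the rate $\mathcal{O}_{\P}(\sqrt{p/nb_n^k})$.

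To get this, I would split $\Delta_n(\bm h)=\bm{h}^\top\bm{S}_n+R_n(\bm h)$ using the subgradient identity for the norm. Here
\[
\bm{S}_n=-(nb_n^k)^{1/2}\,\frac{\sum_t K_{b_n}(\bm X_t-\bm x)\big[(\bm Y_t-Q)/\|\bm Y_t-Q\|+\bm u\big]}{\sum_t K_{b_n}(\bm X_t-\bm x)},
\]
which is the sample estimating function of Theorem~\ref{thm:theorem3} evaluated at the population quantile, and $R_n(\bm h)$ is the second-order remainder.

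\emph{Step 1: the score.} Decompose $\bm S_n$ exactly as in the proof of Theorem~\ref{thm:theorem1}, with the bounded function $\bm\psi(\bm y)=(\bm y-Q)/\|\bm y-Q\|+\bm u$ in place of $\bm Y_t$.
\begin{itemize}
\item The conditional-mean part vanishes at $\bm x$ by the population first-order condition of Theorem~\ref{thm:theorem3}. Its smoothing bias is $O(\sqrt{nb_n^k}\,b_n^2)=o(1)$ because $nb_n^{k+4}\to0$.
\item The remaining part is a martingale-difference sum in $\bm e_t$ given the covariates, plus a covariate term controlled by $\Lambda_n$ through the functional dependence measure \eqref{eq:functional_dependence_measure}.
\end{itemize}
Since $\|\bm\psi\|\le 2$, each coordinate has conditional variance bounded by $4\phi_K/f_{\bm X}(\bm x)$. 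Hence $\E\|\bm S_n\|^2=O(p)$, so $\|\bm S_n\|=\mathcal{O}_{\P}(\sqrt p)$ and $|\bm h^\top\bm S_n|\le C\sqrt p\cdot\mathcal{O}_{\P}(\sqrt p)$ on the sphere. No CLT is needed here, only the second-moment bound.

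\emph{Step 2: the curvature.} Write $R_n(\bm h)=\E[R_n(\bm h)\mid \bm X_{1:n}]+(R_n-\E R_n)$. The conditional expectation is a kernel average of $\bm h^\top H(\bm X_t)\bm h/2+o(\|\bm h\|^2)$, where
\[
H(\bm z)=\E\big[\|\bm Y-Q\|^{-1}\big(\I-(\bm Y-Q)(\bm Y-Q)^\top/\|\bm Y-Q\|^2\big)\,\big|\,\bm X=\bm z\big]
\]
is the Hessian of $M^{(p)}_{\bm u}$. Strict convexity (the assumption of Corollary~\ref{cor:corollary2}), together with finiteness of $\E[\|\bm Y-Q\|^{-1}\mid\bm X]$ (implied by a bounded density of $\bm e$ for $p\ge2$), gives $\lambda_{\min}(H(\bm x))\ge c>0$. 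Continuity of $\mu$, $\Sigma$ and $f_{\bm X}$ then transfers this bound to the kernel average. Consequently the conditional expectation is at least $cC^2p/2$ with probability tending to one. The fluctuation $R_n-\E R_n$ is $o_{\P}(\|\bm h\|^2)$ by a variance computation: each summand is bounded by $\|\bm h\|\,|K|/\sqrt{nb_n^k}$ and is nonzero only when $\|\bm Y_t-Q\|\lesssim\|\bm h\|/\sqrt{nb_n^k}$.

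\emph{Conclusion and main obstacle.} Combining the two steps gives $\inf_{\|\bm h\|=C\sqrt p}\Delta_n\ge cC^2p/2-C\,p\,\mathcal{O}_{\P}(1)-o_{\P}(p)>0$ for large $C$, and the result follows for each fixed $\bm x\in\mathcal X$. The constants are uniform over compact $\mathcal X$ and over $\|\bm u\|\le 1$ because $f_{\bm X}$ is bounded below and $H$ is continuous there. The main obstacle is the curvature step under dependence: $R_n$ is non-smooth in $\bm h$, and its concentration must use the $\Lambda_n$ condition rather than independence. I would first try to condition on $\{\bm X_t\}$, using that the $\bm e_t$ are iid and independent of the covariates, so that the fluctuation in $\bm e$ is a sum of conditionally independent terms. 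I would then handle the covariate average $n^{-1}\sum_t K_{b_n}(\bm X_t-\bm x)H(\bm X_t)$ with the same projection bound on $\mathcal P_0 f_{\bm X}(\cdot\mid\mathcal F_i)$ that gives $\hat f_{\bm X}(\bm x)\convP f_{\bm X}(\bm x)$ in Theorem~\ref{thm:theorem1}.
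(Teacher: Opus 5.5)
Your route differs from the paper's. The paper first proves $\hat{\bm q}_n\to\bm q_0$ as a separate step, using a pointwise LLN, a uniform LLN on a compact $\mathcal Q_p$, and a well-separated minimum from strict convexity. It then linearizes the sample estimating equation of Theorem~\ref{thm:theorem3}. Each $\bm\psi_t(\hat{\bm q}_n)$ is Taylor-expanded around $\bm q_0$ with a termwise remainder $\|\bm r_{t,n}\|\le C\|\hat{\bm q}_n-\bm q_0\|^2$. A law of large numbers is invoked for the averaged Jacobian, with positive definite limit $\bm\Psi(\bm q_0)$. Finally the averaged score at $\bm q_0$ is bounded by $\mathcal O_{\P}(\sqrt{p/nb_n^k})$.

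Your Step 1 is essentially that score bound. In place of the Jacobian LLN and the termwise remainder, you use a convexity (Hjort--Pollard type) argument, which only needs the \emph{conditional expected} criterion to be twice differentiable. This buys three things:
\begin{enumerate}
\item[(a)] $\bm\psi_t$ is not differentiable at $\bm q=\bm Y_t$ and its Jacobian is of order $\|\bm Y_t-\bm q\|^{-1}$. So the paper's remainder bound cannot hold with a constant uniform in $t$, whereas you never differentiate $\bm\psi_t$.
\item[(b)] Consistency comes out of the same argument, with no separate step.
\item[(c)] You make explicit the $O(b_n^2)$ smoothing bias of the kernel-weighted score and the undersmoothing condition $nb_n^{k+4}\to0$ needed for the stated rate. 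The paper's proof skips this by asserting $\E[\bm\psi_t(\bm q_0)]=\bm 0$, which is true only at $\bm X_t=\bm x$.
\end{enumerate}
In return, the paper's route, where it can be justified, also gives a Bahadur-type representation $\hat{\bm q}_n-\bm q_0\approx-\bm\Psi(\bm q_0)^{-1}n^{-1}\sum_t\bm\psi_t(\bm q_0)$.

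One justification in your Step 2 is wrong, and as written the step fails. You say each summand of $R_n(\bm h)$ is nonzero only when $\|\bm Y_t-Q\|\lesssim\|\bm h\|/\sqrt{nb_n^k}$. That is the univariate check-function (Knight identity) picture, and it is false for $p\ge2$. Put $\bm a=\bm Y_t-Q$ and $\bm\delta=\bm h/\sqrt{nb_n^k}$. The summand is a kernel multiple of $\rho(\bm a,\bm\delta)=\|\bm a-\bm\delta\|-\|\bm a\|+\bm\delta^\top\bm a/\|\bm a\|$. This is strictly positive whenever $\bm\delta$ is not parallel to $\bm a$; for $\bm\delta\perp\bm a$ it is about $\|\bm\delta\|^2/(2\|\bm a\|)$. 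Using only $0\le\rho\le2\|\bm\delta\|$, your variance computation gives $O(nb_n^k\|\bm\delta\|^2)=O(\|\bm h\|^2)$, which does not vanish.

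The fix is the concavity bound $\rho(\bm a,\bm\delta)\le\|\bm\delta\|^2/(2\|\bm a\|)$, so $\rho\le\min\{2\|\bm\delta\|,\|\bm\delta\|^2/(2\|\bm a\|)\}$. Combined with the bounded conditional density of $\bm Y$ near $Q$ that you already assume, this gives $\E[\rho^2\mid\bm X]\lesssim\|\bm\delta\|^4\log(1/\|\bm\delta\|)$ for $p=2$, and $\lesssim\|\bm\delta\|^4$ for $p\ge3$. Hence $\mathrm{Var}(R_n(\bm h)\mid\bm X_{1:n})=\mathcal O_{\P}\big(\|\bm h\|^4\log(nb_n^k)/(nb_n^k)\big)\to0$.

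This still controls $R_n$ only pointwise in $\bm h$. For the infimum over the sphere, use that $R_n$ is convex in $\bm h$. Then pointwise convergence in probability to $\bm h^\top H(\bm x)\bm h/2$ is automatically uniform on compacts, by the convexity lemma.

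Two smaller corrections:
\begin{enumerate}
\item[(i)] $\lambda_{\min}(H(\bm x))>0$ follows from the conditional law of $\bm Y$ having a density, so that it is not carried by a line through $Q$. It does not follow from strict convexity alone.
\item[(ii)] Your claimed uniformity over $\|\bm u\|\le1$ is false, since $\|Q(\bm u,\bm x)\|\to\infty$ and $\lambda_{\min}(H)\to0$ as $\|\bm u\|\to1$. It is not needed for the pointwise statement, so drop it.
\end{enumerate}
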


In establishing the above result we utilize the analytical stability result from \Cref{thm:theorem3}, the strict convexity of the objective function, and rates of convergence for the relevant terms in the Taylor expansion of the score function under the small range dependence in the physical dependence setting $\Lambda_n = \mathcal{O}(n)$.  

Estimation of the conditional quantile function, from \eqref{eq:main_model_multivariate}, gives us the average percentile behavior of the response variable. For $n$ number of observations and $p$ number of dimensions, the conditional quantile of the dependent variable, conditioned on the covariates is given by \Cref{eq:IRLS_equation} where a modified pinball-type loss function for the multivariate case is used. The expression, $\|\bm{Y}_t-\bm{q}\|$ is a norm, and the expression, $\langle \bm{u}, \bm{Y}_t-\bm{q} \rangle_{K(\cdot)}$, describes an euclidean inner product between $\bm{u}$ and $\bm{Y}_t-\bm{q}$ under kernel function, $K(\cdot)$. The expression for $Q(\bm{u})$ is a generalization of the objective function defined by \cite{koenker1978regression}.  

Many of the risk measures are derivative of the quantiles, example, Value-at-Risk and Average Value-at-Risk. We emphasize on the Value-at-Risk measure which has been established as an important financial risk measure post the 2008 financial crisis, (see \cite{ruiz2023direct}. Value-at-Risk (VaR) at confidence level $\alpha \in (0,1)$ is the $\alpha$-quantile of the loss distribution. Suppose $L$ denotes loss, then, 
\begin{equation}
\label{eq:var}
    \text{VaR}_{\alpha} = \inf\{l \in \R: \P(L \leqslant l) \geqslant \alpha\}.
\end{equation}

At level $\alpha$, VaR is the smallest number $l$ such that the probability of the loss exceeding $l$ is at most $1-\alpha$.

\section{A short simulation study}
\label{sec:simulation_studies}

We conduct a brief simulation study that replicates real-world phenomena, particularly a multivariate time series data with temporal dependence among response variables, likely to appear in the world of finance. This study is primarily designed to establish finite sample behavior and empirical consistency. This exercise enables us to examine the generalized stochastic regression framework that we are working with, as well as the accuracy of the estimation procedure. The simulation setup is established by defining the data-generating processes and specifying the number of observations. We consider three covariates ($k = 3$), denoted as $\bm{X}_t = \{ X_{1,t}, X_{2,t}, X_{3,t} \}$, and explore different types of data generation. In this simulation study, we evaluate the estimation accuracy of the multivariate conditional means and multivariate conditional quantiles, which are the main focus of the current study. The estimation of multivariate conditional variance is primarily explored as an auxiliary component required for inference, including asymptotic variance characterization and construction of confidence regions of the mean. The covariates in the simulation study are generated under autoregressive processes of order 1 with common innovations which induce serial correlation and contemporaneous dependence among the variables. The response variables $\bm{Y} = \{Y_1, Y_2\}$ (i.e., $p=2$) are generated following the specifications
\begin{equation*}
	Y_{1t} =  \frac{1}{3} (X_{1t} + X_{2t} + X_{3t})  + e_{1t}, \quad Y_{2t} = b_1X_{1t} + b_2 X_{2t} + b_3 X_{3t} + e_{2t}; \, \text{with } \sum_{i=1}^3 b_i = 1.
\end{equation*}

These two dependent response variables with distinct dependence structure on the covariates helps in exhibiting the strength of our methodology. The error components $e_{1t}, e_{2t}$ for all timepoints are simulated independently using three different population distributions: standard normal, student's $t$ with 3 degrees of freedom, and shifted exponential. These choices are made to ensure that they cover light tail, heavy tail, and skewed error designs, respectively. The sample sizes are taken as $n \in \{100, 500, 1000\}$, representing small, moderate and large sample regimes. 

Following the methodology described in \Cref{sec:main_results}, we apply the estimation techniques for different conditional functions, and compare our estimates with the true values that can be obtained by Monte Carlo sampling from the conditional distributions in our data generating processes. For the quantiles, we consider five different levels: 0.05, 0.1, 0.5, 0.9, and 0.95. At each quantile level, we use IRLS scheme to obtain the geometric multivariate quantiles as explained in \Cref{subsec:Estimation procedure and asymptotic theory for conditional quantiles}. The directional vector $\bm{u} \in B^{p-1}$ is chosen appropriately in this regard. The bandwidth selection for the conditional mean estimation is carried out using a blocked cross-validation procedure for time series data, while for the conditional quantile estimation a rate-consistent bandwidth choice is used. We replicate this exercise for 50 Monte Carlo iterations and compute the root mean square error (RMSE) to assess the accuracy of our methodology. While it is our primary accuracy metric, we also report mean absolute percentage error (MAPE) in a relative scale, which provides us a scale-free measure to identify how the accuracy is improving based on the sample size. In this regard, the baseline is kept at $n=100$ for each case.

The simulation results are tabulated below. In \Cref{tab:simstdy_mcm}, we see that for each error distribution considered, the two metrics display a visibly evident decreasing pattern with increasing sample sizes. This suggests an empirical consistency for the multivariate conditional mean estimates. \Cref{tab:simstdy_mcq}, on the other hand, illustrates the performance of the conditional quantile estimates for different sample sizes, quantile levels and error distributions. Here also, the results suggest empirical consistency for the multivariate conditional quantile estimates for each quantile level. 

\begin{table}[h]
    \centering
    \caption{Simulation results for multivariate conditional mean estimation.}
    \label{tab:simstdy_mcm}
    \begin{tabular}{c|c|c|c}
    \toprule
    Sample size ($n$) & Error distribution & RMSE & Relative MAPE \\
    \midrule
    100 & Normal & 0.259 & 1.000 \\
    500 & Normal & 0.141 & 0.529 \\
    1000& Normal & 0.111 & 0.292 \\
    \midrule
    100 & Exponential & 0.250 & 1.000 \\
    500 & Exponential & 0.135 & 0.212 \\
    1000& Exponential & 0.105 & 0.157 \\
    \midrule
    100 & $t_3$ & 0.362 & 1.000 \\
    500 & $t_3$ & 0.203 & 0.669 \\
    1000& $t_3$ & 0.158 & 0.400 \\
    \bottomrule
    \end{tabular}
\end{table}

\begin{table}[!ht]
\centering
\caption{Simulation results for multivariate conditional quantiles estimation.}
\label{tab:simstdy_mcq}
\begin{tabular}{c c|cc|cc|cc}
\toprule
Sample & Quantile & \multicolumn{2}{c|}{{Normal error}} 
   & \multicolumn{2}{c|}{{Exponential error}} 
   & \multicolumn{2}{c}{{$t_3$ error}} \\
size ($n$) & ($\tau$) & RMSE & Relative MAPE & RMSE & Relative MAPE & RMSE & Relative MAPE \\
\midrule

100  & 0.05 & 0.304 & 1.000 & 0.222 & 1.000 & 0.432 & 1.000 \\
500  & 0.05 & 0.173 & 0.834 & 0.119 & 0.624 & 0.227 & 0.097 \\
1000 & 0.05 & 0.139 & 0.491 & 0.083 & 0.328 & 0.173 & 0.046 \\
\midrule

100  & 0.10 & 0.300 & 1.000 & 0.227 & 1.000 & 0.421 & 1.000 \\
500  & 0.10 & 0.172 & 0.485 & 0.121 & 0.464 & 0.224 & 0.378 \\
1000 & 0.10 & 0.138 & 0.389 & 0.085 & 0.340 & 0.170 & 0.321 \\
\midrule

100  & 0.50 & 0.290 & 1.000 & 0.270 & 1.000 & 0.384 & 1.000 \\
500  & 0.50 & 0.166 & 0.710 & 0.148 & 0.568 & 0.212 & 0.358 \\
1000 & 0.50 & 0.135 & 0.540 & 0.108 & 0.513 & 0.159 & 0.212 \\
\midrule

100  & 0.90 & 0.299 & 1.000 & 0.335 & 1.000 & 0.421 & 1.000 \\
500  & 0.90 & 0.169 & 0.418 & 0.190 & 0.328 & 0.219 & 0.575 \\
1000 & 0.90 & 0.138 & 0.401 & 0.141 & 0.271 & 0.174 & 0.414 \\
\midrule

100  & 0.95 & 0.303 & 1.000 & 0.345 & 1.000 & 0.429 & 1.000 \\
500  & 0.95 & 0.169 & 0.873 & 0.197 & 0.186 & 0.223 & 0.767 \\
1000 & 0.95 & 0.139 & 0.437 & 0.147 & 0.153 & 0.180 & 0.585 \\

\bottomrule
\end{tabular}
\end{table}

\section{Application}
\label{sec:application}

As a practical illustration, we study the intricate relationship between the log-return series of the stock market data of Maersk and Lockheed Martin with geopolitical risk. Using the proposed unified estimation technique, we evaluate the conditional properties in a multivariate sense, specifically various magnitudes of risk through quantiles. It is of the essence here to recall few studies that focused on defense and shipping stock markets. \cite{shackman2021maritime} examined how the global maritime stock prices affect the stock prices of large transportation companies. \cite{bhattacharjee2022analysis} analyzed the volatility of the defense stocks, while the work of \cite{stanivuk2023general} assesed the impact of war in Ukraine on the shipping industry using parametric techniques. Most recently, \cite{sim2024resilient} studied the relationship of global supply chain pressure index and stock prices of global logistic companies.

In our illustration, as covariates we use three types of geopolitical risk indices introduced by \cite{caldara2022measuring}. The authors derive these indices using newspaper coverage of events like war, terror attacks, and other events of global relevance such as election results of major countries. The geopolitical risk index not only captures the actual events of disruptions but also any speculations. More than 25 million news articles from major English newspapers are used in this regard. Roughly speaking, a ratio of the number of articles with specific words which correlates heavily with geopolitical risks to the total number of articles is used to compute the risk index. There are three indices proposed: the first one captures the overall geopolitical risk, the second one is an act based geopolitical risk index that realizes disruptions in the geopolitical relationships, and the third one is a threat based geopolitical risk index that anticipates disruptions in the geopolitical relationships. These three indices are hereafter denoted as GPRD, GPRD-A, and GPRD-T, respectively. As our response variables, we use log-returns of the stock prices of Lockheed Martin, a global defense company ($Y_{\mathrm{LHM}}$), and Maersk, a global shipping company ($Y_{\mathrm{MMA}}$). 

The timeline of the data is the period from 1st April, 2021 to 31st March, 2025. This timeline which is marked as a recovery period from multiple COVID-19 waves, is also known for other globally consequential events like the Russia-Ukraine conflict, Turkey-Syria earthquake, Nagorno-Karabakh conflict escalation, Civil war in Sudan and the Israel-Palestine conflict, just to name a few. A few exploratory plots are presented in \Cref{sec:Plots} in the interest of space. First, \Cref{fig:GPRD_GPRDA_GPRDT_vs_Times} shows the geopolitical risk indices against time: the events of Russia-Ukraine conflict (RUC) in February 2022 and Israel-Palestine conflict (IPC) in October 2023 are particularly visible via peaks in all three time series plots, and they will be of particular interest to us. On the other hand, \Cref{fig:sp_lr_vol_LHMnMMA} shows the stock prices, log-returns, and volatility of the two stocks. The top two panels show a consistent increasing trend in LHM whereas MMA observes an increasing trend from April 2021 to January 2022, but a decreasing trend afterward. The average log-returns for both stocks generally stay close to zero, with occasional fluctuations, as displayed in the middle two panels. Finally, the volatility series displayed in the bottom two panels are based on 5-day rolling standard deviation and are representatives of the risk associated with these stocks. The LHM stock shows two periods of high volatility (2022-23 and 2024-25) and two periods of low volatility (2021-22 and 2023-24). The MMA stock, comparatively speaking, observes high volatility throughout the period of study with two exceptional peaks during the RUC and in the early months of 2024, possibly caused by the IPC.

For the main analysis, we apply our proposed techniques of analyzing multivariate conditional mean, variance, and quantiles. We discuss these results in the same order with other analyses wherever suited for clarity and coherence. For implementation, we use cross-validation techniques to obtain optimal bandwidth values in each case. The estimated conditional mean along with the confidence band, quantiles at $\tau=0.05, 0.50, 0.95$, generalized variance i.e. the determinant of the covariance matrix, and the value-at-risk (VaR) at 95\% level are plotted for the entire timeline in \Cref{fig: lr+CB_gv_quntl_vs_time}. The left-side panels in the top and middle correspond to LHM while the right-side panels show the results for MMA. In both cases, the estimated mean returns remain near zero which is a usual behavior, but minor disturbances can be seen during the RUC and IPC in both the plots. This disturbance around IPC is more loud in the case of MMA, possibly due to maritime security concerns and threats of blockade of important sea routes. These fluctuations are more prominent in the estimated quantiles, presented in \Cref{fig: lr+CB_gv_quntl_vs_time_d} and \Cref{fig: lr+CB_gv_quntl_vs_time_e}. Specifically, we notice that the 5th quantile suffers more disturbances post the IPC in both the plots. It is also imperative to point out that our approach ensures avoidance of quantile crossing. Next, turn attention to the bottom two plots which can be used to talk about risks associated with the two stocks. The generalized variance reflects the combined uncertainty in the two assets, and we observe huge spikes around the RUC. The VaR at 95\% level is another risk measure which implies that with 95\% confidence, loss should not exceed the estimated value. For ease of understanding, we plot it on weekly basis. It helps us infer that the periods of pre-RUC and post-IPC show great deviations. We can see that the risk persists during these periods and help us to detect the crisis. We can also see that the RUC impacted the major defense and shipping sectors more dominantly than any other geopolitical event during the period of study.

\begin{figure}[!ht]
    \centering
    \begin{subfigure}{0.43\textwidth}
        \centering
        \includegraphics[width=\linewidth]{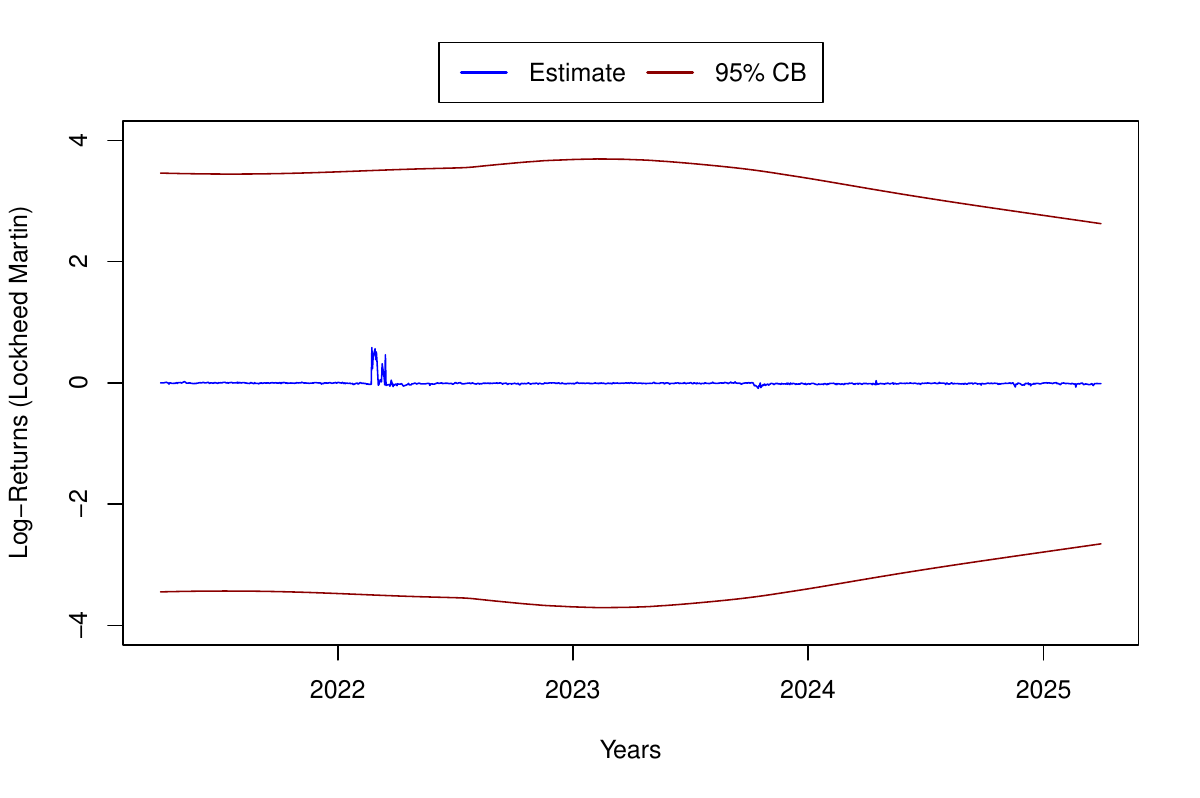}
        \caption{Estimated conditional mean with the corresponding confidence band}
        \label{fig: lr+CB_gv_quntl_vs_time_a}
    \end{subfigure}
    \hfill
    \begin{subfigure}{0.43\textwidth}
        \centering
        \includegraphics[width=\linewidth]{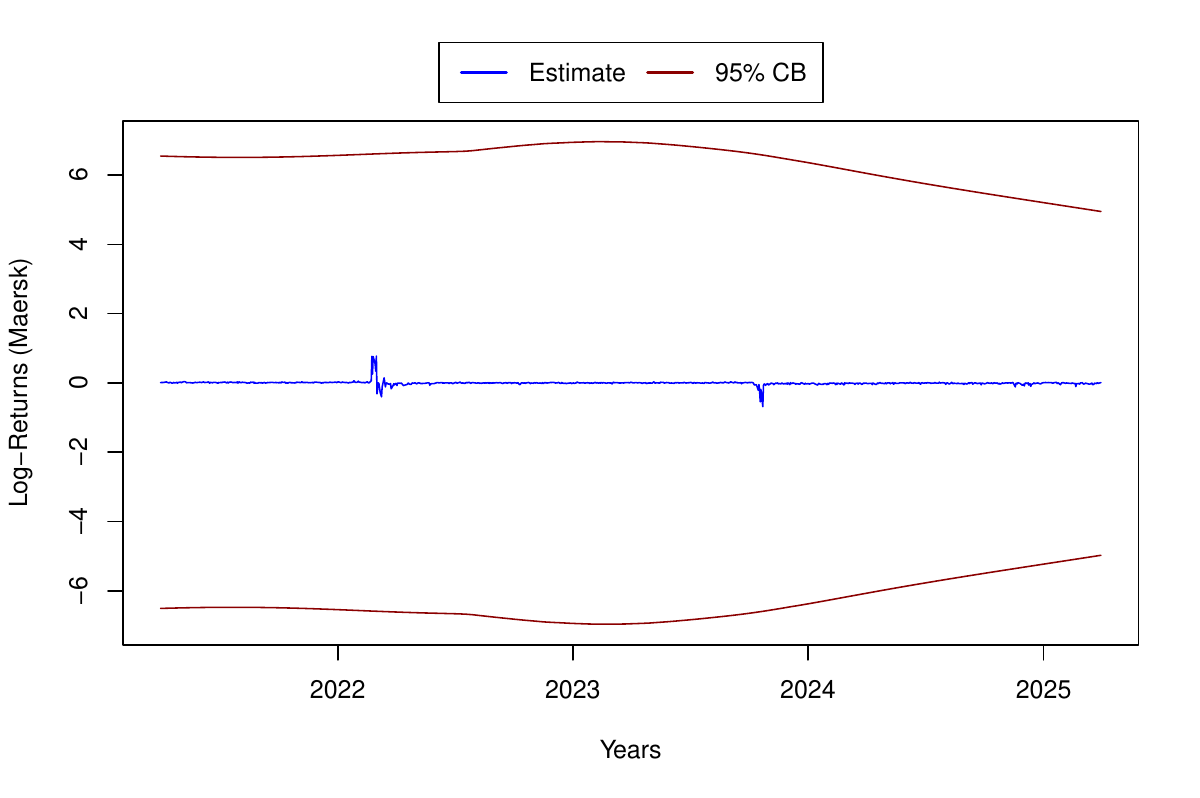}
        \caption{Estimated conditional mean with the corresponding confidence band}
        \label{fig: lr+CB_gv_quntl_vs_time_b}
    \end{subfigure}

    \vspace{0.6cm}

    \begin{subfigure}{0.43\textwidth}
        \centering
        \includegraphics[width=\linewidth]{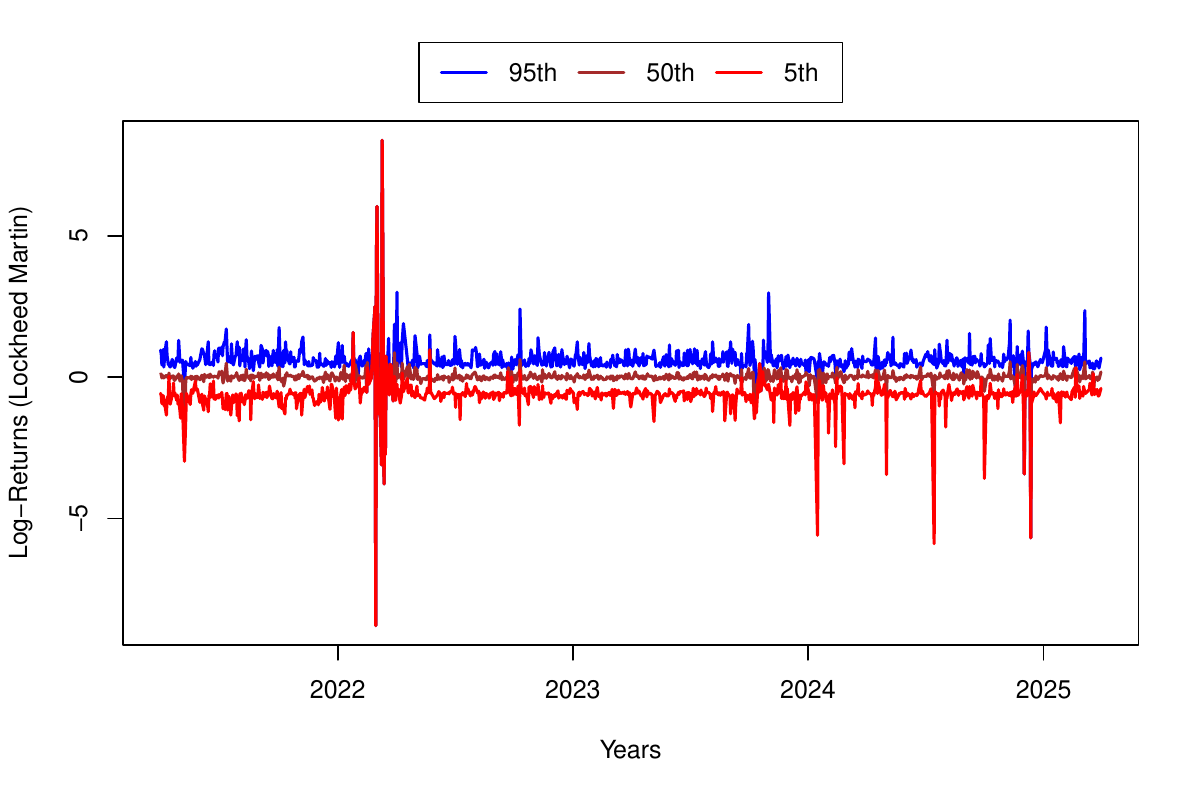}
        \caption{Estimated quantiles at three different levels}
        \label{fig: lr+CB_gv_quntl_vs_time_d}
    \end{subfigure}
    \hfill
    \begin{subfigure}{0.43\textwidth}
        \centering
        \includegraphics[width=\linewidth]{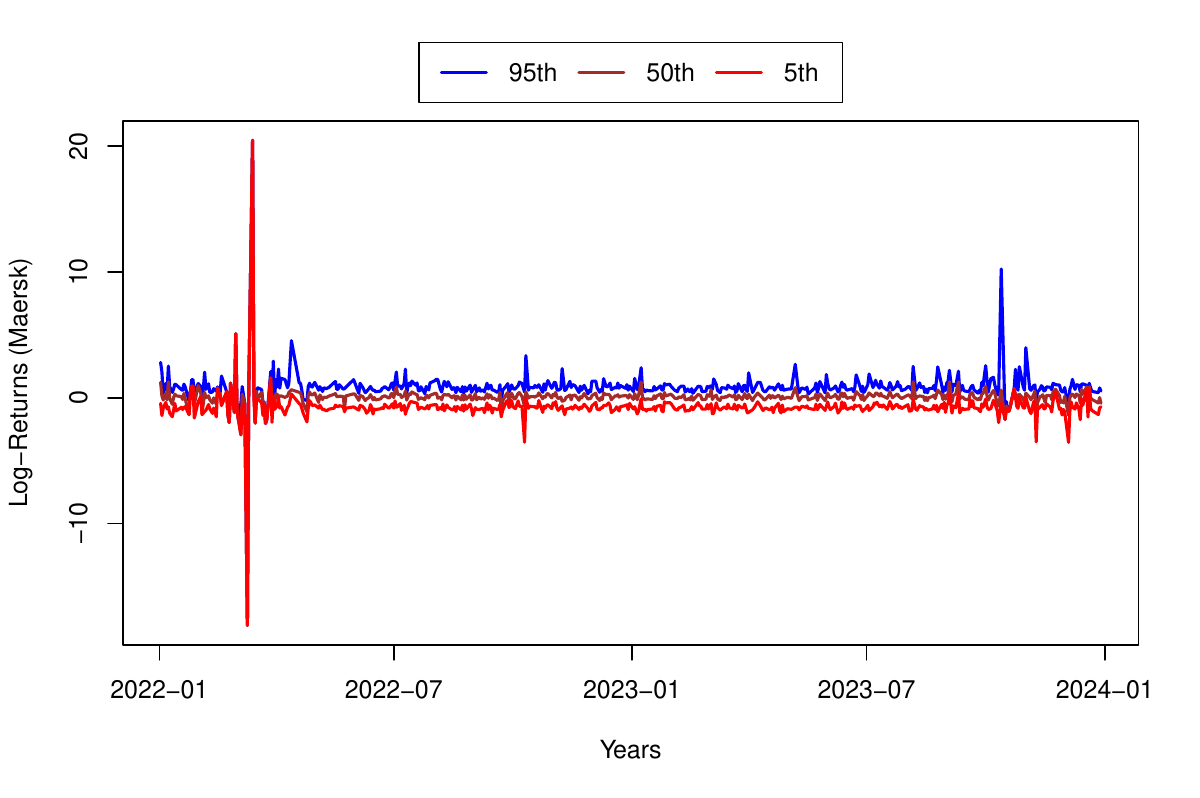}
        \caption{Estimated quantiles at three different levels}
        \label{fig: lr+CB_gv_quntl_vs_time_e}
    \end{subfigure}

    \vspace{0.6cm}

    \begin{subfigure}{0.43\textwidth}
        \centering
        \includegraphics[width=\linewidth]{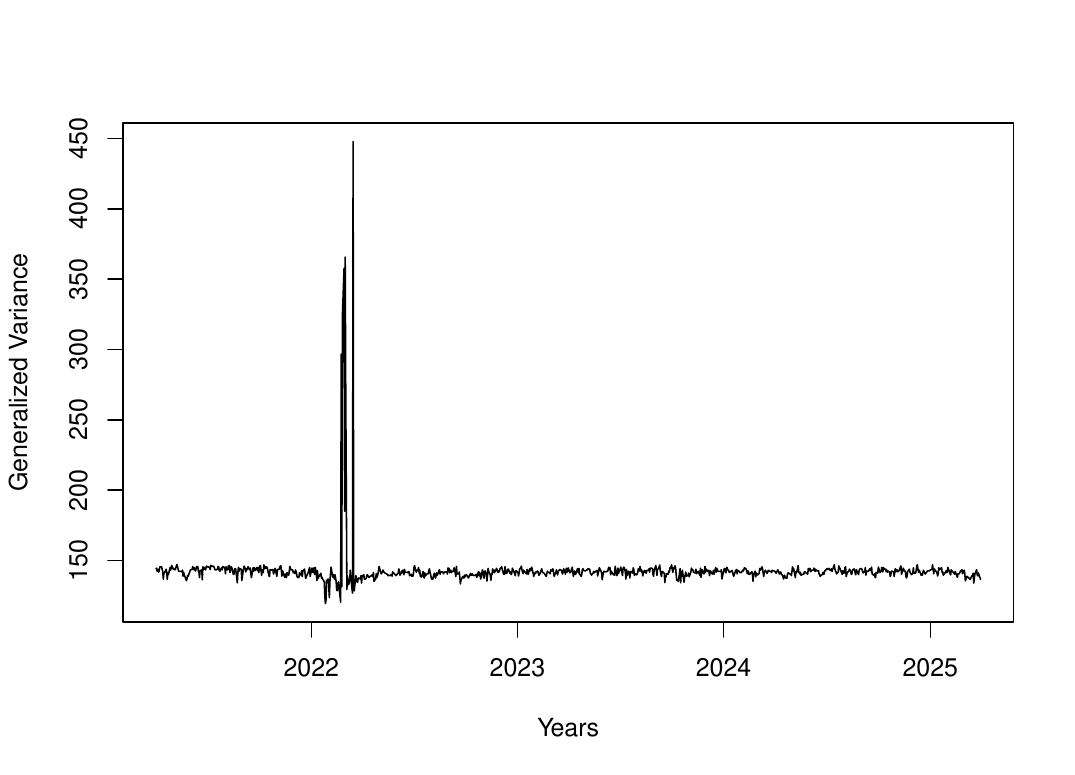}
        \caption{Estimated generalized variance}
        \label{fig: lr+CB_gv_quntl_vs_time_c}
    \end{subfigure}
    \hfill
    \begin{subfigure}{0.43\textwidth}
        \centering
        \includegraphics[width=\linewidth]{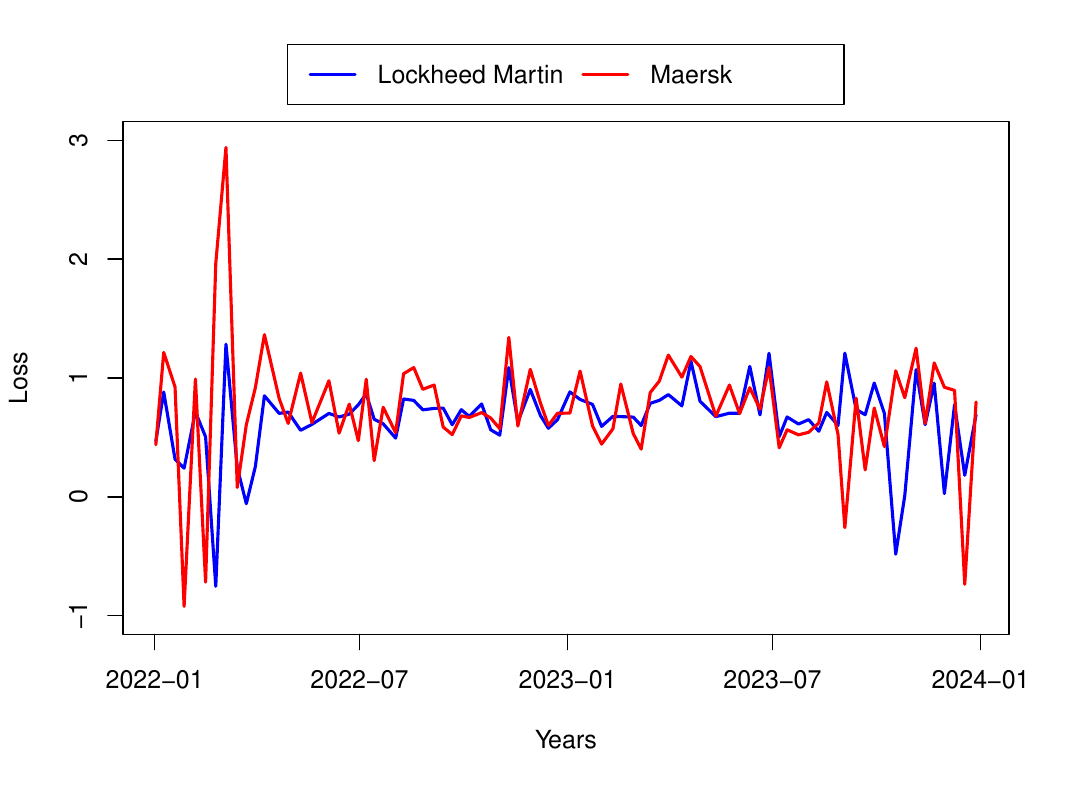}
        \caption{Estimated value-at-risk (VaR) at 95\% level}
        \label{fig:var_vs_time}
    \end{subfigure}

    \caption{Overview of estimated conditional mean, quantiles, volatility and risk, plotted against times.}
    \label{fig: lr+CB_gv_quntl_vs_time}
\end{figure}

Next, in figures \ref{fig:Est_mean_vs_gpr}, \ref{fig:Generalized Variance vs GPRDs}, and \ref{fig:quntvsgprd_a_t_lhm_mma}, we illustrate the conditional mean, volatility, and quantiles as functions of the three different geopolitical risk indices. These visualizations help us assess the effect of the three covariates on different properties of the two assets, while keeping the other covariates constant at the mean levels. From the first figure, one can observe that for each of the risk index, as the value of it increases, the average log-returns decreases for both the stocks. This is an expected behavior. Interestingly, we observe varying effects in the volatility patterns corresponding to the three indices, as presented in the three panels of \Cref{fig:Generalized Variance vs GPRDs}. For GPRD and GPRD-T, the generalized variance shows decreasing trend as the respective risk indices increase whereas for GPRD-A, the estimate stays almost indifferent. This behavior could be attributed to the fact that GPRD-A describes kinetic geopolitical events, events whose consequences are easy to realize. In such a case, the uncertainty as a function of adversity does not change much whereas this is not the case with the other two indices. Increased value of those two risk indices only ascertain the adversity, hence we see a downward slope for the volatility.

\begin{figure}[!ht]
    \centering
    \begin{subfigure}{0.43\textwidth}
        \centering
        \includegraphics[width=\linewidth]{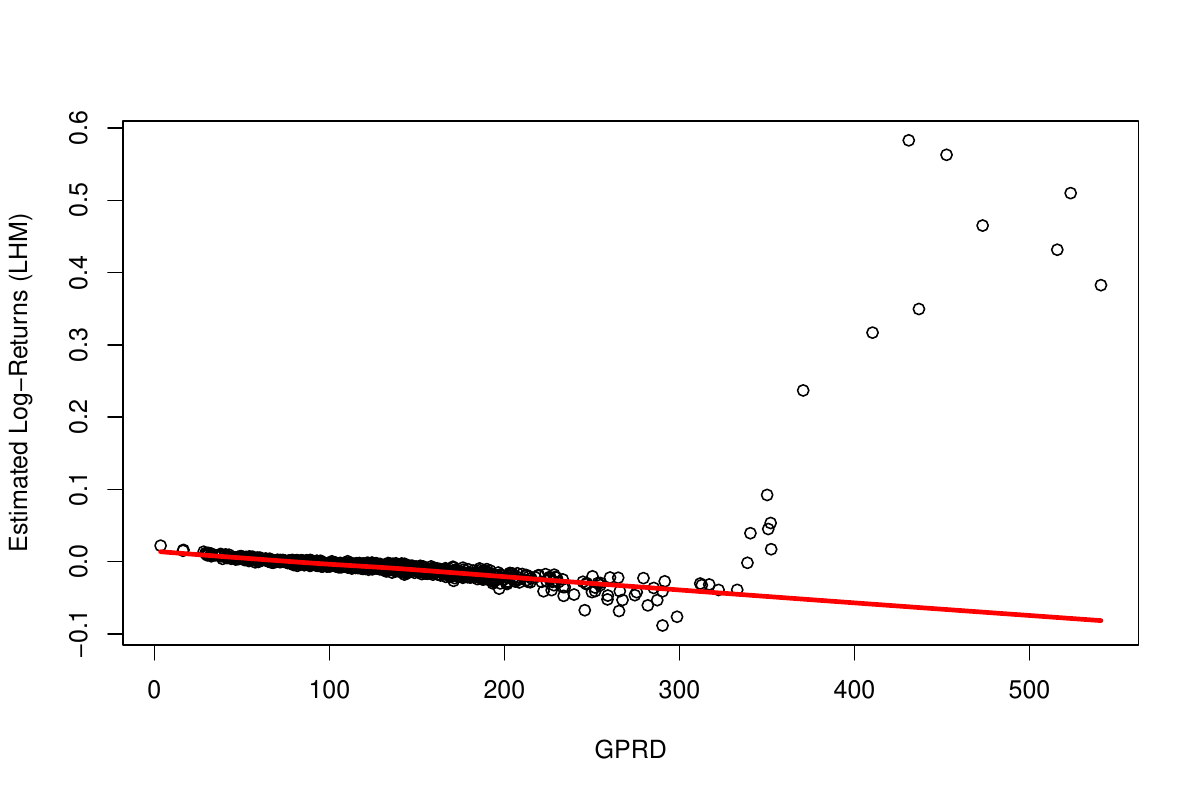}
    \end{subfigure}
    \hfill
    \begin{subfigure}{0.43\textwidth}
        \centering
        \includegraphics[width=\linewidth]{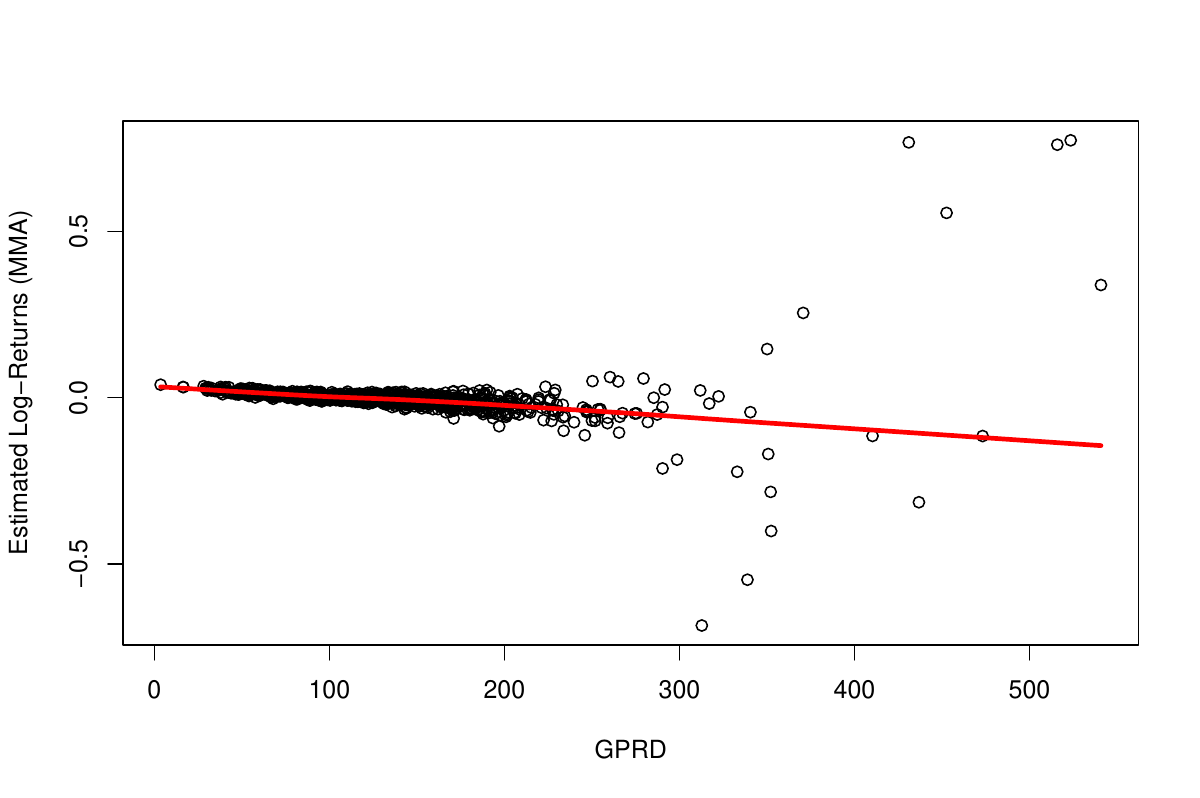}
    \end{subfigure}

    \begin{subfigure}{0.43\textwidth}
        \centering
        \includegraphics[width=\linewidth]{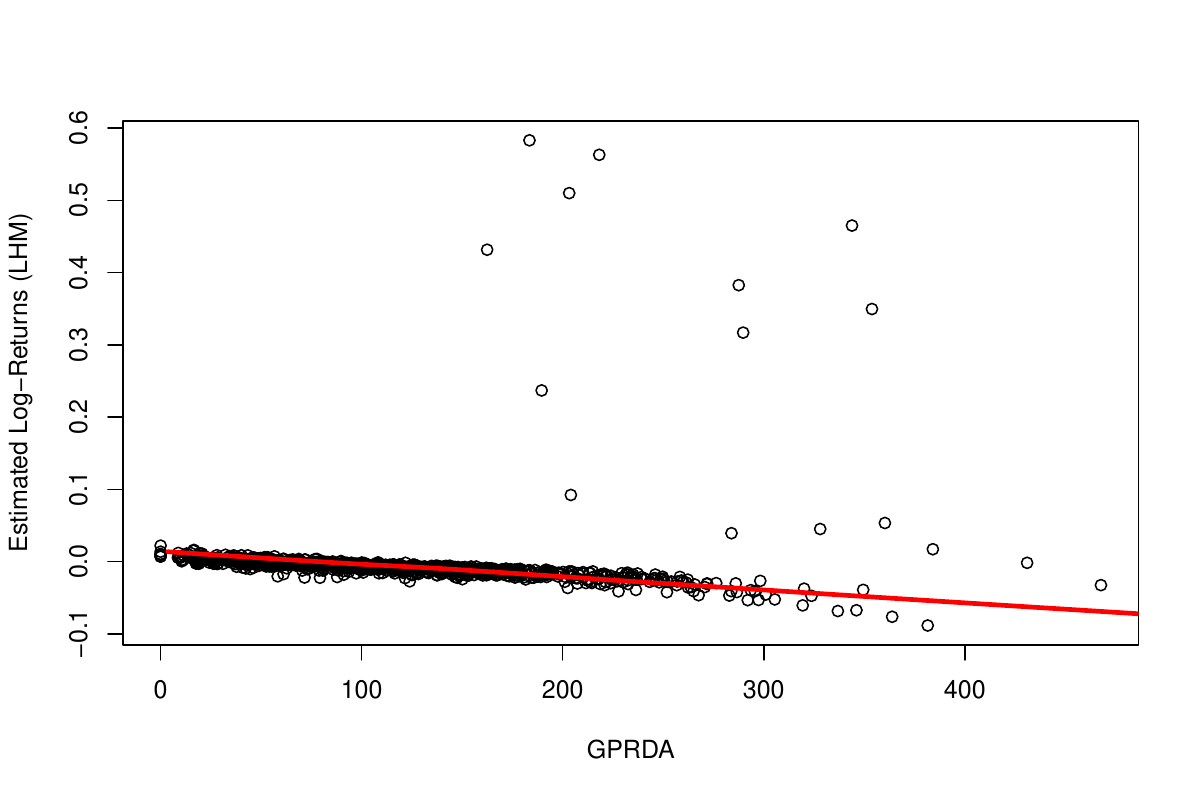}
    \end{subfigure}
    \hfill
    \begin{subfigure}{0.43\textwidth}
        \centering
        \includegraphics[width=\linewidth]{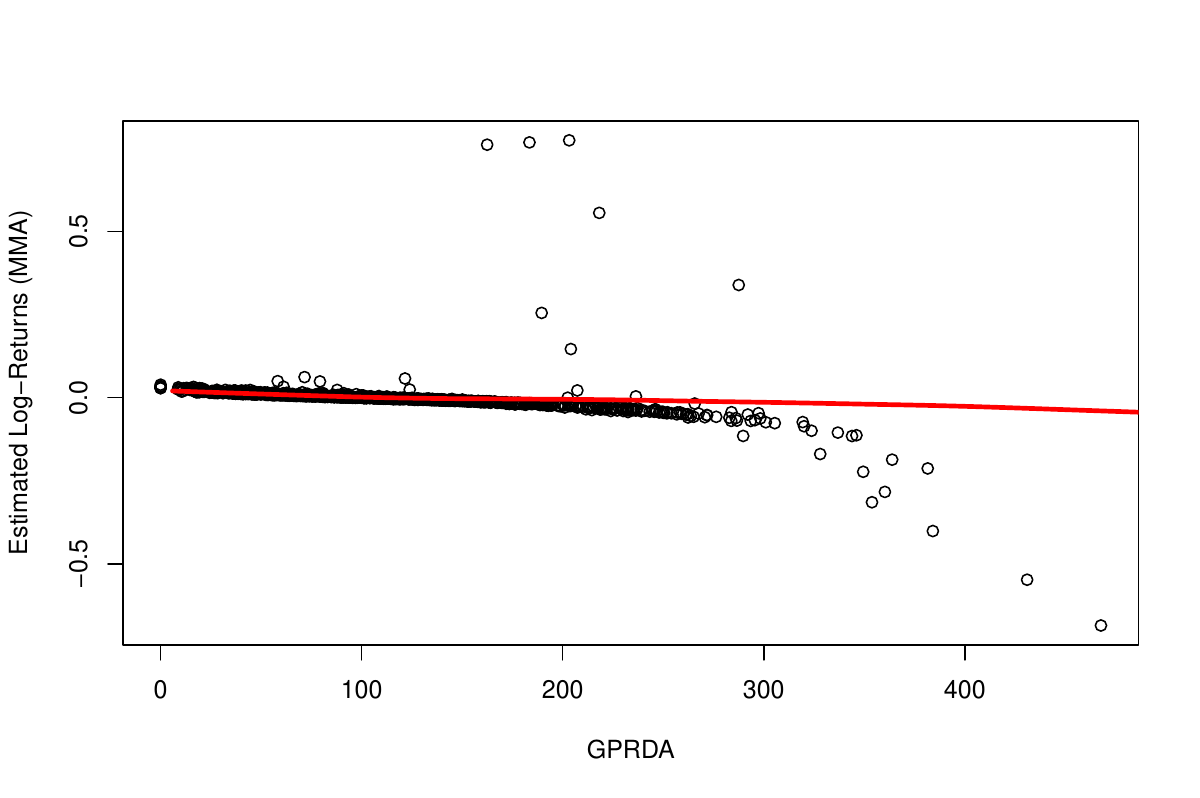}
    \end{subfigure}

    \begin{subfigure}{0.43\textwidth}
        \centering
        \includegraphics[width=\linewidth]{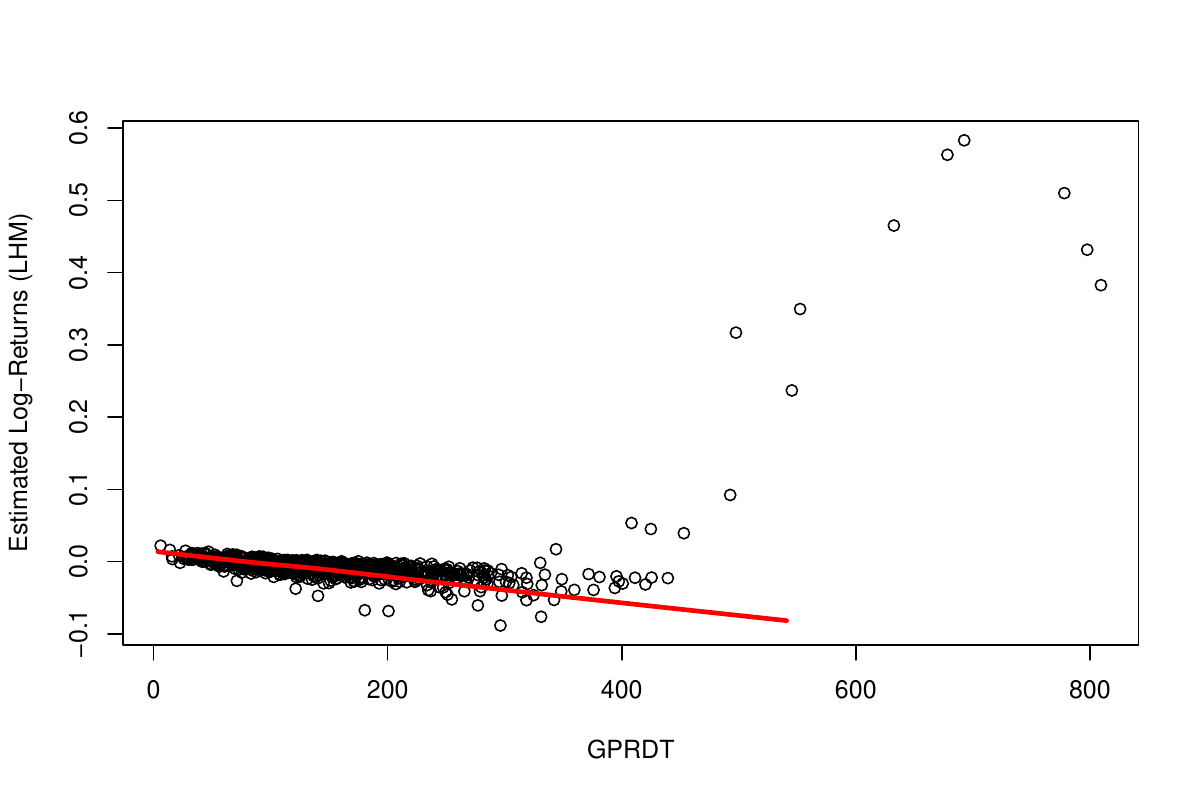}
    \end{subfigure}
    \hfill
    \begin{subfigure}{0.43\textwidth}
        \centering
        \includegraphics[width=\linewidth]{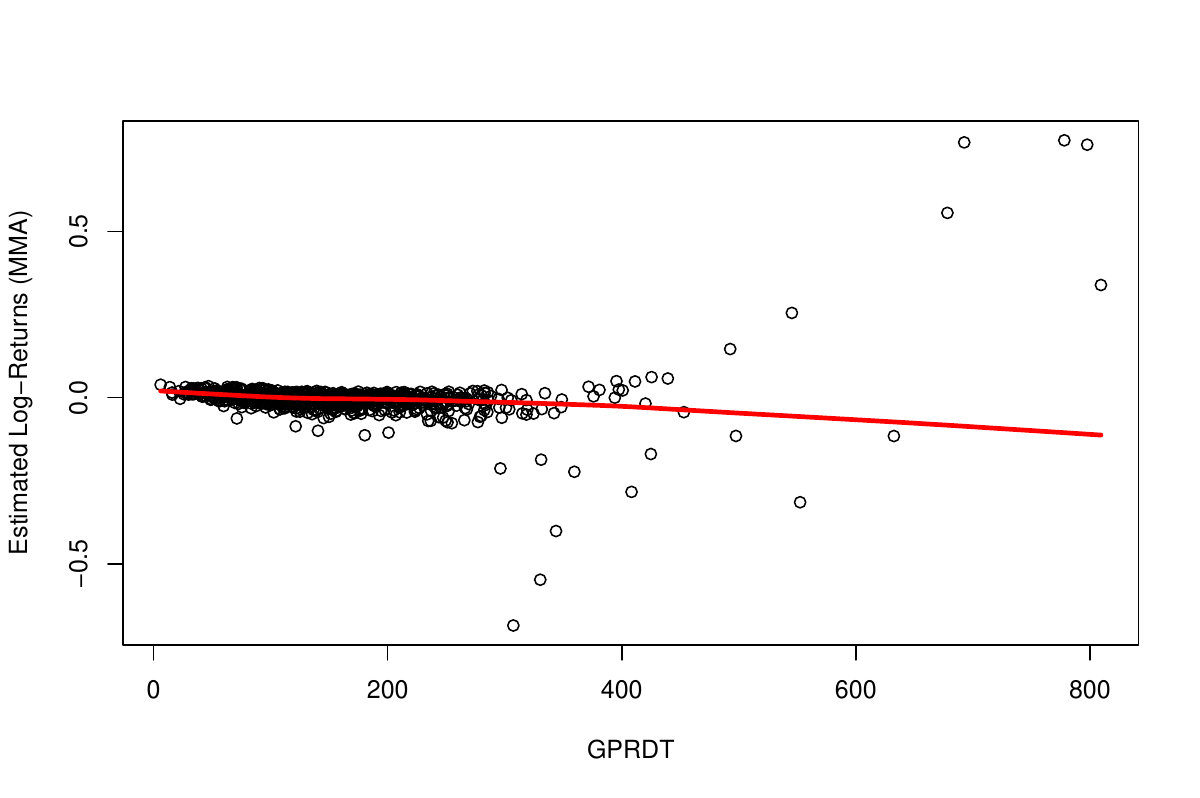}
    \end{subfigure}
    
    \caption{Estimated conditional mean of log-return for LHM (left panels) and MMA (right panels) against the three types of geopolitical risk indices. The red lines are used for smoothing purposes.}
    \label{fig:Est_mean_vs_gpr}
\end{figure}

\begin{figure}[!ht]
\centering
  \begin{subfigure}{.3\textwidth}
    \includegraphics[width=1\linewidth]{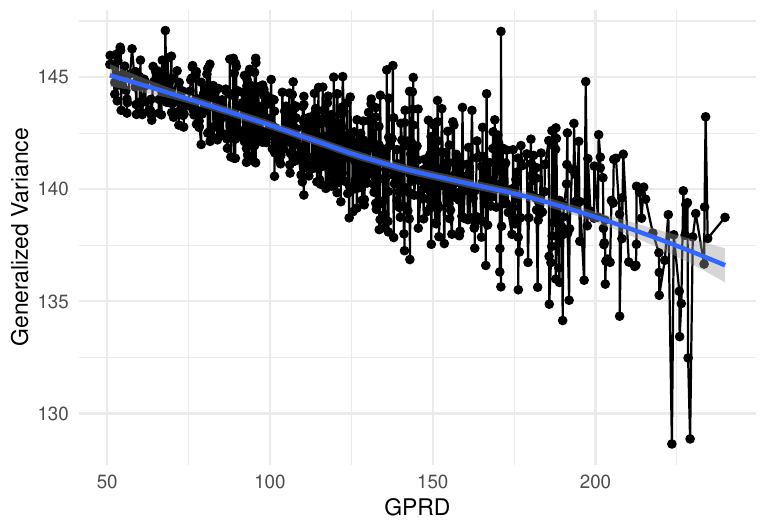}
  \end{subfigure}%
  \begin{subfigure}{.3\textwidth}
    \includegraphics[width=1\linewidth]{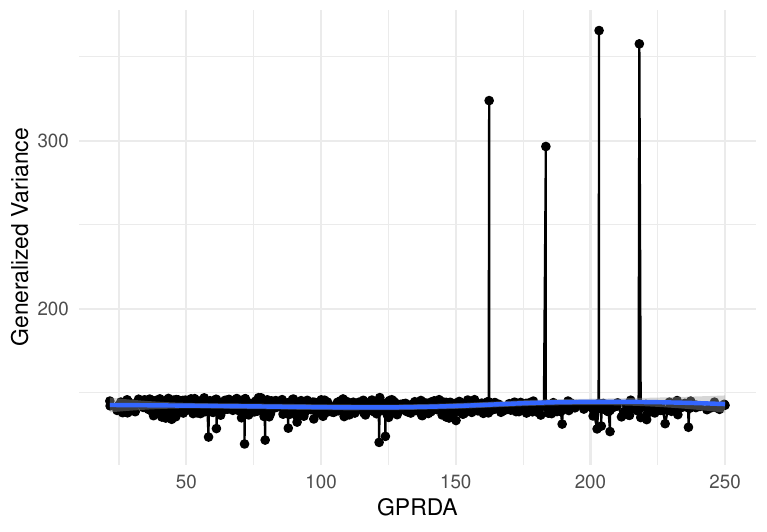}
  \end{subfigure}%
   \begin{subfigure}{.3\textwidth}
    \includegraphics[width=1\linewidth]{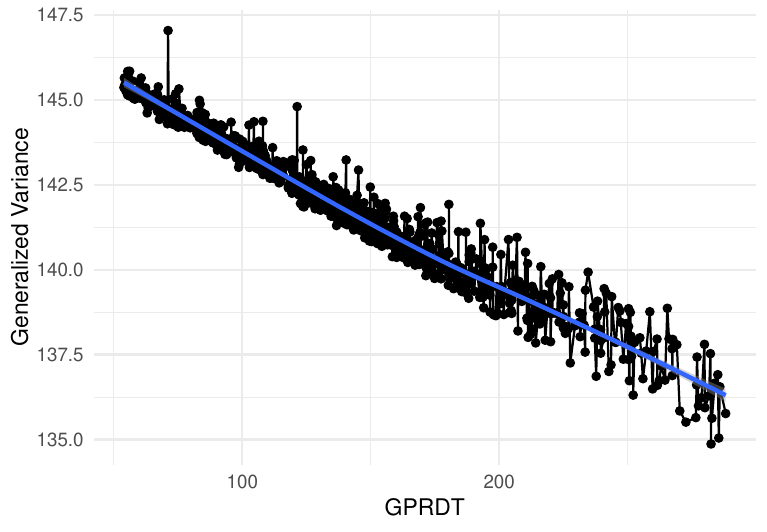}
  \end{subfigure}
    \caption{Estimated generalized variance of log-return for the two assets (LHM and MMA) against the three types of geopolitical risk indices. The blue lines are used for smoothing purposes.}
\label{fig:Generalized Variance vs GPRDs}
\end{figure}

We turn attention to \Cref{fig:quntvsgprd_a_t_lhm_mma} which presents the estimated quantiles for the two assets against the three geopolitical risk indices. In case of LHM, we see that the 5th quantile increases as the GPRD value increases, but an opposite behavior is seen when a higher quantile, that is, the 95th quantile is analyzed. The median behaves more or less similar at all levels of the GPRD values. A different pattern is visible for MMA. There, albeit the 95th quantile behaves similarly, the 5th quantile displays a cubic polynomial behavior where a local maximum is reached around 130 GPRD but a local minima is attained before it. The median behaves similar to the 95th quantile, the values go down but the change is small as the GPRD increases. In quantiles against GPRD-A plots, we see that the lower quantile of LHM shows a slight quadratic behavior, peaking near 100 and then going down, the median behave more or less similarly throughout the domain, but the higher quantile seems to decrease as the GPRD-A value increases. The effects on the quantiles of MMA are more prominent in comparison. There, the lower quantile attains a local maxima at around 140. The median, and the 95th quantile behave similarly, which can be thought of as a cubic polynomial with a local minimum and a local maximum at 100 and 140. Finally, with respect to GPRD-T, we broadly see a similar pattern as in the case of GPRD. All of these results are consistent with our earlier findings which depict that the impact is more profound in MMA.

\begin{figure}[!]
    \centering
    \begin{subfigure}{0.43\textwidth}
        \centering
        \includegraphics[width=\linewidth]{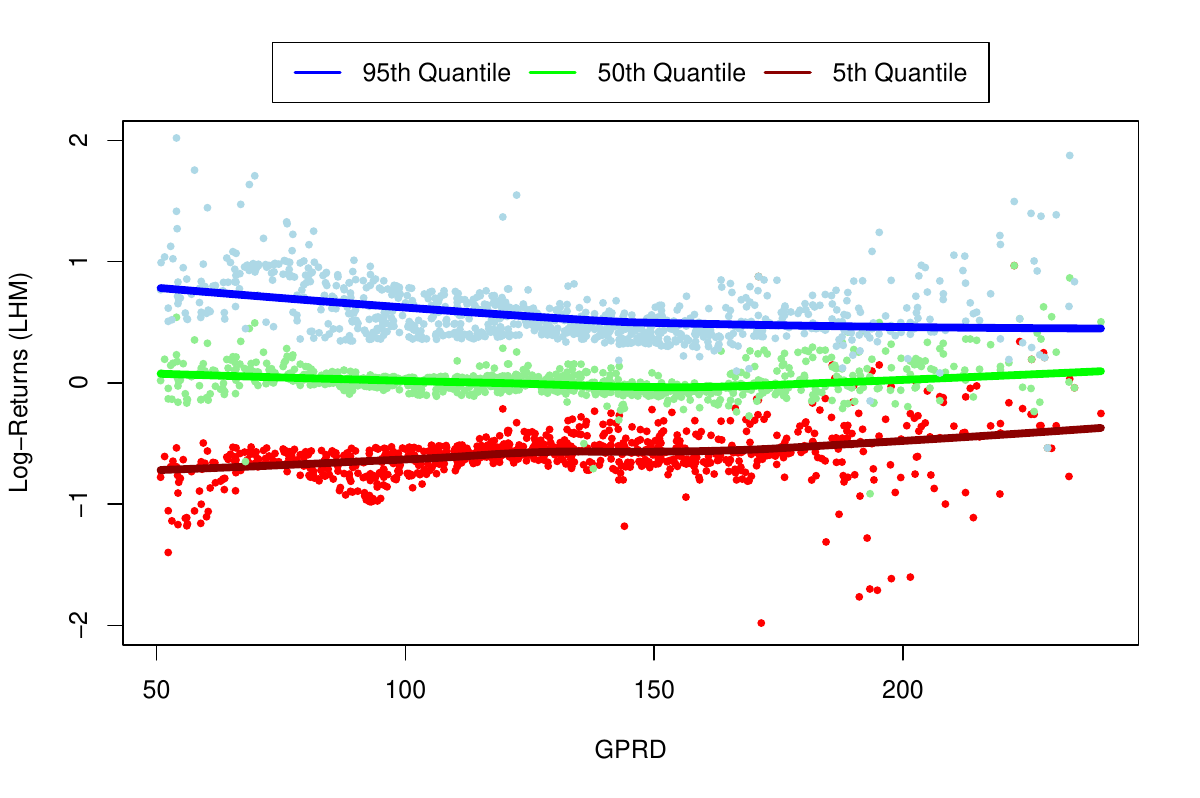}
        \label{fig:quntvsgprd_lhm}
    \end{subfigure}
    \hfill
    \begin{subfigure}{0.43\textwidth}
        \centering
        \includegraphics[width=\linewidth]{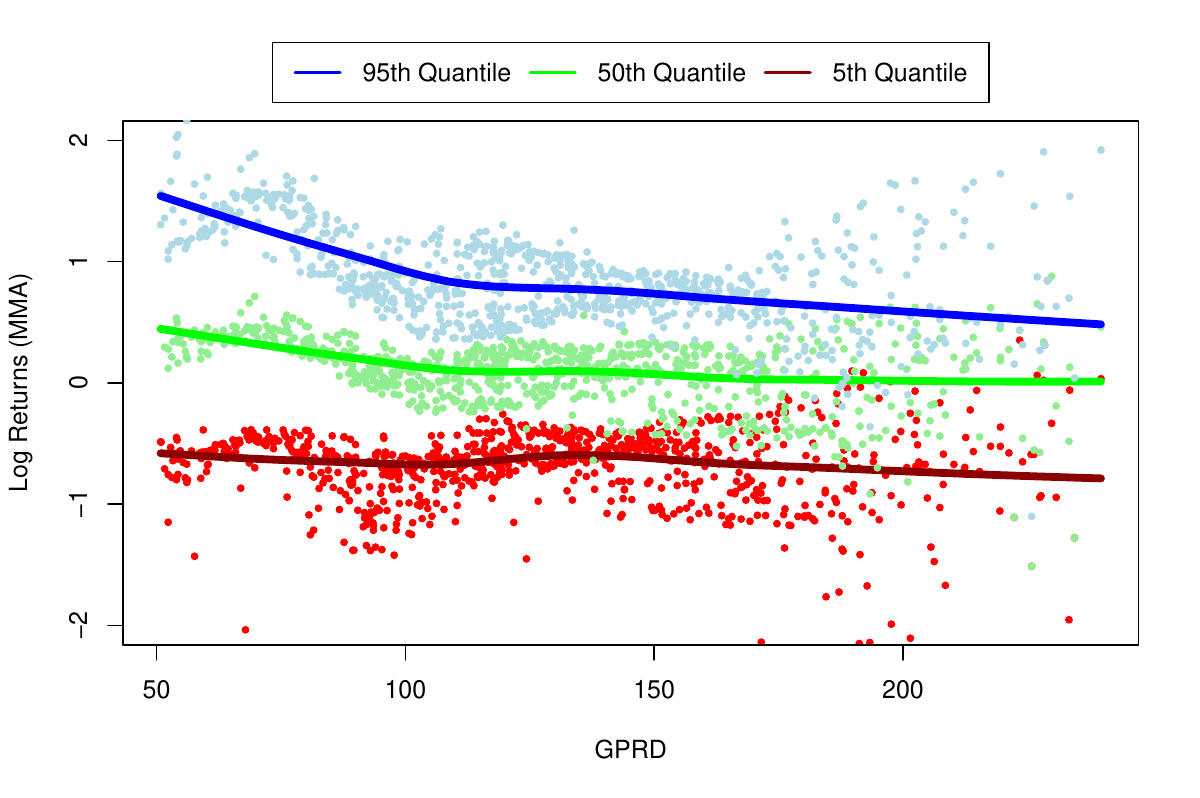}
        \label{fig:quntvsgprd_mma}
    \end{subfigure}
    \begin{subfigure}{0.43\textwidth}
        \centering
        \includegraphics[width=\linewidth]{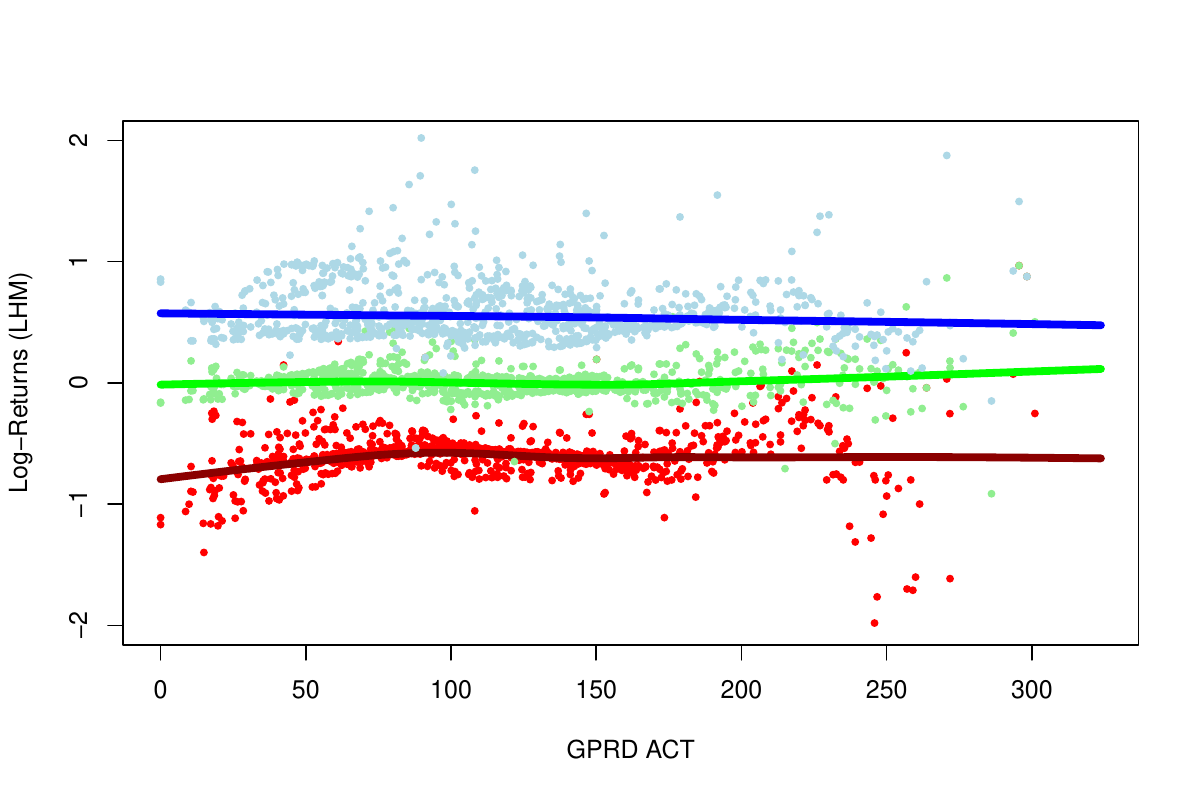}
        \label{fig:quntvsgprda_lhm}
    \end{subfigure}
    \hfill
    \begin{subfigure}{0.43\textwidth}
        \centering
        \includegraphics[width=\linewidth]{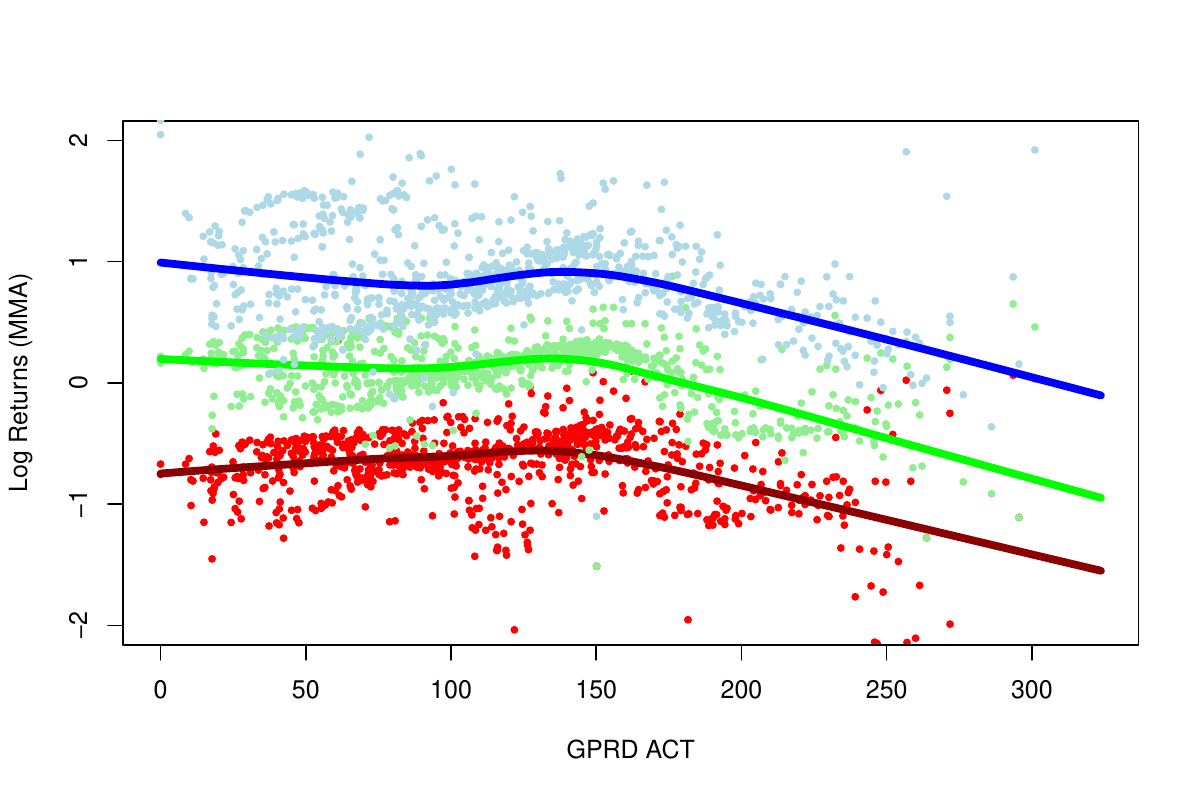}
        \label{fig:quntvsgprda_mma}
    \end{subfigure}
    \begin{subfigure}{0.43\textwidth}
        \centering
        \includegraphics[width=\linewidth]{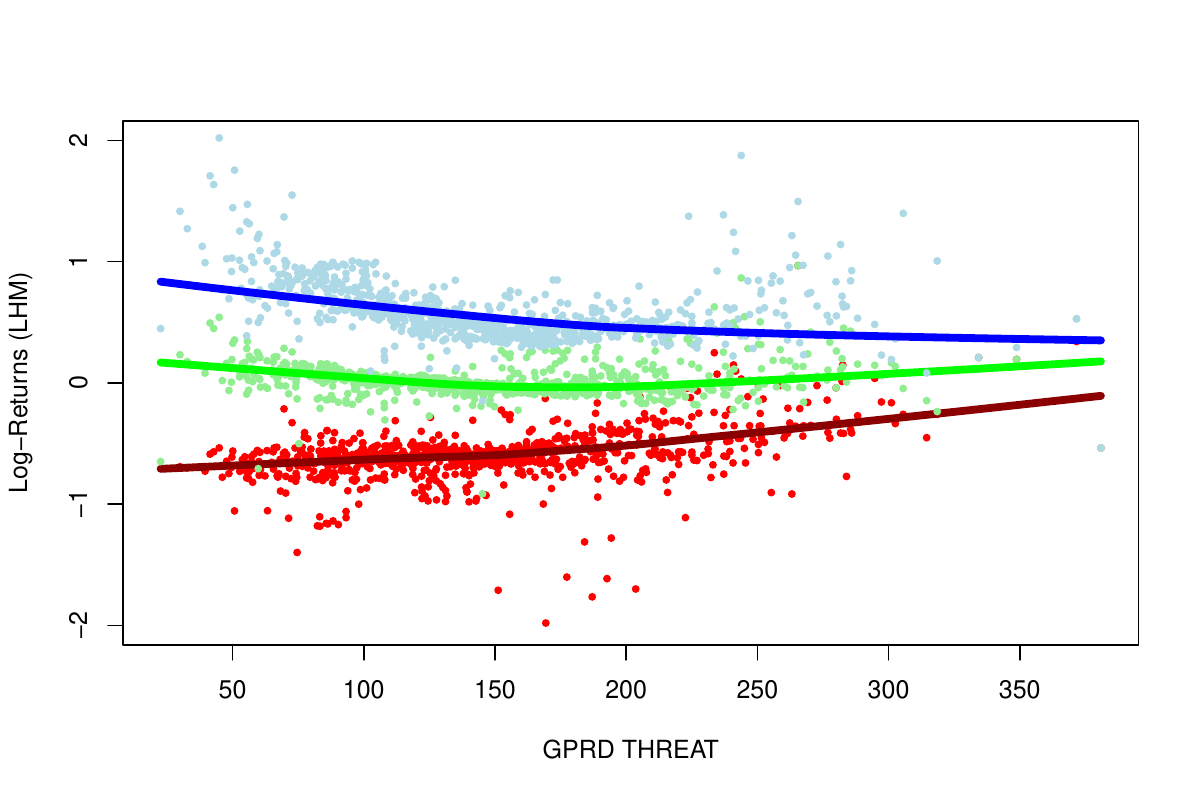}
        \label{fig:quntvsgprdt_lhm}
    \end{subfigure}
    \hfill
    \begin{subfigure}{0.43\textwidth}
        \centering
        \includegraphics[width=\linewidth]{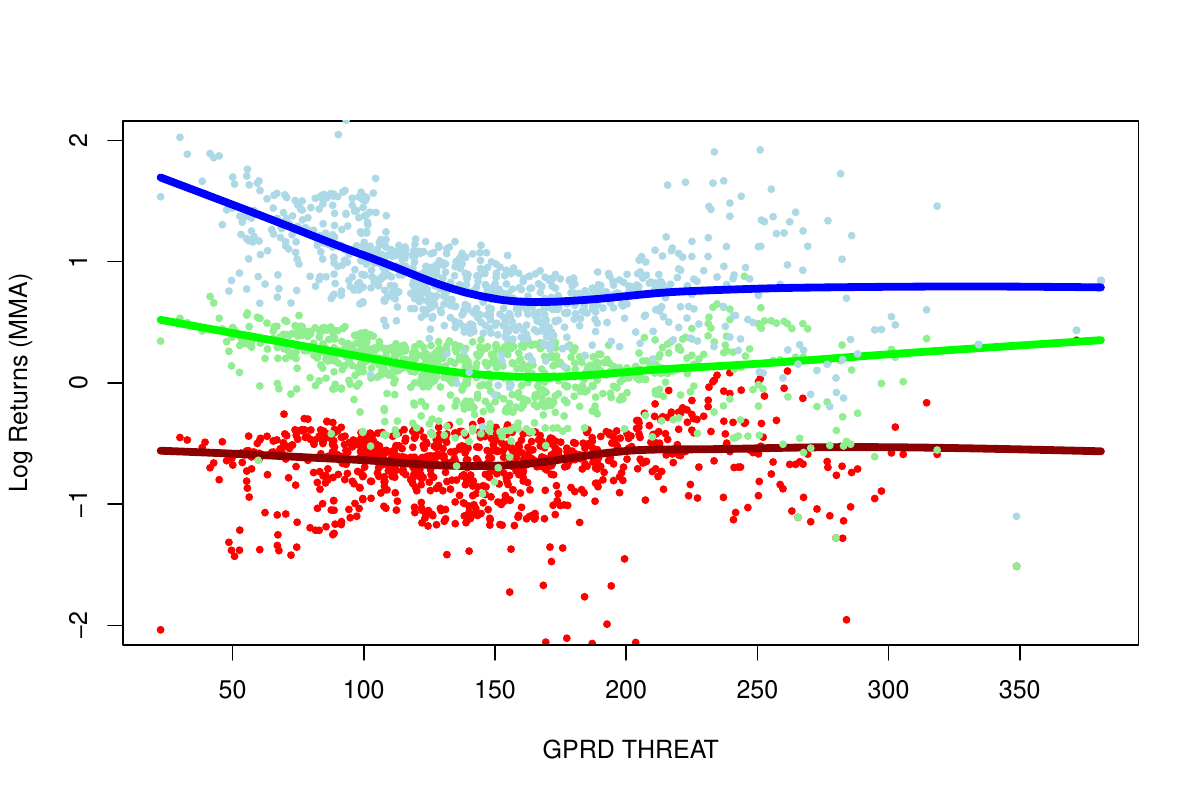}
        \label{fig:quntvsgprdt_mma}
    \end{subfigure}
    \caption{Estimates quantiles at $\tau=0.05, 0.50, 0.95$ for LHM (left panels) and MMA (right panels) against the three types of geopolitical risk indices.}
\label{fig:quntvsgprd_a_t_lhm_mma}
\end{figure}

As a final exploration, we add \Cref{fig:Var_Cov_GPR_Time} in \Cref{sec:Plots}, where we plot the components of the estimated covariance matrix against the three risk indices and time. The variance components behave largely akin to the generalized variance against the risks, that is, with increase of the risk, the value of the variance goes down. An exception is noted in the covariance between the two assets against GPRD-T, where we see an upward trend -- small in size but persistent. The effects of the RUC and the IPC are directly visible in the last three plots. There is a hint of a subtle structural break in the variances plot against time during the IPC but it is not as loud as the RUC. When we look at the plot of covariance against time, we can easily see the two big spikes for the two events. A few other smaller spikes can also be seen in the same direction which can be attributed to further escalations in the IPC. Overall, these analyses help us observe how the proposed framework can be applied to understand the impact of geopolitical risk on the stock prices of the defense and shipping industries. The considered timeline has been a turbulent period in the early twenty-first century so far, and our study offer interesting findings about the period.

\section{Conclusions}
\label{sec:conclusions}

In this paper, we developed a unified nonparametric framework for the statistical analysis of multivariate time series. Using Nadaraya-Watson kernel based estimators, we studied the multivariate conditional mean, variance–covariance matrix, and geometric quantiles, establishing consistency and convergence properties under general dependence structures. In particular, we derived asymptotic results for the conditional mean, proved consistency of the conditional variance–covariance estimator, and established consistency and non-crossing properties for the multivariate conditional quantiles. The proposed methodology was applied to examine the impact of geopolitical risk on globally exposed industries, focusing on Lockheed Martin and Maersk. Our empirical findings reveal three key insights. First, the conditional mean dynamics of both stocks were significantly affected during the Russia–Ukraine and Israel-Palestine conflicts, with narrower confidence intervals for Lockheed Martin suggesting relatively greater stability. Second, the generalized variance exhibited nonlinear behavior with respect to geopolitical risk, decreasing up to a threshold and increasing thereafter, with pronounced temporal spikes during the Russia–Ukraine conflict. Third, the conditional quantile analysis showed asymmetric tail responses: lower quantiles increased and upper quantiles decreased as geopolitical risk intensified, indicating shifting distributional risk profiles.

Beyond these findings, the framework opens several promising avenues for future research. The kernel based methodology can be extended through hybrid approaches that integrate machine learning tools, such as neural network–assisted smoothing or adaptive bandwidth selection, to enhance finite-sample performance and scalability in higher dimensions. Further theoretical work may explore uniform convergence results, inference for conditional quantile processes, and extensions to high-dimensional or functional covariate settings. Additionally, the framework can be adapted to other asset classes, systemic risk measurement, or macro-financial stress testing, where modeling joint dynamics of location, scale, and tail behavior is crucial. We believe that combining rigorous nonparametric theory with modern data-driven techniques represents a fruitful direction for advancing multivariate time series analysis.

\section*{Declaration of interests}

The authors declare no competing interests.

\section*{Funding}

This research did not receive any specific grant from funding agencies in the public, commercial, or not-for-profit sectors.

\bibliography{references}

\newpage
\setcounter{equation}{0}
\renewcommand\theequation{A.\arabic{equation}}

\appendix

\section{Proofs}
\label{sec:Proofs}

\begin{proof}[Proof of \Cref{thm:theorem1} $(i)$]
We establish the consistency of the multivariate conditional mean estimator in the stochastic regression model \eqref{eq:main_model_multivariate}, under the physical dependence setting for $\{\bm{X}_i\}$ and the bandwidth conditions $b_n\to 0$ and $n b_n^k\to\infty$ with $\bm{x} \in \R^k$ being fixed throughout. Recall from \eqref{eq:multivariate_mu_expression} and \eqref{eq:weight_definition} that the local constant estimator of the conditional mean can be expressed as a ratio, as shown below:
\begin{equation}
\label{eq:muhat_in_ratio}
    \widehat{\mu}_n(\bm{x}) = \frac{\sum_{i=1}^n
K_{b_n}(\bm{x}-\bm{X}_i)\bm{Y}_i}{\sum_{i=1}^n
K_{b_n}(\bm{x}-\bm{X}_i)} = \frac{\widehat{g}_n(\bm{x})}{\widehat{f}_n(\bm{x})},
\end{equation}
where the numerator, $\widehat{g}_n(\bm{x})$, and the denominator, $\widehat{f}_n(\bm{x})$, are defined as shown below,
\begin{equation}
\label{eq:f_and_g_def}
\widehat{g}_n(\bm{x}) = \frac{1}{n b_n^k} \sum_{i=1}^n K\!\left(\frac{\bm{x}-\bm{X}_i}{b_n}\right)\bm{Y}_i,
\qquad
\widehat{f}_n(\bm{x}) = \frac{1}{n b_n^k} \sum_{i=1}^n K\!\left(\frac{\bm{x}-\bm{X}_i}{b_n}\right) = \frac{1}{n} \sum_{i=1}^nZ_{i,n}(\bm{x}).
\end{equation}
We show $\widehat{f}_n(\bm{x}) \inprob f_{\bm{X}}(\bm{x})$ and $\widehat{g}_n(\bm{x}) \inprob \mu(\bm{x}) f_{\bm{X}}(\bm{x})$, since $f_{\bm{X}}(\bm{x})>0$ and Slutsky’s theorem, $
\widehat{\mu}_n(x) \inprob \mu(x)$.

By stationarity and the change of variables $\bm{u}=(\bm{x}-\bm{z})/b_n$,
\begin{equation*}
    \E[\widehat{f}_n(\bm{x})] = \int K(\bm{u}) f_{\bm{X}}(\bm{x}-b_n \bm{u})\,d\bm{u} \to f_{\bm{X}}(\bm{x})
\end{equation*}

by continuity of $f_{\bm{X}}$ at $\bm{x}$, and dominated convergence. Under the short range physical dependence condition stated in \Cref{sec:main_results}, kernel weighted averages satisfy a weak law of large numbers, in particular, the covariance summability implied by the physical dependence measure yields, $\text{Var}[n^{-1}\sum Z_{i,n}(\bm{x})] \leqslant C\E[Z_{0,n}(\bm{x})^2] / n$, from this one can bound the variance of the numerator:
\begin{equation*}
    \text{Var}[\widehat{f}_n(\bm{x})] = \frac{1}{n^2} \sum_{s,t = 1}^n \text{Cov}[Z_{s,n}(\bm{x}), Z_{t,n}(\bm{x})] \leqslant \frac{C}{n}\E[Z_{0,n}(\bm{x})^2]
\end{equation*}
where $C$ is a constant that depends on the dependence measure then the right hand side expression becomes
\begin{equation*}
    \frac{C}{n}\E[Z_{0,n}(\bm{x})^2] = \frac{C}{n}\int b_n^{-2k} K\left(\frac{\bm{x}-\bm{z}}{b_n}\right)^2f_{\bm{X}}(\bm{z})d\bm{z} =\frac{C}{n} b_n^{-k} \int K(\bm{u})^2 f_{\bm{X}}(\bm{x}-b_n\bm{u})d\bm{u} = \mathcal{O}\left(\frac{1}{nb_n^k}\right),
\end{equation*}
under boundedness of $f_{\bm{X}}$ near $\bm{x}$ and $K \in \mathcal{L}^2$ and $n b_n^k \to \infty$, variance vanishes, therefore $\widehat{f}_n(x) \xrightarrow{P} f_{\bm{X}}(x)$.

We use the stochastic regression model \eqref{eq:main_model_multivariate} to decompose the numerator, $\widehat{g}_n(\bm{x})$, as a summation of $A_n(\bm{x})$ and $B_n(\bm{x})$ where they are defined as
\begin{equation*}
    A_n(\bm{x}) = \frac{1}{n b_n^k} \sum_{i=1}^n K\left(\frac{\bm{x}-\bm{X}_i}{b_n}\right) \mu(\bm{X}_i),
\qquad
    B_n(\bm{x}) = \frac{1}{n b_n^k} \sum_{i=1}^n K\left(\frac{\bm{x}-\bm{X}_i}{b_n}\right) \Sigma(\bm{X}_i)\bm{e}_i.
\end{equation*}

By stationarity and change of change of variables, $\bm{u}=(\bm{x}-\bm{z})/b_n$,
\begin{equation*}
    \E[A_n(\bm{x})] = \int K(\bm{u}) \mu(\bm{x}-b_n\bm{u}) f_{\bm{X}}(\bm{x}-b_n\bm{u})d\bm{u} \to \mu(\bm{x})f_{\bm{X}}(\bm{x}) \int K(\bm{u}) d\bm{u} = \mu(\bm{x})f_{\bm{X}}(\bm{x}),
\end{equation*}
under continuity of $\mu$ and $f_{\bm{X}}$ at $\bm{x}$ and dominated convergence. A variance bound, identical to the previous result, is obtained, under local boundedness of $\mu$ near $\bm{x}$ and with $K$ localizing to a shrinking neighborhood of $\bm{x}$, hence, $\text{Var}[A_n(x)] = \mathcal{O}((n b_n^k)^{-1})$ and $A_n(x)\inprob \mu(x)f_{\bm{X}}(x)$.

For $B_n(\bm{x})$, $\bm{e}_i$ is independent of the covariate sigma field, $\mathcal{F}$, and have zero mean, so, $\E[B_n(\bm{x})|\mathcal{F}] = 0$. The conditional variance of $B_n(\bm{x})$, $\text{Var}[B_n(\bm{x})|\mathcal{F}]$, is less than or equal to, $\E[\|B_n(\bm{x})\|^2]$. Now, under stationarity, boundedness of $K$ and local boundedness of $\Sigma$, we obtain
\begin{equation*}
    \E[\|B_n(\bm{x})\|^2] \leqslant \frac{C}{n^2b_n^{2k}} n \E \left[ K\left( \frac{\bm{x} - \bm{X}_0}{b_n} \right)^2 \right] = \mathcal{O}\left(\frac{1}{nb_n^k}\right).
\end{equation*}

By change of variable in integration above, we obtain rate, hence, $B_n(x)\inprob 0$ and combining in the results for $A_n(\bm{x})$ and $B_n(\bm{x})$, we obtain, $\widehat{g}_n(\bm{x}) = A_n(\bm{x}) + B_n(\bm{x}) \inprob \mu(\bm{x})f_{\bm{X}}(\bm{x})$. From the results on $\widehat{g}_n(\bm{x})$ and $\widehat{f}_n(\bm{x})$ and Slutsky's theorem, we obtain, $\widehat{\mu}(\bm{x}) \inprob \mu(\bm{x})$. This completes the proof of \Cref{thm:theorem1} $(i)$.
\end{proof}

\begin{proof}[Proof of \Cref{thm:theorem1} $(ii)$]

We now strengthen the consistency result in part $(i)$ to an asymptotic normality statement. Throughout the proof, we fix $\bm{x} \in \R^k$ and avoid $\bm{x}$ in the notation where it is not ambiguous. As before, we assume that $K$ is a symmetric kernel with compact support (or sufficiently fast decay) and finite second moments, and that $f_{\bm{X}}$ and $\mu$ are sufficiently smooth in a neighborhood of $\bm{x}$ with $g(\bm{x}) = \mu(\bm{x})f_{\bm{X}}(\bm{x})$.

We first derive a more precise bias expansion for $\widehat{\mu}_{b_n}(\bm{x})$. Using stationarity and a change of variables, $\bm{u} = (\bm{z}-\bm{x})/b_n$, together with a multivariate Taylor expansion of $f_{\bm{X}}$ around $\bm{x}$, we obtain
\begin{equation*}
    \E[\widehat{f}_{n}(\bm{x})]
    = \int K(\bm{u}) f_{\bm{X}}(\bm{x} + b_n \bm{u}) d\bm{u}
    = f_{\bm{X}}(\bm{x}) + b_n^2 \psi_K \nabla^2 f_{\bm{X}}(\bm{x}) + \mathcal{O}(b_n^3),
\end{equation*}
Analogously, using stationarity, conditional expectation, change of variables and symmetry of kernel, together with a multivariate Taylor expansion of $\widehat{g}_n(\bm{x})$ around $\bm{x}$, we obtain
\begin{equation*}
\begin{split}
        \E[\widehat{g}_n(\bm{x})] &= \frac{1}{b_n^k}\E \left[ K\left( \frac{\bm{x}-\bm{X}_i}{b_n} \right) \bm{Y}_i \right] = \frac{1}{b_n^k} \E \left[ \E \left[ K\left( \frac{\bm{x}-\bm{X}_i}{b_n} \right) \bm{Y}_i \Bigg| \bm{X}_i \right] \right] = \frac{1}{b_n^k}\E \left[ K\left( \frac{\bm{x}-\bm{X}_i}{b_n} \right) \mu(\bm{X}_i) \right] \\
        &= \frac{1}{b_n^k} \int K \left( \frac{\bm{x} - \bm{z}}{b_n}\right) \mu(\bm{z}) f_{\bm{X}}(\bm{z}) d\bm{z} = \frac{1}{b_n^k} \int K \left( \frac{\bm{x} - \bm{z}}{b_n}\right) g(\bm{z}) d\bm{z} = \int K(\bm{u}) g(\bm{x} + b_n \bm{u}) d\bm{u}
\end{split}
\end{equation*}
Hence, we obtain
\begin{equation*}
    \E[\widehat{g}_{n}(\bm{x})]
    = g(\bm{x}) + b_n^2 \psi_K \nabla^2 g(\bm{x}) + \mathcal{O}(b_n^3),
\end{equation*}
Since  $g(\bm{x}) = \mu(\bm{x}) f_{\bm{X}}(\bm{x})$, we obtain $\nabla^2 g(\bm{x})
    = f_{\bm{X}}(\bm{x}) \nabla^2 \mu(\bm{x})
      + 2 \nabla \mu(\bm{x})^{\intercal} \nabla f_{\bm{X}}(\bm{x})
      + \mu(\bm{x}) \nabla^2 f_{\bm{X}}(\bm{x})$, further,
\begin{equation*}
    \frac{\nabla^2 g(\bm{x})}{f_{\bm{X}}(\bm{x})}
    - \mu(\bm{x}) \frac{\nabla^2 f_{\bm{X}}(\bm{x})}{f_{\bm{X}}(\bm{x})}
    = \nabla^2 \mu(\bm{x})
      + 2 \nabla \mu(\bm{x}) \frac{\nabla f_{\bm{X}}(\bm{x})}{f_{\bm{X}}(\bm{x})}
    = \rho_{\mu}(\bm{x}).
\end{equation*}

We now expand the ratio in \Cref{{eq:muhat_in_ratio}} around the deterministic point $(g(\bm{x}), f_{\bm{X}}(\bm{x}))$. Suppose, we have, $u$ and $v$, and we want to approximate $u/v$ when $u$ and $v$ are close to true values, $u_0$ and $v_0$, let, $u = u_0 + \Delta u$ and $v = v_0 + \Delta v$. Then $u/v = u_0/v_0 + \Delta u / v_0 - u_0 \Delta v / v_0^2 + \mathcal{O}(\Delta u\Delta v)$. This yields
\begin{equation*}
    \widehat{\mu}_{b_n}(\bm{x}) - \mu(\bm{x})
    = \frac{\widehat{g}_{n}(\bm{x}) - g(\bm{x})}{f_{\bm{X}}(\bm{x})}
     - \frac{\mu(\bm{x})}{f_{\bm{X}}(\bm{x})}
       \big(\widehat{f}_{n}(\bm{x}) - f_{\bm{X}}(\bm{x})\big) + R_n(\bm{x}),
\end{equation*}

Let $\widehat{\mu}_{b_n}(\bm{x}) - \mu(\bm{x}) = L_n(\bm{x}) + R_n(\bm{x})$, so, $\widehat{\mu}_{b_n}(\bm{x}) - \mu(\bm{x})
    = (L_n(\bm{x}) - \E [L_n(\bm{x})]) + \E [L_n(\bm{x})] + R_n(\bm{x})$. Then
\begin{equation*}
    \widehat{\mu}_{b_n}(\bm{x}) - \mu(\bm{x})
    = \frac{\widehat{g}_{n}(\bm{x}) - \E[\widehat{g}(\bm{x})]}{f_{\bm{X}}(\bm{x})}
     - \frac{\mu(\bm{x})}{f_{\bm{X}}(\bm{x})}
       \big(\widehat{f}_{n}(\bm{x}) - \E[\widehat{f}_{n}(\bm{x})]\big) \\
     + \Big\{\E[\widehat{\mu}_{b_n}(\bm{x})] - \mu(\bm{x}) - \E[R_n] \Big\}
      + \widetilde{R}_n(\bm{x}),
\end{equation*}
where $\widetilde{R}_n(\bm{x})$ collects the higher order terms involving products of
$(\widehat{g}_{n}(\bm{x}) - g(\bm{x}))$ and $(\widehat{f}_{n}(\bm{x}) - f_{\bm{X}}(\bm{x}))$ like $R_n$ but they also have deterministic component, $(g-\E[\widehat{g}_n])/f - \mu(f-\E[ \widehat{f}])/f$. Let, $r_n(\bm{x}) = \widetilde{R}_n(\bm{x}) - \E[R_n(\bm{x})]$. Using the variance calculations from the proof of part $(i)$, we know that $\widehat{g}_{n}(\bm{x}) - g(\bm{x})
    = \mathcal{O}_{\P}((nb_n^k)^{1/2})$ and $\widehat{f}_{b_n}(\bm{x}) - f(\bm{x})
    = \mathcal{O}_{\P}((nb_n^k)^{1/2})$, thus, $\widetilde{R}_n(\bm{x})$, $R_n(\bm{x})$ and $r_n(\bm{x})$ are of order $\mathcal{O}_{\P}((n b_n^k)^{-1})$, and therefore $\sqrt{n b_n^k} \widetilde{R}_n(\bm{x}) \inprob 0$, $\sqrt{n b_n^k} R_n(\bm{x}) \inprob 0$ and $\sqrt{n b_n^k} r_n(\bm{x}) \inprob 0$. Using debiasing of $\widehat{\mu}_{b_n}(\bm{x})$ estimate, we obtain
\begin{equation*}
    \widehat{\mu}_{b_n}(\bm{x}) - \mu(\bm{x}) - b_n^2 \psi_K \rho_{\mu}(\bm{x})
    = \frac{\widehat{g}_{n}(\bm{x}) - g(\bm{x})}{f_{\bm{X}}(\bm{x})}
     - \frac{\mu(\bm{x})}{f_{\bm{X}}(\bm{x})}
       \big(\widehat{f}_{n}(\bm{x}) - f_{\bm{X}}(\bm{x})\big)
     + r_n(\bm{x}),
\end{equation*}
where $b_n^2 \psi_K \rho_{\mu}(\bm{x})$ is the leading bias term. Under the bandwidth condition $ \sqrt{n b_n^k} b_n^2 \to 0$ under the bandwidth condition, $nb_n^{k+4} \to 0$ so the bias term is negligible at the $\sqrt{n b_n^k}$ scale once it has been subtracted explicitly.

We rearrange the $L_n(\bm{x})$, $(\widehat{g}-g) / f - \mu (\widehat{f} - f) / f = (\widehat{g} - \mu \widehat{f}) / f - (g - \mu f) / f$, the second term is equal to zero since $g = \mu f$, so we have, $(\widehat{g}-g) / f - \mu (\widehat{f} - f) / f = (\widehat{g} - \mu \widehat{f}) / f$, now we can plug-in the definitions of $\widehat{g}$ and $\widehat{f}$ from \eqref{eq:f_and_g_def}. We write $\bm{Y}_i - \mu(\bm{x})$ as $(\bm{Y}_i - \mu(\bm{X}_i)) + (\mu(\bm{X}_i)\ - \mu(\bm{x}))$. The first sum has conditionally zero mean for the regression error. The second sum has small local difference due to bounded support (the distance between $\bm{X}_i$ and $\bm{x}$ is of order $\mathcal{O}(b_n)$) and symmetry of kernel (only $b_n^2$ order components are left that goes to zero after scaling by $\sqrt{nb_n^k}$) and smoothness of the mean function ($\mu(\bm{X}_i) - \mu(\bm{x}) = \mathcal{O}(b_n)$). Therefore, the second sum gets absorbed into $o_P(1)$ term after scaling, so we have,
\begin{equation*}
        \sqrt{n b_n^k}
    \left[ \frac{\widehat{g}_{n}(\bm{x}) - g(\bm{x})}{f_{\bm{X}}(\bm{x})}
     - \frac{\mu(\bm{x})}{f_{\bm{X}}(\bm{x})}
       \big(\widehat{f}_{n}(\bm{x}) - f_{\bm{X}}(\bm{x})\big) \right]
    = \frac{\sqrt{nb_n^k}}{nb_n^kf_{\bm{X}}(\bm{x})} \sum_{i=1}^n K\left( \frac{\bm{x} - \bm{X}_i}{b_n}\right) (\bm{Y}_i - \mu(\bm{X}_i)) + o_P(1).
\end{equation*}

For the stochastic part, it is convenient to write the leading term as a normalized sum. Define
\begin{equation*}
    \bm{\xi}_{n,i}(\bm{x})
    = \frac{1}{b_n^{k/2} f_{\bm{X}}(\bm{x})}
      K\!\left(\frac{\bm{x} - \bm{X}_i}{b_n}\right)
      \Big\{\bm{Y}_i - \mu(\bm{X}_i)\Big\},
\end{equation*}
so that, up to terms that vanish after multiplication by $\sqrt{n b_n^k}$,
\begin{equation*}
    \sqrt{n b_n^k}
    \Big[\widehat{\mu}_{b_n}(\bm{x}) - \mu(\bm{x})
         - b_n^2 \psi_K \rho_{\mu}(\bm{x})\Big]
    = \frac{1}{\sqrt{n}} \sum_{i=1}^n \bm{\xi}_{n,i}(\bm{x}) + o_P(1).
\end{equation*}
The sequence $\{\bm{\xi}_{n,i}(\bm{x})\}_{i=1}^n$ forms a triangular array of mean zero random vectors (with negligible bias, $\mathcal{O}(b_n^2)$, already considered). Under our stationarity and smoothness assumptions, with $\mathbb{V}_e = \E[\bm{e}_i\bm{e}^\top_i]$ and taking conditional expectation and change of variables, standard kernel calculations give 
\begin{equation*}
    \text{Var}\!\left(
    \frac{1}{\sqrt{n}} \sum_{i=1}^n \bm{\xi}_{n,i}(\bm{x})
    \right)
    \to \frac{\phi_K}{f_{\bm{X}}(\bm{x})} \Sigma^{1/2}(\bm{x}) \mathbb{V}_e \Sigma^{1/2}(\bm{x})^{\top}.
\end{equation*}

Dependence affects the correlation, but because the kernel localizes to a shrinking neighborhood around $\bm{x}$ and the dependence is controlled by $\Lambda_n$, the leading variance is still the same as the diagonal elements.

Because we are working with a dependent process, we need a central limit theorem that accommodates both dependence and a triangular structure. This is achieved by first constructing a martingale approximation using the functional dependence measure in \eqref{eq:functional_dependence_measure}. Following \cite{zhao2008confidence}, the short range dependence condition $\Theta_{\infty} < \infty$ together with the definition of $\Lambda_n$ in \eqref{eq:srd_lrd} imply that there exists a $\R^p$ valued martingale difference array $\{\bm{D}_{n,i}(\bm{x}),\mathcal{F}_i\}$ such that for some constant $C$ independent of $n$,
\begin{equation*}
    \max_{1 \leqslant m \leqslant n}
    \Big\|
    \sum_{i=1}^m \big(\bm{\xi}_{n,i}(\bm{x}) - \bm{D}_{n,i}(\bm{x})\big)
    \Big\|_2
    \leqslant C\,\Lambda_n
    \left(\frac{b_n^{3}}{n} + \frac{1}{n^2}\right).
\end{equation*}
 By the bandwidth and dependence condition, as in \eqref{eq:functional_dependence_measure},
\begin{equation*}
    \frac{1}{\sqrt{n}} \sum_{i=1}^n
    \big(\bm{\xi}_{n,i}(\bm{x}) - \bm{D}_{n,i}(\bm{x})\big)
    \xrightarrow{P} 0.
\end{equation*}
Hence, the asymptotic distribution of $\sum_{i=1}^n \bm{\xi}_{n,i}(\bm{x}) / \sqrt{n}$ is the same as that of $\sum_{i=1}^n \bm{D}_{n,i}(\bm{x}) / \sqrt{n}$.

We now apply the multivariate martingale central limit theorem of \cite{helland1982central} (Theorem 3.3) to the martingale difference array $\{\bm{D}_{n,i}(\bm{x}),\mathcal{F}_i\}$. The boundedness of the kernel $K$, together with the assumption $\E[\|\bm{Y}_0\|^2] < \infty$, ensures the conditional Lindeberg condition. The conditional variance condition follows from the variance calculation above and from the short range dependence condition, which guarantees that the conditional and unconditional covariances coincide asymptotically. In particular, we obtain
\begin{equation*}
    \frac{\sqrt{n b_n^k f_{\bm{X}}(\bm{x})}}{\sqrt{\phi_K}}
    (\Sigma^{1/2}(\bm{x}) \mathbb{V}_e \Sigma^{1/2}(\bm{x})^{\top})^{-1/2}
    \Big[\widehat{\mu}_{b_n}(\bm{x}) - \mu(\bm{x})
         - b_n^2 \psi_K \rho_{\mu}(\bm{x})\Big]
    \xrightarrow{d} N_p(\bm{0}, \bm{I}_p).
\end{equation*}

Finally, from part $(i)$ we know that $\widehat{f}_{n}(\bm{x}) \xrightarrow{P} f_{\bm{X}}(\bm{x})$. Therefore, by Slutsky's theorem we can replace $f_{\bm{X}}(\bm{x})$ with $\widehat{f}_{n}(\bm{x})$ in the scaling factor, which yields
\begin{equation}
    \frac{\sqrt{n b_n^k \widehat{f}_{n}(\bm{x})}}{\sqrt{\phi_K}}
    (\Sigma^{1/2}(\bm{x}) \mathbb{V}_e \Sigma^{1/2}(\bm{x})^{\top})^{-1/2}
    \Big[\widehat{\mu}_{b_n}(\bm{x}) - \mu(\bm{x})
         - b_n^2 \psi_K \rho_{\mu}(\bm{x})\Big]
    \xrightarrow{d} N_p(\bm{0}, \bm{I}_p),
\end{equation}
as claimed in \eqref{eq:thm1}. We assume $\mathbb{V}_e = \I$, then, $(\Sigma^{1/2}(\bm{x}) \mathbb{V}_e \Sigma^{1/2 \top}(\bm{x}))^{-1/2} = \Sigma^{1/2}$, so our result becomes,
\begin{equation}
    \frac{\sqrt{n b_n^k \widehat{f}_{n}(\bm{x})}}{\sqrt{\phi_K}}
    \Sigma^{1/2}(\bm{x})
    \Big[\widehat{\mu}_{b_n}(\bm{x}) - \mu(\bm{x})
         - b_n^2 \psi_K \rho_{\mu}(\bm{x})\Big]
    \xrightarrow{d} N_p(\bm{0}, \bm{I}_p),
\end{equation}

This completes the proof of \Cref{thm:theorem1} $(ii)$.
\end{proof}

\begin{proof}[Proof of \Cref{thm:theorem2}]

In order to obtain consistency of the sample estimate of the variance matrix, we require a local uniform consistency of the multivariate mean estimate and this can be obtained under the assumptions of \Cref{thm:theorem1} and regularity assumption on kernel, $K \in \mathcal{K}$, is compactly supported otherwise is sufficiently fast decaying and is Lipschitz, i.e, $\mid K(\bm{u}) - K(\bm{v}) \leqslant L\| \bm{u} - \bm{v}\|\mid \; \forall \; \bm{u}, \bm{v}$ and $\E[\|\bm{Y}_0\|^2] < \infty$.

\begin{theorem}
\label{thm:theorem6}
    Assume $\mu(\bm{u})$, $f_{\bm{X}}(\bm{u}) \in \mathcal{C}^2$ on $\bm{x}^{\epsilon}$, with $f_{\bm{X}}(\bm{u}) \geqslant c > 0 \; \forall \; \bm{u} \in \bm{x}^{\epsilon}$ and $\E[\|\bm{Y}_0\|^{2+\delta}] < \infty$. The kernel function, $K \in \mathcal{K}$, is compactly supported otherwise is sufficiently fast decaying and is Lipschitz, i.e, $\mid K(\bm{u}) - K(\bm{v}) \leqslant L\| \bm{u} - \bm{v}\|\mid \; \forall \; \bm{u}, \bm{v}$. Further, $ b_n^2 + 1/nb_n^k + \log n/nb_n^k \to 0$ then for some $\epsilon > 0$, $\sup_{\|\bm{u}-\bm{x}\| \leqslant \epsilon} \|\widehat{\mu}_n(\bm{u})-\mu(\bm{u})\| \inprob 0$, as, $\; n \to \infty$.
\end{theorem}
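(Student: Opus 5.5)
Writing $\widehat{\mu}_n=\widehat{g}_n/\widehat{f}_n$ as in \eqref{eq:muhat_in_ratio}, I will show three things uniformly over the compact ball $\bm{x}^{\epsilon}=\{\bm{u}:\|\bm{u}-\bm{x}\|\leqslant\epsilon\}$:
\[
\sup_{\bm{u}\in\bm{x}^\epsilon}|\widehat{f}_n(\bm{u})-f_{\bm{X}}(\bm{u})|\inprob 0,
\qquad
\sup_{\bm{u}\in\bm{x}^\epsilon}\|\widehat{g}_n(\bm{u})-g(\bm{u})\|\inprob 0 ,
\]
where $g=\mu f_{\bm{X}}$, together with $f_{\bm{X}}\geqslant c>0$ on the ball. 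The identity
\[
\widehat{\mu}_n-\mu=\frac{(\widehat{g}_n-g)-\mu(\widehat{f}_n-f_{\bm{X}})}{\widehat{f}_n}
\]
then finishes the argument. Indeed, $\mu$ is bounded on the ball, and with probability tending to one $\inf\widehat{f}_n\geqslant c/2$. Each uniform convergence splits into a bias part and a stochastic part.

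\textbf{Bias part.} This is deterministic. After the change of variables used in the proof of \Cref{thm:theorem1}, we have
\[
\E\widehat{g}_n(\bm{u})-g(\bm{u})=\int K(\bm{v})\{g(\bm{u}+b_n\bm{v})-g(\bm{u})\}\,d\bm{v}.
\]
Since $g,f_{\bm{X}}\in\mathcal{C}^2$ on a slightly larger ball, symmetry of $K$ and a second-order Taylor expansion bound this by $C b_n^2$ uniformly in $\bm{u}$. Because $b_n^2\to0$, the bias part vanishes.

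\textbf{Stochastic part.} First cover the ball by a grid $\{\bm{u}_j\}_{j\leqslant N_n}$ of mesh $\delta_n=b_n^{k+2}/n^{a}$, so that $N_n$ is polynomial in $n$. The Lipschitz property of $K$ gives
\[
|K_{b_n}(\bm{u}-\bm{X}_i)-K_{b_n}(\bm{u}_j-\bm{X}_i)|\leqslant Lb_n^{-k-1}\delta_n .
\]
Hence the oscillation of $\widehat{f}_n$ between grid points is deterministically $o(1)$. For $\widehat{g}_n$, the same oscillation is bounded by $o(1)\cdot n^{-1}\sum\|\bm{Y}_i\|=o_{\P}(1)$, using $\E\|\bm{Y}_0\|<\infty$.

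At the grid points, truncate the responses: write $\bm{Y}_i=\bm{Y}_i\mathbb{1}\{\|\bm{Y}_i\|\leqslant\tau_n\}+\bm{Y}_i\mathbb{1}\{\|\bm{Y}_i\|>\tau_n\}$ with $\tau_n=(nb_n^k/\log n)^{1/2-\eta}$. The tail part contributes, in expectation, at most $b_n^{-k}\E[\|\bm{Y}_0\|\mathbb{1}\{\|\bm{Y}_0\|>\tau_n\}]\leqslant b_n^{-k}\tau_n^{-1-\delta}\E\|\bm{Y}_0\|^{2+\delta}$. I will choose $\eta$ in terms of $\delta$ so that this is $o(1)$; this is where the $2+\delta$ moment enters. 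For the truncated, bounded summands I need a maximal exponential inequality. The noise part $\sum K_{b_n}\Sigma^{1/2}(\bm{X}_i)\bm{e}_i$ is, conditionally on $\mathcal{F}$, a sum of independent terms, so Bernstein's inequality applies directly. The covariate part $\sum K_{b_n}(\bm{u}_j-\bm{X}_i)\mu(\bm{X}_i)$ is handled with the martingale approximation already used in \Cref{thm:theorem1}$(ii)$, which is available under $\Theta_\infty<\infty$. Freedman's inequality for the approximating martingale differences, whose variance is $\mathcal{O}(nb_n^{-k})$, together with a union bound over $N_n$ points, gives a maximum of order $\mathcal{O}_{\P}(\sqrt{\log n/(nb_n^k)})+$ (approximation error). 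This tends to zero by the assumption $\log n/nb_n^k\to0$.

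\textbf{Main obstacle.} The hardest step is the exponential inequality under dependence. The martingale approximation error must be controlled uniformly over the grid, not just pointwise. I plan to use the $L^2$ bound $C\Lambda_n(b_n^3/n+n^{-2})$ with $\Lambda_n=\mathcal{O}(n)$, and to upgrade it to the maximum over the grid by a chaining or Markov-plus-union argument. If this proves too lossy, the fallback is a blocking argument: split into big and small blocks of length $\asymp\log n$, use the geometric decay implied by $\theta_i$ summability to couple the blocks with independent ones, and then apply Bernstein's inequality to the block sums.
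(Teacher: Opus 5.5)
\textbf{The gap is the truncation step, and under the stated hypotheses it cannot be closed.}

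You need $b_n^{-k}\tau_n^{-1-\delta}\to0$. But any $\tau_n$ compatible with your Bernstein-plus-union-bound step satisfies $\tau_n\lesssim nb_n^k/\log n$, and in fact $\tau_n\lesssim(nb_n^k/\log n)^{1/2}$ for the rate you aim at. The condition $\log n/(nb_n^k)\to0$ does not tie $b_n$ to $n$ tightly enough. For $k=1$ and $b_n=(\log n)^2/n$ one gets $b_n^{-1}\tau_n^{-1-\delta}\gtrsim n(\log n)^{-3-\delta}\to\infty$, whatever $\eta$ is.

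This is not an artefact of a lossy bound; the conclusion itself fails there. Take $p=k=1$ and $x=0$, with the following choices:
\begin{itemize}
\item iid $X_t$ with a $\mathcal{C}^2$ density bounded below near $0$;
\item $\mu\equiv0$ and $\Sigma\equiv1$;
\item iid symmetric $e_t$ with $\P(|e_t|>s)=s^{-r}$ for $s\geqslant1$ and some $r>\max\{4,2+\delta\}$;
\item the Epanechnikov kernel and $b_n=(\log n)^2/n$.
\end{itemize}
Every assumption holds. Yet the largest $|e_{i^*}|$ among the $\asymp n$ indices with $|X_i|\leqslant\epsilon/2$ is of order $n^{1/r}$ in probability. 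At $u=X_{i^*}$ the estimator contains the single term $K(0)e_{i^*}/(nb_n\widehat f_n(u))\asymp n^{1/r}/(\log n)^2$, while all other terms together are $\mathcal{O}_{\P}(1/\log n)$. Hence $\sup_{|u|\leqslant\epsilon}|\widehat\mu_n(u)|\to\infty$ in probability.

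So you must add a bandwidth--moment condition, e.g.\ $n^{1/(2+\delta)}\log n/(nb_n^k)\to0$, and treat the tail differently:
\begin{itemize}
\item Choose $n^{1/(2+\delta)}\ll\tau_n\ll nb_n^k/\log n$.
\item Truncate only the noise $\bm e_t$; the covariate part is bounded because $\mu$ is locally bounded.
\item Discard the random tail on $\{\max_{t\leqslant n}\|\bm e_t\|\leqslant\tau_n\}$, whose complement has probability at most $n\tau_n^{-(2+\delta)}\E\|\bm e_0\|^{2+\delta}\to0$.
\item Bound the recentring term by $C\tau_n^{-(1+\delta)}$ by first conditioning on the covariates. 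The kernel then integrates to $\mathcal{O}(1)$ instead of being bounded crudely by $\|K\|_\infty b_n^{-k}$, which is where your factor $b_n^{-k}$ came from.
\item Apply Bernstein conditionally on the covariate $\sigma$-field at a fixed deviation level. Its exponent $\asymp nb_n^k/\tau_n\gg\log n$ beats the polynomial union bound.
\end{itemize}

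\textbf{Comparison with the paper, and the dependence step.} The paper never truncates. It uses essentially only the second-moment variance bound $\mathcal{O}(1/(nb_n^k))$, Chebyshev, and a union bound over a grid of $M_n\asymp\beta_n^{-k}$ points, interpolating with the Lipschitz constant $Lb_n^{-k-1}$. This needs $\beta_n/b_n^{k+1}\to0$ and $\beta_n^{-k}/(nb_n^k)\to0$ simultaneously, which is possible only if $nb_n^{k(k+2)}\to\infty$. The paper asserts that the stated bandwidth condition guarantees such $\beta_n$, but it does not. Under that stronger condition $nb_n^k\gg n^{2/3}$ outgrows $\max_{t\leqslant n}\|\bm e_t\|=o_{\P}(n^{1/2})$, the heavy-tail issue disappears, and the crude argument suffices. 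Your exponential-inequality route, once repaired, covers a much wider bandwidth range, at the price of a finer treatment of dependence.

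On that point, your fallback rests on a false premise. $\Theta_\infty<\infty$ is only summability of the $\theta_i$, not geometric decay, so blocks of length $\log n$ need not be nearly independent. A workable route is to split the covariate part into two pieces:
\begin{itemize}
\item the bounded martingale differences $K_{b_n}(\bm u-\bm X_i)\mu(\bm X_i)-\E[K_{b_n}(\bm u-\bm X_i)\mu(\bm X_i)\mid\mathcal F_{i-1}]$, to which Freedman's inequality applies directly;
\item a predictable part that is smooth in $\bm u$, has pointwise $L^2$ norm $\mathcal{O}(n^{-1/2})$ by projection bounds through the $\theta_i$, and is Lipschitz in $\bm u$ when $f_{\bm X}(\cdot\mid\mathcal F_0)$ has a bounded gradient.
\end{itemize}
Chebyshev on a coarse grid then handles the predictable part.
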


\begin{proof}[Proof of \Cref{thm:theorem6}]
    Recall that the multivariate Nadaraya--Watson estimator of the conditional mean is given by
    \begin{equation*}
        \widehat{\mu}_n(\bm{u})
        = \frac{N_n(\bm{u})}{D_n(\bm{u})}, \qquad
        N_n(\bm{u}) = \frac{1}{b_n^k}\sum_{t=1}^n K(\bm{u} - \bm{X}_t)\bm{Y}_t, \quad
        D_n(\bm{u}) = \frac{1}{b_n^k}\sum_{t=1}^n K(\bm{u} - \bm{X}_t),
    \end{equation*}
    where $K$ is a bounded Lipschitz kernel with compact (or sufficiently rapidly decaying) support and the usual moment conditions. We introduce the rescaled versions
    \begin{equation*}
        \widetilde{D}_n(\bm{u}) = \frac{D_n(\bm{u})}{n},
        \qquad
        \widetilde{N}_n(\bm{u}) = \frac{N_n(\bm{u})}{n},
    \end{equation*}
    so that $\widehat{\mu}_n(\bm{u}) = \widetilde{N}_n(\bm{u}) / \widetilde{D}_n(\bm{u})$.
    
    We first study $\widetilde{D}_n(\bm{u})$. For fixed $\bm{u}$, by stationarity and the definition of $f_{\bm{X}}$,
    \begin{equation*}
        \E[\widetilde{D}_n(\bm{u})]
        = \frac{1}{b_n^k} \int K\left(\frac{\bm{u} - \bm{x}}{b_n}\right) f_{\bm{X}}(\bm{x}) \, d\bm{x}
        = \int K(\bm{v}) f_{\bm{X}}(\bm{u} - b_n \bm{v}) \, d\bm{v},
    \end{equation*}
    where we use the change of variables $\bm{v} = (\bm{u} - \bm{x})/b_n$. Since $f_{\bm{X}} \in \mathcal{C}^2(\bm{x}^{\epsilon})$ with bounded second derivatives and $K$ has compact support and integrates to one with zero first moments, a Taylor expansion of $f_{\bm{X}}(\bm{u} - b_n \bm{v})$ around $\bm{u}$ together with the kernel moment conditions yields
    \begin{equation}
        \label{eq:D_bias}
        \E[\widetilde{D}_n(\bm{u})]
        = f_{\bm{X}}(\bm{u}) + \mathcal{O}(b_n^2),
    \end{equation}
    where the remainder term is of order $b_n^2$ uniformly over $\{\bm{u} : \|\bm{u} - \bm{x}\| \leqslant \epsilon\}$ for some $\epsilon > 0$. Thus the bias of $\widetilde{D}_n(\bm{u})$ is uniformly $\mathcal{O}(b_n^2)$ on $\bm{x}^{\epsilon}$.
    
    For the variance, write
    \begin{equation*}
        \widetilde{D}_n(\bm{u})
        = \frac{1}{n b_n^k} \sum_{t=1}^n K_{b_n}(\bm{u} - \bm{X}_t)
        = \frac{1}{n} \sum_{t=1}^n Z_{t,n}(\bm{u}),
        \qquad
        Z_{t,n}(\bm{u}) = b_n^{-k} K\!\left( \frac{\bm{u} - \bm{X}_t}{b_n} \right).
    \end{equation*}
    
    Under the weak dependence assumption, $\Lambda_n = \mathcal{O}(n)$, standard covariance bounds imply
    \begin{equation*}
        \text{Var}(\widetilde{D}_n(\bm{u}))
        = \frac{1}{n^2} \sum_{s,t=1}^n \Cov(Z_{s,n}(\bm{u}), Z_{t,n}(\bm{u}))
        = \mathcal{O}\!\left( \frac{1}{n} \E[Z_{0,n}(\bm{u})^2] \right).
    \end{equation*}
    A direct calculation shows that
    \begin{equation*}
        \E[Z_{0,n}(\bm{u})^2]
        = \int b_n^{-2k} K\!\left( \frac{\bm{u} - \bm{x}}{b_n} \right)^2 f_{\bm{X}}(\bm{x}) \, d\bm{x}
        = \mathcal{O}\!\left( \frac{1}{b_n^k} \right),
    \end{equation*}
    uniformly over $\|\bm{u} - \bm{x}\| \leqslant \epsilon$, since $K$ and $f_{\bm{X}}$ are bounded and $K$ has compact support. Hence
    \begin{equation}
        \label{eq:D_var}
        \text{Var}(\widetilde{D}_n(\bm{u}))
        = \mathcal{O}\!\left( \frac{1}{n b_n^k} \right)
    \end{equation}
    uniformly over $\|\bm{u} - \bm{x}\| \leqslant \epsilon$. In particular, for each fixed $\bm{u}$, $\widetilde{D}_n(\bm{u}) - \E[\widetilde{D}_n(\bm{u})] \inprob 0$ as $n \to \infty$.
    
    We now upgrade this pointwise convergence to local uniform convergence over $\bm{x}^{\epsilon}$. Let $\beta_n > 0$ be a sequence such that $\beta_n \to 0$, $\beta_n / b_n^{k+1} \to 0$ and $\beta_n^{-k} / (n b_n^k) \to 0$ as $n \to \infty$. The bandwidth condition
    \begin{equation*}
        b_n + \frac{1}{n b_n^k} + \frac{\log n}{n b_n^k} \to 0
    \end{equation*}
    guarantees that such a choice of $\beta_n$ exists, suppose, $\beta^{-k}_n = (nb_n^k)^{1/2}$. Construct a finite grid $\mathcal{U}_n = \{\bm{u}_{n,1},\dots,$ $\bm{u}_{n,M_n}\}$ $\subset \bm{x}^{\epsilon}$ such that for every $\bm{u}$ with $\|\bm{u} - \bm{x}\| \leqslant \epsilon$ there exists $\bm{u}_{n,j}$ with $\|\bm{u} - \bm{u}_{n,j}\| \leqslant \beta_n$. Then $M_n = \mathcal{O}(\beta_n^{-k})$.

    On the grid points, an exponential inequality for weakly dependent sequences, defined by the physical dependence measure, $\Lambda_n$, applied to the bounded variables $Z_{t,n}(\bm{u}_{n,j}) / n$ together with the variance bound in \eqref{eq:D_var}, Chebyshev, and a union bound over $j = 1,\dots,M_n$ implies that for any fixed $\eta > 0$,
    \begin{equation}
        \label{eq:D_grid}
        \max_{1 \leqslant j \leqslant M_n}
        \big| \widetilde{D}_n(\bm{u}_{n,j}) - \E[\widetilde{D}_n(\bm{u}_{n,j})] \big|
        \inprob 0.
    \end{equation}
    
    To control points between the grid, we use the Lipschitz continuity of $K$. For any $\bm{u},\bm{v} \in \bm{x}^{\epsilon}$ we have
    \begin{equation*}
        \big|K_{b_n}(\bm{u} - \bm{X}_t) - K_{b_n}(\bm{v} - \bm{X}_t)\big|
        = \frac{1}{b_n^k}\left| K\!\left( \frac{\bm{u} - \bm{X}_t}{b_n} \right)
               - K\!\left( \frac{\bm{v} - \bm{X}_t}{b_n} \right) \right|
        \leqslant \frac{1}{b_n^k} \frac{L}{b_n} \|\bm{u} - \bm{v}\|,
    \end{equation*}
    where $L$ is the Lipschitz constant of $K$. Consequently,
    \begin{equation}
        \label{eq:D_Lip}
        \big|\widetilde{D}_n(\bm{u}) - \widetilde{D}_n(\bm{v})\big|
        \leqslant \frac{L}{n b_n^k} \sum_{t=1}^n \frac{\|\bm{u} - \bm{v}\|}{b_n}
        = \frac{L}{b_n^{k+1}} \|\bm{u} - \bm{v}\|.
    \end{equation}
    An analogous bound holds for the expectations $\E[\widetilde{D}_n(\bm{u})]$ by the smoothness of $f_{\bm{X}}$ and the compact support of $K$. Therefore, for any $\bm{u}$ and its nearest grid point $\bm{u}_{n,j}$ with $\|\bm{u} - \bm{u}_{n,j}\| \leqslant \beta_n$,
    \begin{equation*}
        \big|\widetilde{D}_n(\bm{u}) - \widetilde{D}_n(\bm{u}_{n,j})\big|
        + \big|\E[\widetilde{D}_n(\bm{u})] - \E[\widetilde{D}_n(\bm{u}_{n,j})]\big|
        \leqslant C \frac{\beta_n}{b_n^{k+1}} \to 0,
    \end{equation*}
    for some constant $C > 0$ since $|\E[\widetilde{D}_n(\bm{u})]-\widetilde{D}_n(\bm{v})]| = \big|\int K(\bm{z})[f_{\bm{X}}(\bm{u}-b_n\bm{z}) - f_{\bm{X}}(\bm{v}-b_n\bm{z})]d\bm{z} \big| \leqslant \|\nabla f_{\bm{X}}\|_{\infty} \|\bm{u} - \bm{v}\| \int |K(\bm{z})|d\bm{z}$. Combining this with \eqref{eq:D_bias} and \eqref{eq:D_grid} we obtain
    \begin{equation}
        \label{eq:D_uniform}
        \sup_{\|\bm{u} - \bm{x}\| \leqslant \epsilon}
        \big| \widetilde{D}_n(\bm{u}) - f_{\bm{X}}(\bm{u}) \big|
        \inprob 0.
    \end{equation}
    Since $f_{\bm{X}}(\bm{u}) \geqslant c > 0$ for all $\bm{u} \in \bm{x}^{\epsilon}$, \eqref{eq:D_uniform} implies that there exists $c_0 > 0$ such that
    \begin{equation}
        \label{eq:D_lower}
        \inf_{\|\bm{u} - \bm{x}\| \leqslant \epsilon} \widetilde{D}_n(\bm{u})
        \inprob c_0, \qquad c_0 \geqslant c/2 > 0.
    \end{equation}
    
    We now analyze the numerator. By definition,
    \begin{equation*}
        \widetilde{N}_n(\bm{u})
        = \frac{1}{n b_n^k} \sum_{t=1}^n K_{b_n}(\bm{u} - \bm{X}_t)\bm{Y}_t
        = \frac{1}{n} \sum_{t=1}^n H_{t,n}(\bm{u}),
        \qquad
        H_{t,n}(\bm{u}) = b_n^{-k} K\!\left( \frac{\bm{u} - \bm{X}_t}{b_n} \right)\bm{Y}_t.
    \end{equation*}
    Using the definition of the regression function $\mu(\bm{u}) = \E(\bm{Y}_t \mid \bm{X}_t = \bm{u})$, we obtain
    \begin{equation*}
        \E[\widetilde{N}_n(\bm{u})]
        = \frac{1}{b_n^k} \int K_{b_n}(\bm{u} - \bm{x}) \mu(\bm{x}) f_{\bm{X}}(\bm{x}) \, d\bm{x}
        = \int K(\bm{v}) \mu(\bm{u} - b_n \bm{v}) f_{\bm{X}}(\bm{u} - b_n \bm{v}) \, d\bm{v}.
    \end{equation*}
    The product $\mu(\cdot) f_{\bm{X}}(\cdot)$ belongs to $\mathcal{C}^2(\bm{x}^{\epsilon})$ with bounded second derivatives, and $K$ satisfies the same moment conditions as before. A Taylor expansion of $\mu(\bm{u} - b_n \bm{v}) f_{\bm{X}}(\bm{u} - b_n \bm{v})$ around $\bm{u}$ therefore yields
    \begin{equation}
        \label{eq:N_bias}
        \E[\widetilde{N}_n(\bm{u})]
        = \mu(\bm{u}) f_{\bm{X}}(\bm{u}) + \mathcal{O}(b_n^2),
    \end{equation}
    uniformly over $\|\bm{u} - \bm{x}\| \leqslant \epsilon$.
    
    For the variance, the same weak dependence arguments as before give
    \begin{equation*}
        \text{Var}(\widetilde{N}_n(\bm{u}))
        = \mathcal{O}\!\left( \frac{1}{n} \E\|H_{0,n}(\bm{u})\|^2 \right),
    \end{equation*}
    where
    \begin{equation*}
        \E\|H_{0,n}(\bm{u})\|^2
        \leqslant C_1 \int b_n^{-2k} K\!\left( \frac{\bm{u} - \bm{x}}{b_n} \right)^2 \E\|\bm{Y}_0\|^2 f_{\bm{X}}(\bm{x}) \, d\bm{x}
        = \mathcal{O}\!\left( \frac{1}{b_n^k} \right),
    \end{equation*}
    since $\E\|\bm{Y}_0\|^2 < \infty$, $K$ and $f_{\bm{X}}$ are bounded and $K$ has compact support. Hence,
\begin{equation}
    \label{eq:N_var}
    \text{Var}\!\left(\widetilde{N}_n(\bm{u})\right)
    = \mathcal{O}\!\left( \frac{1}{n b_n^k} \right)
\end{equation}
uniformly over $\|\bm{u} - \bm{x}\| \leqslant \epsilon$. For each grid point $\bm{u}_{n,j}$, Chebyshev's inequality together with 
\eqref{eq:N_var} implies 
\begin{equation*}
    \P\!\left(
\big\|\widetilde{N}_n(\bm{u}_{n,j}) -\E[\widetilde{N}_n(\bm{u}_{n,j})]\big\| > \eta \right) \leqslant \frac{C}{\eta^2 n b_n^k},
\end{equation*}
uniformly in $j$. Applying a union bound over $j = 1, \dots, M_n$, we obtain
\begin{equation*}
    \P\!\left( \max_{1 \leqslant j \leqslant M_n} \big\|\widetilde{N}_n(\bm{u}_{n,j})
      - \E[\widetilde{N}_n(\bm{u}_{n,j})]\big\| > \eta \right) \leqslant \frac{C M_n}{\eta^2 n b_n^k}.
\end{equation*}

Since $M_n = \mathcal{O}(\beta_n^{-k})$ and 
$\beta_n^{-k}/(n b_n^k) \to 0$, the right hand side converges to zero. Hence,
\begin{equation*}
    \max_{1 \leqslant j \leqslant M_n} \big\|\widetilde{N}_n(\bm{u}_{n,j}) - \E[\widetilde{N}_n(\bm{u}_{n,j})]\big\| \inprob 0.
\end{equation*}

To extend this convergence from the discrete grid to the entire 
neighborhood $\{\|\bm{u}-\bm{x}\| \leqslant \epsilon\}$, 
we use the Lipschitz continuity of the kernel. 
For any $\bm{u}$ in the neighborhood, there exists a grid point 
$\bm{u}_{n,j}$ with $\|\bm{u}-\bm{u}_{n,j}\| \leqslant \beta_n$, and by kernel smoothness, $\big\|\widetilde{N}_n(\bm{u}) - \widetilde{N}_n(\bm{u}_{n,j})\big\|$ is uniformly negligible as $\beta_n \to 0$. Combining this interpolation step with the grid convergence above yields
\begin{equation}
    \label{eq:N_uniform}
    \sup_{\|\bm{u} - \bm{x}\| \leqslant \epsilon}
    \big\| \widetilde{N}_n(\bm{u})
          - \E[\widetilde{N}_n(\bm{u})] \big\|
    \inprob 0.
\end{equation}

Combining \eqref{eq:N_bias} and \eqref{eq:N_uniform} gives
\begin{equation}
    \label{eq:N_limit}
    \sup_{\|\bm{u} - \bm{x}\| \leqslant \epsilon}
    \big\| \widetilde{N}_n(\bm{u})
          - \mu(\bm{u}) f_{\bm{X}}(\bm{u}) \big\|
    \inprob 0.
\end{equation}    
    
    Finally, we merge the results for numerator and denominator to obtain the desired local uniform convergence of the estimator. For any $\bm{u}$ with $\|\bm{u} - \bm{x}\| \leqslant \epsilon$,
    \begin{equation*}
        \widehat{\mu}_n(\bm{u}) - \mu(\bm{u})
        = \frac{\widetilde{N}_n(\bm{u})}{\widetilde{D}_n(\bm{u})}
          - \frac{\mu(\bm{u}) f_{\bm{X}}(\bm{u})}{f_{\bm{X}}(\bm{u})}
        = A_n(\bm{u}) + B_n(\bm{u}),
    \end{equation*}
    where
    \begin{equation*}
        A_n(\bm{u})
        = \frac{\widetilde{N}_n(\bm{u}) - \mu(\bm{u}) f_{\bm{X}}(\bm{u})}
               {\widetilde{D}_n(\bm{u})},
        \qquad
        B_n(\bm{u})
        = \mu(\bm{u}) \left(
          \frac{f_{\bm{X}}(\bm{u}) - \widetilde{D}_n(\bm{u})}
               {f_{\bm{X}}(\bm{u}) \widetilde{D}_n(\bm{u})}
        \right).
    \end{equation*}
    Taking supremum over $\|\bm{u} - \bm{x}\| \leqslant \epsilon$ and using the boundedness of $\mu$ and $f_{\bm{X}}$ on the compact set $\{\bm{u} : \|\bm{u} - \bm{x}\| \leqslant \epsilon\}$, since $\mu \in \mathcal{C}^2(\bm{u}^{\epsilon})$ thus $\sup_{\bm{u}\in \bm{x}^{\epsilon}}\|\mu (\bm{u})\| < \infty$, together with \eqref{eq:D_uniform}, \eqref{eq:D_lower} and \eqref{eq:N_limit}, we obtain
    \begin{equation*}
        \sup_{\|\bm{u} - \bm{x}\| \leqslant \epsilon}
        \big\| \widehat{\mu}_n(\bm{u}) - \mu(\bm{u}) \big\|
        \inprob 0.
    \end{equation*}
    This proves the local uniform consistency of the multivariate Nadaraya-Watson conditional mean estimator.
\end{proof}

With the result of local uniform convergence of the multivariate mean estimator we proceed to establish the consistency of the sample variance matrix estimate. Let $\bm{x}\in\R^k$ be fixed. Recall that the sample Nadaraya Watson type estimator of the conditional variance matrix is given by
\begin{equation*}
    \widehat{\Sigma}(\bm{x}) = \frac{\sum_{t=1}^{n}K_{b_n}(\bm{x}-\bm{X}_t)(\bm{Y}_t-\widehat{\mu}_n(\bm{X}_t))(\bm{Y}_t-\widehat{\mu}_n(\bm{X}_t))^\top
}{\sum_{t=1}^{n}K_{b_n}(\bm{x}-\bm{X}_t)}
\in \R^{p \times p},
\end{equation*}
where $D_n(\bm{x})=\sum_{t=1}^n K_{b_n}(\bm{x}-\bm{X}_t)$ denotes the denominator. In order to establish the consistency of $\widehat{\Sigma}(\bm{x})$, we first consider the oracle estimator that uses the true multivariate conditional mean $\mu(\bm{X}_t)$,
\begin{equation*}
    \widetilde{\Sigma}(\bm{x}) = \frac{\sum_{t=1}^{n}K_{b_n}(\bm{x}-\bm{X}_t)(\bm{Y}_t-\mu(\bm{X}_t))(\bm{Y}_t-\mu(\bm{X}_t))^\top}{\sum_{t=1}^{n}K_{b_n}(\bm{x}-\bm{X}_t)}.
\end{equation*}

Define $Z_t^{(ij)} = (Y_t^{(i)}-\mu^{(i)}(\bm{X}_t))(Y_t^{(j)}-\mu^{(j)}(\bm{X}_t))$ for $1 \leqslant i$, $j \leqslant p$. Then the $(i,j)$th element of $\widetilde{\Sigma}(\bm{x})$ is
\begin{equation*}
    \widetilde{\Sigma}^{(ij)}(\bm{x}) = \frac{\sum_{t=1}^n K_{b_n}(\bm{x}-\bm{X}_t) Z_t^{(ij)}}{\sum_{t=1}^n K_{b_n}(\bm{x}-\bm{X}_t)}.
\end{equation*}

Define the target function $\Sigma^{(ij)}(\bm{u}) 
= \E[Z_0^{(ij)} \mid \bm{X}_0=\bm{u}]$, and denote the corresponding matrix by $\Sigma(\bm{u})$. By the smoothness assumption, $\Sigma^{(ij)}(\cdot) \in \mathcal{C}^2(\bm{x}^{\epsilon})$, and under the stated moment condition we have $\E |Z_0^{(ij)}|^2 < \infty$. Therefore, by the same Nadaraya--Watson consistency argument used for the multivariate conditional mean under the SRD functional dependence condition $\Theta_\infty < \infty$ and the kernel and bandwidth assumptions, we obtain for each fixed $i$ and $j$,
\begin{equation*}
    \widetilde{\Sigma}^{(ij)}(\bm{x}) \inprob (\Sigma^{(ij))}(\bm{x}),
\qquad n \to \infty.
\end{equation*}

Since $p$ is fixed, element-wise convergence implies convergence in matrix norm, and hence
\begin{equation*}
        \widetilde{\Sigma}(\bm{x}) - \Sigma(\bm{x}) \inprob 0.
\end{equation*}

Next, we compare the sample estimator $\widehat{\Sigma}_n(\bm{x})$ with the oracle estimator $\widetilde{\Sigma}_n(\bm{x})$. Define $\Delta_t = \widehat{\mu}_n(\bm{X}_t) - \mu(\bm{X}_t)$. Then, $(\bm{Y}_t-\widehat{\mu}_n(\bm{X}_t))(\bm{Y}_t-\widehat{\mu}_n(\bm{X}_t))^\top
-(\bm{Y}_t-\mu(\bm{X}_t))(\bm{Y}_t-\mu(\bm{X}_t))^\top
 =
-(\bm{Y}_t-\mu(\bm{X}_t))\Delta_t^\top
-\Delta_t(\bm{Y}_t-\mu(\bm{X}_t))^\top
+\Delta_t \Delta_t^\top
$. Let $T_{1t}=(\bm{Y}_t-\mu(\bm{X}_t))\Delta_t^\top$ and $T_{2t}=\Delta_t \Delta_t^\top$. Then,
\[
\widehat{\Sigma}_n(\bm{x}) - \widetilde{\Sigma}_n(\bm{x})
=
\frac{
\sum_{t=1}^n K_{b_n}(\bm{x}-\bm{X}_t)
\left[-T_{1t}-T_{1t}^\top+T_{2t}\right]
}{
D_n(\bm{x})
}.
\]
Taking a submultiplicative matrix norm and using $\|T_{1t}^\top\|=\|T_{1t}\|$, we obtain
\[
\|\widehat{\Sigma}_n(\bm{x}) - \widetilde{\Sigma}_n(\bm{x})\|
\le
\frac{1}{D_n(\bm{x})}
\sum_{t=1}^n K_{b_n}(\bm{x}-\bm{X}_t)
\left[2\|T_{1t}\|+\|T_{2t}\|\right].
\]

Since the kernel is compactly supported, there exists $C_K>0$ such that $K(\bm{v})=0$ whenever $\|\bm{v}\|>C_K$. Hence, $K_{b_n}(\bm{x}-\bm{X}_t)\neq 0$ implies $\|\bm{X}_t-\bm{x}\|\leqslant C_K b_n$. Fix $\epsilon>0$ and take $n$ sufficiently large so that $C_K b_n \leqslant \epsilon$. Then, on the support of the kernel, $\|\Delta_t\|
= \|\widehat{\mu}_n(\bm{X}_t)-\mu(\bm{X}_t)\|
\leqslant \sup_{\|\bm{u}-\bm{x}\|\leqslant \epsilon} \|\widehat{\mu}_n(\bm{u})-\mu(\bm{u})\| =: A_n$, and by the local uniform consistency of the conditional mean estimator, $A_n \inprob 0$. 

For the term $T_{2t}$, $\|T_{2t}\|=\|\Delta_t \Delta_t^\top\|\leqslant \|\Delta_t\|^2$, and therefore on the kernel support $\|T_{2t}\|\leqslant A_n^2$. It follows,
\begin{equation*}
    \frac{1}{D_n(\bm{x})} \sum_{t=1}^n K_{b_n}(\bm{x}-\bm{X}_t)\|T_{2t}\| \leqslant A_n^2\frac{1}{D_n(\bm{x})} \sum_{t=1}^n K_{b_n}(\bm{x}-\bm{X}_t)=A_n^2 \inprob 0.
\end{equation*}

For the term $T_{1t}$, by submultiplicativity of the norm, $\|T_{1t}\|
\le
\|\bm{Y}_t-\mu(\bm{X}_t)\| \|\Delta_t\|
$. On the kernel support, $\|\Delta_t\| \leqslant A_n$, hence
\begin{equation*}
    \frac{1}{D_n(\bm{x})} \sum_{t=1}^n K_{b_n}(\bm{x}-\bm{X}_t)\|T_{1t}\| \leqslant A_n \frac{1}{D_n(\bm{x})} \sum_{t=1}^n K_{b_n}(\bm{x}-\bm{X}_t) \|\bm{Y}_t-\mu(\bm{X}_t)\|.
\end{equation*}

The ratio on the right-hand side is the Nadaraya--Watson estimator of the scalar regression function $m(\bm{x}) = \E\big[\|\bm{Y}_0-\mu(\bm{X}_0)\|\mid \bm{X}_0=\bm{x}\big]$, which is finite under $\E\|\bm{Y}_0\|^2<\infty$. By the same consistency argument as before, this ratio converges in probability to $m(\bm{x})$. Since $A_n \inprob 0$, Slutsky's theorem implies that
\begin{equation*}
    \frac{1}{D_n(\bm{x})}
\sum_{t=1}^n K_{b_n}(\bm{x}-\bm{X}_t)\|T_{1t}\|
\inprob 0.
\end{equation*}

Combining the bounds for $T_{1t}$ and $T_{2t}$, and using $\frac{D_n(\bm{x})}{n b_n^k}
\inprob f_{\bm{X}}(\bm{x})>0$, so that the denominator is bounded away from zero with probability tending to one, we obtain $\|\widehat{\Sigma}_n(\bm{x}) - \widetilde{\Sigma}_n(\bm{x})\|
\inprob 0$. 

Finally, by the triangle inequality, $\|\widehat{\Sigma}_n(\bm{x})-\Sigma(\bm{x})\|
\le
\|\widehat{\Sigma}_n(\bm{x})-\widetilde{\Sigma}_n(\bm{x})\|
+
\|\widetilde{\Sigma}_n(\bm{x})-\Sigma(\bm{x})\|
$. The first term converges to zero in probability by the above argument, and the second term converges to zero in probability by the oracle consistency. Hence, $\widehat{\Sigma}_n(\bm{x}) - \Sigma(\bm{x}) \inprob 0$, which completes the proof.
\end{proof}

\begin{proof}[Proof of \Cref{thm:theorem3}]

We show that the iteratively reweighted least squares algorithm converges to the unique minimizer of the empirical quantile objective. The argument relies on constructing a quadratic surrogate function that majorizes the objective at each iteration and whose minimizer can be written in closed form. By repeatedly minimizing these surrogates, we generate a sequence of iterates whose objective values decrease monotonically and whose limit equals the true minimizer due to strict convexity.

Recall that for any $\bm{q} \in \R^p$, the empirical objective function is defined as
\begin{equation}
    M^{(p)}_{\bm{u},n}(\bm{q}) = \frac{1}{n} \sum_{t=1}^n \Big( \| \bm{Y}_t - \bm{q} \|_{K(\cdot)} + \langle \bm{u},\, \bm{Y}_t - \bm{q} \rangle_{K(\cdot)} \Big),
    \label{eq:lemma31_obj}
\end{equation}
where $\|v\|_{K(\cdot)} = \| K(\cdot)v \|$ and $\langle a, b \rangle_{K(\cdot)} = a^\intercal K(\cdot)b$. Since $K(\cdot)$ is scalar and nonnegative, the objective in \eqref{eq:lemma31_obj} is a positively weighted geometric quantile criterion. Under the standard nondegeneracy condition that the set $\{\bm{Y}_t:K(\cdot)>0\}$ is not contained in a single affine line (in particular, not all kernel weighted observations are collinear), the function $M_{\bm{u},n}^{(p)}(\bm{q})$ is strictly convex in $\bm{q}$. The linear term in $\bm{u}$ does not affect convexity. Hence the global minimizer exists and is unique.

We now describe the surrogate construction. Let $\bm{q}^{(k)}$ denote the current iterate. For any positive numbers $a$ and $b$, $(a-b)^2 \geqslant 0$, so, $a^2 + b^2 \geqslant 2ab$, which implies, $a^2/2b + b/2 \geqslant a$. The equality holds if and only if $a = b$. Applying this inequality to $a = \| \bm{Y}_t - \bm{q} \|_{K(\cdot)}$ and $b = \| \bm{Y}_t - \bm{q}^{(k)} \|_{K(\cdot)}$, we obtain the upper bound
\begin{equation}
    \|\bm{Y}_t - \bm{q}\|_{K(\cdot)} \leqslant \frac{1}{2} w_t(\bm{q}^{(k)}) \|\bm{Y}_t-\bm{q}\|_{K(\cdot)}^2 + \frac{1}{2} w_t(\bm{q}^{(k)})^{-1},
    \label{eq:lemma31_single_maj}
\end{equation}
where the weights are defined as $ w_t(\bm{q}^{(k)})=\frac{1}{\|\bm{Y}_t-\bm{q}^{(k)}\|_{K(\cdot)}}. $ Summing \eqref{eq:lemma31_single_maj} over $t$ and adding the linear term from \eqref{eq:lemma31_obj}, we obtain the surrogate
\begin{equation}
    \widetilde M^{(p)}_{\bm{u},n}(\bm{q} \mid \bm{q}^{(k)}) = \frac{1}{n} \sum_{t=1}^n \left[ \frac{1}{2} w_t(\bm{q}^{(k)}) \|\bm{Y}_t-\bm{q}\|_{K(\cdot)}^2 + \langle \bm{u}, \bm{Y}_t-\bm{q} \rangle_{K(\cdot)} \right] + C^{(k)},
    \label{eq:lemma31_surrogate}
\end{equation}
where $C^{(k)}$ collects all terms independent of $\bm{q}$. By construction, the surrogate majorizes the objective:
\begin{equation*}
    M^{(p)}_{\bm{u},n}(\bm{q}) \leqslant \widetilde M^{(p)}_{\bm{u},n}(\bm{q} \mid \bm{q}^{(k)}) \quad \forall \; \bm{q},
\end{equation*}
and equality holds at $\bm{q}^{(k)}$, that is, $\widetilde M^{(p)}_{\bm{u},n}(\bm{q}^{(k)} \mid \bm{q}^{(k)}) = M^{(p)}_{\bm{u},n}(\bm{q}^{(k)})$. The surrogate function in \eqref{eq:lemma31_surrogate} is a strictly convex quadratic function in $\bm{q}$, hence its minimizer can be computed explicitly. Expanding the squared term, $\|\bm{Y}_t-\bm{q}\|_{K(\cdot)}^2 = (\bm{Y}_t-\bm{q})^\intercal K(\cdot)^2 (\bm{Y}_t-\bm{q})$, and differentiating the surrogate with respect to $\bm{q}$, the first order optimality condition yields 
\begin{equation*}
    \sum_{t=1}^n w_t(\bm{q}^{(k)})K(\cdot)^2 (\bm{Y}_t-\bm{q}) + \frac{1}{2} \sum_{t=1}^n K(\cdot)\bm{u} = 0.
\end{equation*}
Solving for $\bm{q}$ gives the update expression
\begin{equation}
    \bm{q}^{(k+1)} = 
    \left[ \sum_{t=1}^n w_t(\bm{q}^{(k)}) K(\cdot)^2 \right]^{-1}
    \left(
        \frac{1}{2}\sum_{t=1}^n K(\cdot)\bm{u} 
        + \sum_{t=1}^n w_t(\bm{q}^{(k)}) K(\cdot)^2 \bm{Y}_t
    \right),
    \label{eq:lemma31_update}
\end{equation}
which is exactly the update rule stated in the lemma.

Since $\bm{q}^{(k+1)}$ minimizes the surrogate, we have $    \widetilde M^{(p)}_{\bm{u},n}(\bm{q}^{(k+1)} \mid \bm{q}^{(k)}) \leqslant \widetilde M^{(p)}_{\bm{u},n}(\bm{q}^{(k)} \mid \bm{q}^{(k)})$. Using the majorization relation, $    M^{(p)}_{\bm{u},n}(\bm{q}^{(k+1)}) \leqslant \widetilde M^{(p)}_{\bm{u},n}(\bm{q}^{(k+1)} \mid \bm{q}^{(k)}) \leqslant \widetilde{M}_{\bm{u},n}^{(p)}(\bm{q}^{k} \mid \bm{q}^k) = M^{(p)}_{\bm{u},n}(\bm{q}^{(k)})$, which shows that the sequence of objective values is non increasing. Since the objective is a weighted sum of norms and linear terms with non-negative kernel weights, it is bounded from below. Hence the sequence $\{M_{\bm{u},n}^{(p)}(\bm{q}^{(k)})\}_{k \geqslant 0}$ converges to a finite limit. Moreover, the iterates remain in the level set determined by the initial value. Indeed, $M_{\bm{u},n}^{(p)}(\bm{q}^{(k)}) \leqslant M_{\bm{u},n}^{(p)}(\bm{q}^{(0)}) \, \forall \, k$, so each $\bm{q}^{(k)}$ belongs to $\{ \bm{q} \in \mathbb{R}^p : M_{\bm{u},n}^{(p)}(\bm{q}) \leqslant M_{\bm{u},n}^{(p)}(\bm{q}^{(0)})\}$. 

We now show that this level set is bounded. Using the inequality $\langle \bm{u}, \bm{v} \rangle \geqslant - \|\bm{u}\|\,\|\bm{v}\|$, together with $\|\bm{u}\| < 1$, we obtain $\|\bm{Y}_t - \bm{q}\| + \langle \bm{u}, \bm{Y}_t - \bm{q} \rangle \geqslant (1 - \|\bm{u}\|)\|\bm{Y}_t - \bm{q}\|$. Multiplying by the non-negative kernel weights and summing over $t$, it follows that $M_{\bm{u},n}^{(p)}(\bm{q}) \geqslant c_1 \|\bm{q}\| - c_2$, for some constants $c_1 > 0$ and $c_2 < \infty$ independent of $\bm{q}$. Hence the objective grows at least linearly as $\|\bm{q}\| \to \infty$, and therefore every level set is bounded. In particular, the sequence $\{\bm{q}^{(k)}\}$ is bounded.

Since the sequence is bounded, it admits at least one accumulation point. Let $\bar{\bm{q}}$ be such a point and consider a subsequence $\bm{q}^{(k_j)} \to \bar{\bm{q}}$. By the assumption of the theorem,
$\min_{1\leqslant t\leqslant n}\|\bm{Y}_t-\bar{\bm{q}}\|_{K(\cdot)}\geqslant \delta$.
Since $\bm{q}\mapsto \|\bm{Y}_t-\bm{q}\|_{K(\cdot)}$ is continuous, there exists a neighborhood
$\mathcal{N}(\bar{\bm{q}})$ such that $\|\bm{Y}_t-\bm{q}\|_{K(\cdot)}\geqslant \delta/2$ for all
$\bm{q}\in\mathcal{N}(\bar{\bm{q}})$ and all $t=1,\dots,n$.
Consequently the weights $w_t(\bm{q})=1/\|\bm{Y}_t-\bm{q}\|_{K(\cdot)}$ are continuous on
$\mathcal{N}(\bar{\bm{q}})$, and therefore the update map $T(\bm{q})$ defined by \eqref{eq:lemma31_update}
is continuous at $\bar{\bm{q}}$. Since $\bm{q}^{(k_j+1)}=T(\bm{q}^{(k_j)})$ and $T$ is continuous at $\bar{\bm{q}}$, we have $\bm{q}^{(k_j+1)}\to T(\bar{\bm{q}})$.
But $\{\bm{q}^{(k_j+1)}\}$ is also a subsequence of the iterates, hence it converges to the same accumulation point $\bar{\bm{q}}$.
Therefore $T(\bar{\bm{q}})=\bar{\bm{q}}$, so $\bar{\bm{q}}$ is a fixed point of the surrogate minimization step. By construction, any such fixed point satisfies the stationary condition $\nabla_{\bm{q}} M^{(p)}_{\bm{u},n}(\bar{\bm{q}})=\bm{0}$. Thus every accumulation point is a stationary point of the objective. Because $M_{\bm{u},n}^{(p)}$ is strictly convex, it admits a unique stationary point, which is its global minimizer. Hence the entire sequence $\{\bm{q}^{(k)}\}$ converges to this unique minimizer.

We have established the algorithmic stability and we now are going to check for the analytical stability by checking a score function which similar to the pinball loss function in its application.

We fix the number of response variables $p \in \N$. All constants may depend on $p$ but not on $n$. For a given covariate value $\bm{x} \in \R^k$ and direction $\bm{u} \in B^{p-1}$, recall that the sample objective function is
\begin{equation*}
  M^{(p)}_{\bm{u},n}(q)
  = \frac{1}{n} \sum_{t=1}^n \Phi_{K}\bigl(\bm{u}, \bm{Y}_t - \bm{q}\bigr),
\end{equation*}
where, for each $t \in \Gamma$ and $\bm{q} \in \R^p$,
\begin{equation*}
  \Phi_{K}\bigl(\bm{u}, \bm{Y}_t - \bm{q}\bigr)
  := \bigl\|\bm{Y}_t - \bm{q}\bigr\|_{K(\cdot)}
  + \bigl\langle \bm{u}, \bm{Y}_t - \bm{q} \bigr\rangle_{K(\cdot)}.
\end{equation*}

As defined earlier, $\bigl\|\bm{Y}_t - \bm{q}\bigr\|_{K(\cdot)} = \bigl\|K(\cdot)(\bm{Y}_t - \bm{q})\bigr\|$. Since, $K(\cdot) \geqslant 0$, we can write, $\bigl\|\bm{Y}_t - \bm{q}\bigr\|_{K(\cdot)} = \big\|K(\cdot)\big\|\bigl\|(\bm{Y}_t - \bm{q})\bigr\|$. The euclidean norm of $(\bm{Y}_t - \bm{q})$ is strictly convex with given $t$ and $p$ in $\bm{q}$. The inner product is linear so the sum of convex and linear results in a convex function.

The sample criterion $M^{(p)}_{\bm{u},n}$ is therefore a finite sum of strictly convex functions and is itself strictly convex in $\bm{q}$. We restrict the optimization to the open convex set $U_p \subset \R^p$ and that the true minimizer does not coincide with any of the observed points $\bm{Y}_t$.

Let $\widehat{\bm{q}}_n(\bm{u},\bm{x})$ be the unique minimizer of $M^{(p)}_{\bm{u},n}(\bm{q})$ over $U_p$. By strict convexity, $\widehat{\bm{q}}_n$ is characterized by the vanishing of the directional derivative in every direction $\bm{v} \in \R^p$. More precisely, for every $\bm{v} \in \R^p$
\begin{equation}
  \label{eq:directional-ineq}
  \lim_{\varepsilon \to 0}
  \frac{1}{n \varepsilon} \sum_{i=1}^n
  \Bigl[
    \Phi_{K}\bigl(\bm{u}, \bm{Y}_t - \widehat{\bm{q}}_n - \varepsilon \bm{v}\bigr)
    - \Phi_{K}\bigl(\bm{u}, \bm{Y}_t - \widehat{\bm{q}}_n\bigr)
  \Bigr]
  \geqslant 0.
\end{equation}
We now compute this directional derivative explicitly. Using the definition of $\Phi_{K}$, we can rewrite the increment as
\begin{equation*}
  \Phi_{K}\bigl(\bm{u}, \bm{Y}_t - \widehat{\bm{q}}_n - \varepsilon \bm{v}\bigr)
  - \Phi_{K}\bigl(\bm{u}, \bm{Y}_t - \widehat{\bm{q}}_n\bigr)
  =
  \bigl\|\bm{Y}_t - \widehat{\bm{q}}_n - \varepsilon \bm{v}\bigr\|_{K(\cdot)}
  - \bigl\|\bm{Y}_t - \widehat{\bm{q}}_n\bigr\|_{K(\cdot)}
  - \varepsilon \bigl\langle \bm{u}, \bm{v} \bigr\rangle_{K(\cdot)}.
\end{equation*}
Using the standard derivative of the norm, we obtain, for each $t$ and any $\bm{v} \in \R^p$,
\begin{equation*}
  \lim_{\varepsilon \to 0}
  \frac{\bigl\|\bm{Y}_t - \widehat{\bm{q}}_n - \varepsilon \bm{v}\bigr\|_{K(\cdot)}
        - \bigl\|\bm{Y}_t - \widehat{\bm{q}}_n\bigr\|_{K(\cdot)}
       }{\varepsilon}
  =
  - \frac{\bigl\langle \bm{Y}_t - \widehat{\bm{q}}_n, \bm{v} \bigr\rangle_{K(\cdot)}}
         {\bigl\|\bm{Y}_t - \widehat{\bm{q}}_n\bigr\|_{K(\cdot)}}.
\end{equation*}
Therefore, taking limits in \eqref{eq:directional-ineq}, using linearity of the sum and multiplying $-1$ both sides, we obtain
\begin{equation*}
  \frac{1}{n} \sum_{t=1}^n
  \left[
    \frac{\bigl\langle \bm{Y}_t - \widehat{\bm{q}}_n, \bm{v} \bigr\rangle_{K(\cdot)}}
            {\bigl\|\bm{Y}_t - \widehat{\bm{q}}_n\bigr\|_{K(\cdot)}}
    + \bigl\langle \bm{u}, \bm{v} \bigr\rangle_{K(\cdot)}
  \right]
  \leqslant 0,
  \qquad \forall \; \bm{v} \in \R^p.
\end{equation*}
Applying the same argument with direction $-\bm{v}$ instead of $\bm{v}$ yields the opposite inequality, and therefore
\begin{equation*}
  \frac{1}{n} \sum_{t=1}^n
  \left[
    \frac{\bigl\langle \bm{Y}_t - \widehat{\bm{q}}_n, \bm{v} \bigr\rangle_{K(\cdot)}}
            {\bigl\|\bm{Y}_t - \widehat{\bm{q}}_n\bigr\|_{K(\cdot)}}
    + \bigl\langle \bm{u}, \bm{v} \bigr\rangle_{K(\cdot)}
  \right]
  = 0,
  \qquad \forall \; \bm{v} \in \R^p.
\end{equation*}
Since the inner product $\langle \cdot,\cdot \rangle_{K(\cdot)}$ is nondegenerate, we obtain the sample estimating equation
\begin{equation*}
  \frac{1}{n} \sum_{i=1}^n
  K(\cdot)
  \left[
    \frac{K(\cdot)\bigl(\bm{Y}_t - \widehat{\bm{q}}_n(\bm{u},\bm{x})\bigr)}
         {\bigl\|K(\cdot)\bigl(\bm{Y}_t - \widehat{\bm{q}}_n(\bm{u},\bm{x})\bigl\|}
    + \bm{u}
  \right]
  = 0.
\end{equation*}

We now derive the corresponding population version. Recall that, for fixed $\bm{u},\bm{x}$ and $p$, the theoretical conditional $\bm{u}$-quantile $\bm{Q}(\bm{u},\bm{x})$ is defined as the unique minimizer of
\begin{equation*}
  M^{(p)}_{\bm{u}}(\bm{q})
  := \E\!\left[
    \bigl\|\bm{Y}_t - \bm{q}\bigr\|
    + \bigl\langle \bm{u}, \bm{Y}_t - \bm{q} \bigr\rangle
    \,\bigg|\, \bm{X}_t = \bm{x}
  \right],
\end{equation*}
where the expectation is taken with respect to the conditional distribution of $\bm{Y}_t$ given $\bm{X}_t=\bm{x}$. The same convexity arguments as above apply at the population level: By the strict convexity of the norm and linearity of the inner product, the function, $M^{(p)}_{u}(\bm{q})$, is strictly convex, and $\bm{Q}(u,x)$ is characterized by the vanishing of its directional derivatives. For any direction $\bm{v} \in \R^p$, we have
\begin{equation*}
  \lim_{\varepsilon \to 0}
  \frac{M^{(p)}_{\bm{u}}\bigl(\bm{Q}(\bm{u},\bm{x}) + \varepsilon \bm{v}\bigr)
        - M^{(p)}_{\bm{u}}\bigl(\bm{Q}(\bm{u},\bm{x})\bigr)
       }{\varepsilon}
  = \E\!\left[
    \
      \frac{\langle \bm{Y}_t - \bm{Q}(\bm{u},\bm{x}), \bm{v} \rangle}{\bigl\|\bm{Y}_t - \bm{Q}(\bm{u},\bm{x})\bigr\|}
      + \langle \bm{u},
      \bm{v} \rangle
    \bigg| \bm{X}(t) = \bm{x}
  \right]
  = 0.
\end{equation*}
This holds for all $\bm{v} \in \R^p$ and inner product is nondegenerate so we obtain the population estimating equation
\begin{equation*}
  \E\!\left[
    \frac{\bm{Y}_t - \bm{Q}(u,x)}{\bigl\|\bm{Y}_t - \bm{Q}(u,x)\bigr\|}
    + u
    \,\bigg|\, X(t) = x
  \right]
  = 0.
\end{equation*}

We have shown that, for fixed $p$, the minimizer $\widehat{\bm{q}}_n(u,x)$ and $\bm{Q}(u,\bm{x})$ satisfy the desired sample and population estimating equations. This completes the proof.
\end{proof}

\begin{proof}[Proof of \Cref{thm:theorem5}]

We establish the consistency and rate of convergence of the nonparametric multivariate conditional quantile estimator when the response dimension $p$ is fixed. For notational ease we fix $(\bm{u},\bm{x})$ and write
\begin{equation*}
    \widehat{\bm{q}}_n \equiv \widehat{\bm{q}}_n(\bm{u},\bm{x}), 
    \qquad 
    \bm{q}_0 \equiv \bm{Q}(\bm{u},\bm{x}).
\end{equation*}
The empirical and population objective functions are
\begin{equation*}
    M^{(p)}_{u,n}(\bm{q}) 
      = \frac{1}{nb_n^k}\sum_{t=1}^n 
      \{
        \|\bm{Y}_t-\bm{q}\|_{K(\cdot)}
        + \langle \bm{u},\bm{Y}_t-\bm{q} \rangle_{K(\cdot)}
      \};
     \quad
    M^{(p)}_{u}(\bm{q})
      = \E\!\left[
         \|\bm{Y}_t-\bm{q}\|
         + \langle \bm{u},\bm{Y}_t-\bm{q}\rangle
         \,|\,
         \bm{X}_t=\bm{x}
      \right],
\end{equation*}
with $\|\bm{v}\|_{K(\cdot)}=\|K(\cdot)\bm{v}\|$ and $\langle \bm{a},\bm{b}\rangle_{K(\cdot)}=\bm{a}^\intercal K(\cdot)\bm{b}$.

Under the previously stated nondegeneracy condition that the kernel weighted observations are not contained in a single affine subspace, the population objective $M^{(p)}_u$ is strictly convex on an open convex set $U_p \subset \R^p$ and admits a unique minimizer $\bm{q}_0$.

For fixed $\bm{q}\in U_p$, by the law of large numbers,
\begin{equation*}
    M^{(p)}_{u,n}(\bm{q})
    = \frac{1}{n}\sum_{t=1}^n \varphi(\bm{Y}_t,\bm{X}_t;\bm{q})
    \longrightarrow
    M^{(p)}_u(\bm{q})
    \qquad\text{a.s.},
    \label{eq:pointwiseLLN}
\end{equation*}
where
\begin{equation*}
    \varphi(\bm{Y}_t,\bm{X}_t;\bm{q})
    := \big\|\bm{Y}_t-\bm{q}\big\|_{K(\cdot)}
       + \langle \bm{u},\bm{Y}_t-\bm{q}\rangle_{K(\cdot)}.
\end{equation*}
On any compact convex $\mathcal{Q}_p\subset U_p$ containing $\bm{q}_0$,
\begin{equation}
    \sup_{\bm{q}\in \mathcal{Q}_p}
    \big|M^{(p)}_{u,n}(\bm{q}) - M^{(p)}_u(\bm{q})\big|
    \inprob 0,
    \label{eq:ULLN}
\end{equation}
because the map $\bm{q}\mapsto \varphi(\cdot;\bm{q})$ is Lipschitz on $\mathcal{Q}_p$ with integrable envelope. Since $M^{(p)}_u$ is strictly convex, there exists $\eta(\epsilon)>0$ such that
\begin{equation}
    \inf_{\|\bm{q}-\bm{q}_0\|\geqslant \epsilon}
    \big(M^{(p)}_u(\bm{q}) - M^{(p)}_u(\bm{q}_0)\big)
    \geqslant \eta(\epsilon).
    \label{eq:well-separated}
\end{equation}
Combining \eqref{eq:ULLN} and \eqref{eq:well-separated} yields
\begin{equation}
    \widehat{\bm{q}}_n \inprob \bm{q}_0.
    \label{eq:consistencyQhat}
\end{equation}

The minimizer satisfies from the previous result,
\begin{equation*}
    \frac{1}{n}\sum_{t=1}^n \bm{\psi}_{t}(\widehat{\bm{q}}_n)=\bm{0},
    \label{eq:estimatingEq}
\end{equation*}
where
\begin{equation*}
    \bm{\psi}_{t}(\bm{q})
    := K(\cdot)\!\left[
      \frac{K(\cdot)\{\bm{Y}_t-\bm{q}\}}
           {\|K(\cdot)\{\bm{Y}_t-\bm{q}\}\|}
      + \bm{u}
    \right].
\end{equation*}
Moreover $\E[\bm{\psi}_{t}(\bm{q}_0)]=\bm{0}$.  
Let $\dot{\bm{\psi}}_{t}(\bm{q})$ denote the Jacobian. A Taylor expansion around $\bm{q}_0$ gives
\begin{equation}
    \bm{\psi}_{t}(\widehat{\bm{q}}_n)
    = \bm{\psi}_{t}(\bm{q}_0)
      + \dot{\bm{\psi}}_{t}(\bm{q}_0)(\widehat{\bm{q}}_n-\bm{q}_0)
      + \bm{r}_{t,n},
    \label{eq:taylorPsi}
\end{equation}
where
\begin{equation}
    \|\bm{r}_{t,n}\|
    \leqslant C \|\widehat{\bm{q}}_n - \bm{q}_0\|^2.
    \label{eq:remainderBound}
\end{equation}
Averaging \eqref{eq:taylorPsi} yields
\begin{equation}
    \bm{0}
    = \frac{1}{n}\sum_{t=1}^n \bm{\psi}_{t}(\bm{q}_0)
      + \left\{
        \frac{1}{n}\sum_{t=1}^n \dot{\bm{\psi}}_{t}(\bm{q}_0)
      \right\}(\widehat{\bm{q}}_n - \bm{q}_0)
      + \frac{1}{n}\sum_{t=1}^n \bm{r}_{t,n}.
    \label{eq:avgExpansion}
\end{equation}
Define the population Jacobian
\begin{equation*}
    \bm{\Psi}(\bm{q}_0)
    := \E[\dot{\bm{\psi}}_{t}(\bm{q}_0)].
\end{equation*}
By strict convexity, $\bm{\Psi}(\bm{q}_0)$ is positive definite. A law of large numbers implies
\begin{equation*}
    \frac{1}{n}\sum_{t=1}^n \dot{\bm{\psi}}_{t}(\bm{q}_0)
    \inprob \bm{\Psi}(\bm{q}_0),
    \label{eq:jacobianLLN}
\end{equation*}
so the inverse exists with probability approaching one.  
Since $\bm{\psi}_{t}(\bm{q}_0)$ has finite second moments and under small range dependence,
\begin{equation}
    \left\|
      \frac{1}{n}\sum_{t=1}^n \bm{\psi}_{t}(\bm{q}_0)
    \right\|
    = \mathcal{O}p(\sqrt{p/nb_n^k}).
    \label{eq:psiOp}
\end{equation}
By \eqref{eq:remainderBound} and \eqref{eq:consistencyQhat},
\begin{equation}
    \left\|
      \frac{1}{n}\sum_{t=1}^n \bm{r}_{t,n}
    \right\|
    = o_{p}(\sqrt{p/nb_n^k}).
    \label{eq:remainderAvg}
\end{equation}

Pre-multiplying \eqref{eq:avgExpansion} by the inverse of $n^{-1}\sum_{t=1}^n \dot{\bm{\psi}}_{t}(\bm{q}_0)$ yields
\begin{equation*}
    \|\widehat{\bm{q}}_n - \bm{q}_0\|
    \leqslant 
    C_1
    \left\|
      \frac{1}{n}\sum_{t=1}^n \bm{\psi}_{t}(\bm{q}_0)
    \right\|
    +
    C_2
    \left\|
      \frac{1}{n}\sum_{t=1}^n \bm{r}_{t,n}
    \right\|.
\end{equation*}
By \eqref{eq:psiOp} and \eqref{eq:remainderAvg},
\begin{equation*}
    \|\widehat{\bm{q}}_n - \bm{q}_0\|
    = \mathcal{O}p(\sqrt{p/nb_n^k}).
\end{equation*}

This completes the proof.
\end{proof}


\setcounter{figure}{0}
\renewcommand\thefigure{B.\arabic{figure}}

\newpage

\section{Additional plots from the real data analysis}
\label{sec:Plots}

\begin{figure}[h]
\centering
  \begin{subfigure}{.33\textwidth}
    \includegraphics[width=.9\linewidth]{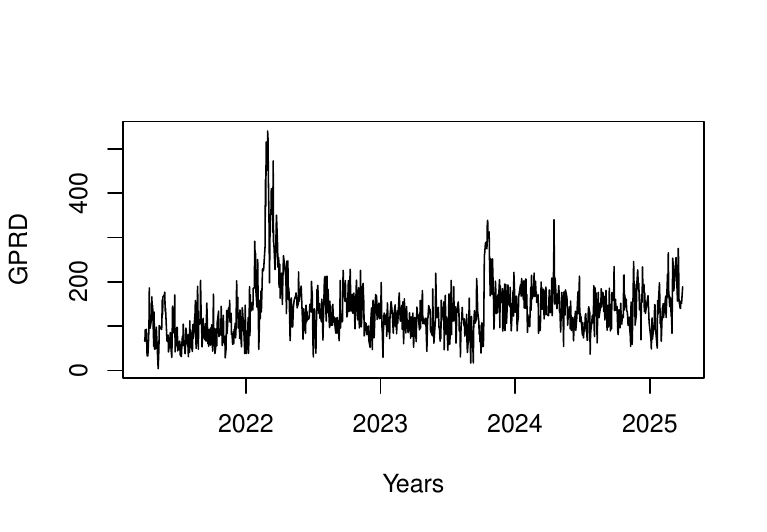}
  \end{subfigure}%
  \begin{subfigure}{.33\textwidth}
    \includegraphics[width=.9\linewidth]{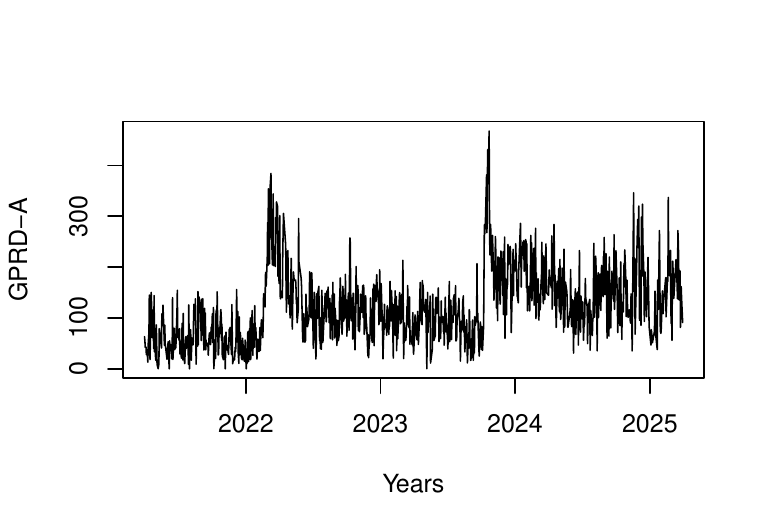}
  \end{subfigure}%
   \begin{subfigure}{.33\textwidth}
    \includegraphics[width=.9\linewidth]{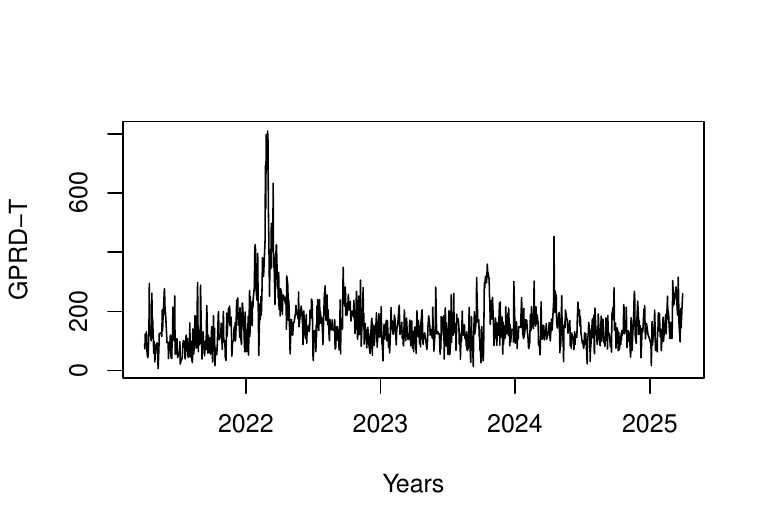}
  \end{subfigure}
\caption{Geopolitical Risk Indices (GPRD, GPRD-A, GPRD-T) against Time}
\label{fig:GPRD_GPRDA_GPRDT_vs_Times}
\end{figure}

\begin{figure}[!ht]
    \centering
    \begin{subfigure}{0.4\textwidth}
        \centering
        \includegraphics[width=\linewidth]{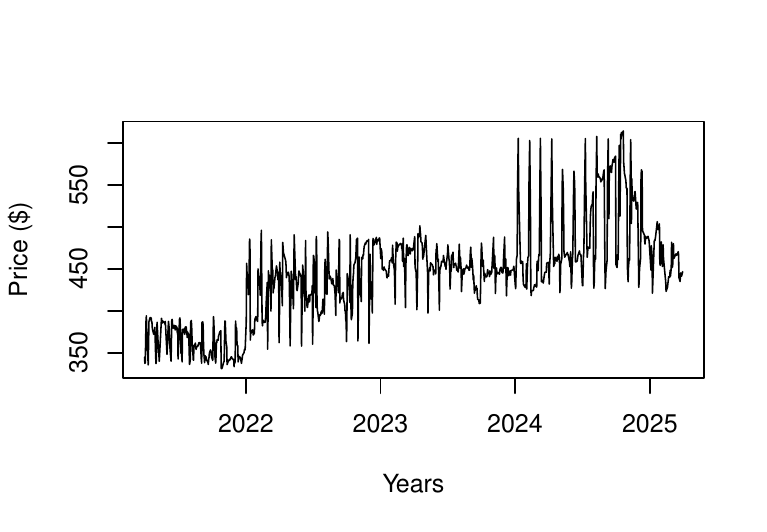}
        \label{fig:sp_lr_vol_LHMnMMA a}
    \end{subfigure}
    \hfill
    \begin{subfigure}{0.4\textwidth}
        \centering
        \includegraphics[width=\linewidth]{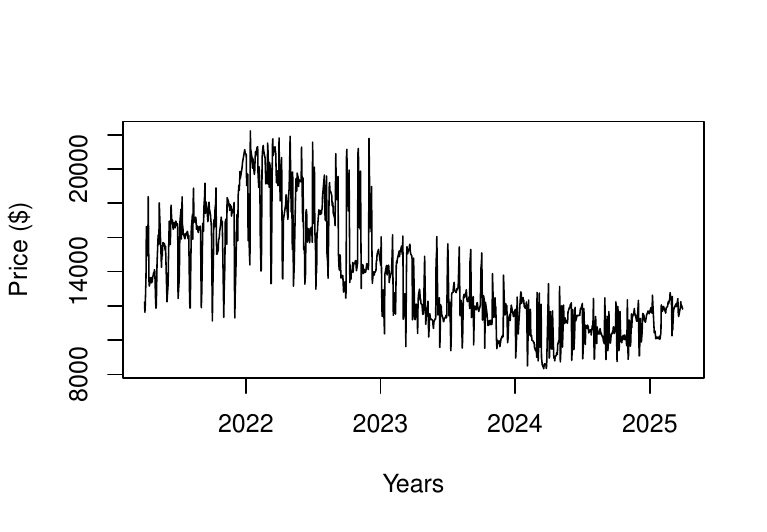}
        \label{fig:sp_lr_vol_LHMnMMA b}
    \end{subfigure}
    \begin{subfigure}{0.4\textwidth}
        \centering
        \includegraphics[width=\linewidth]{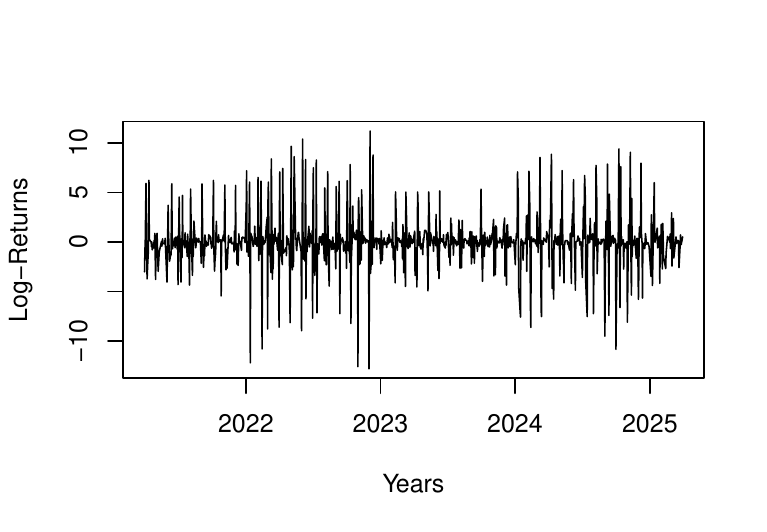}
        \label{fig:sp_lr_vol_LHMnMMA c}
    \end{subfigure}
    \hfill
    \begin{subfigure}{0.4\textwidth}
        \centering
        \includegraphics[width=\linewidth]{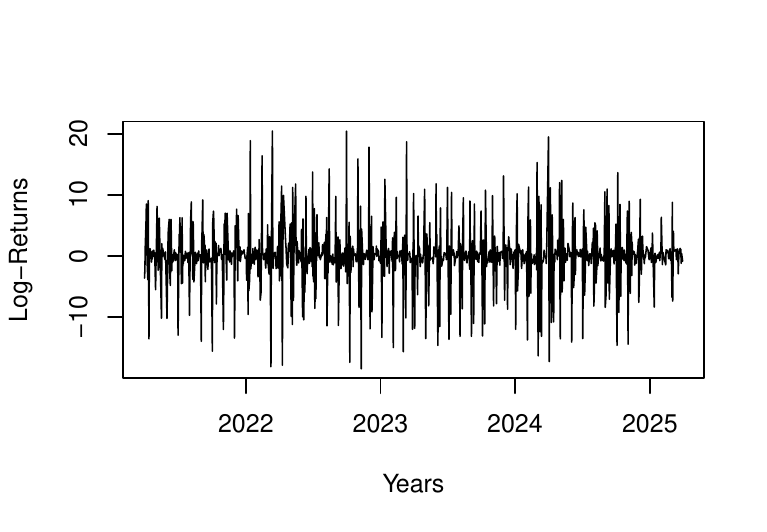}
        \label{fig:sp_lr_vol_LHMnMMA d}
    \end{subfigure}
    \begin{subfigure}{0.41\textwidth}
        \centering
        \includegraphics[width=\linewidth]{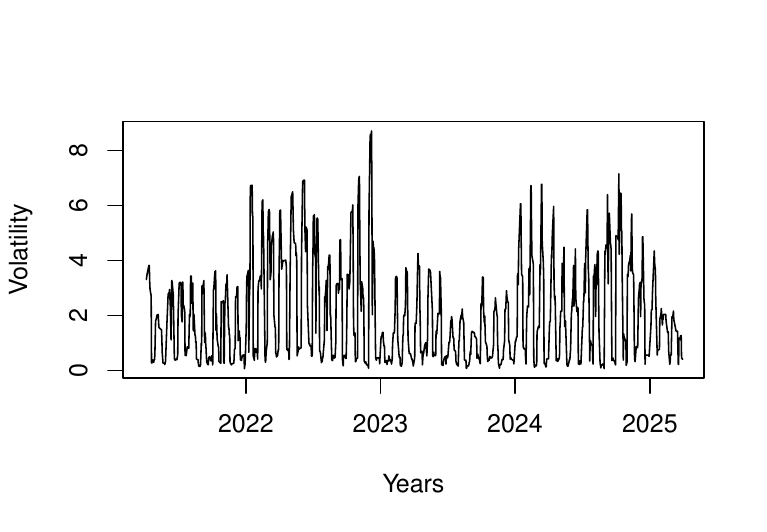}
        \label{fig:sp_lr_vol_LHMnMMA e}
    \end{subfigure}
    \hfill
    \begin{subfigure}{0.41\textwidth}
        \centering
        \includegraphics[width=\linewidth]{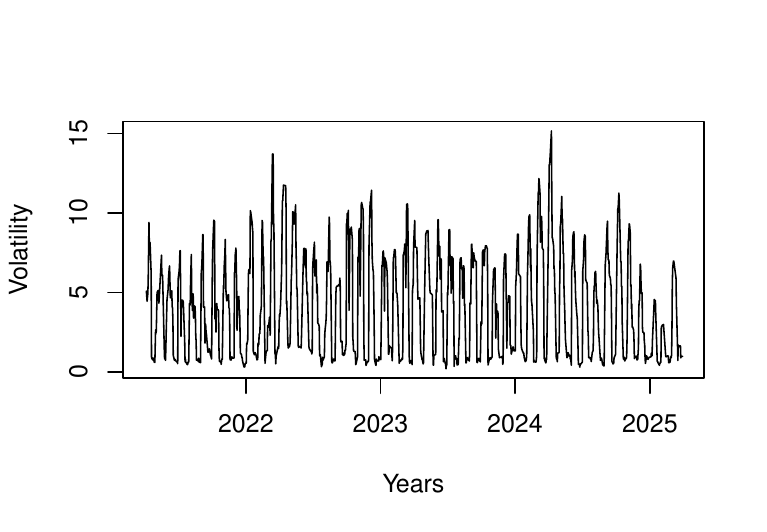}
        \label{fig:sp_lr_vol_LHMnMMA f}
    \end{subfigure}
    \caption{Stock prices, log-returns and volatility of LHM (left) and MMA (right) against time}
\label{fig:sp_lr_vol_LHMnMMA}
\end{figure}

\begin{figure}[!]
    \centering
    \begin{subfigure}[t]{0.32\textwidth}
        \centering
        \includegraphics[width=\textwidth]{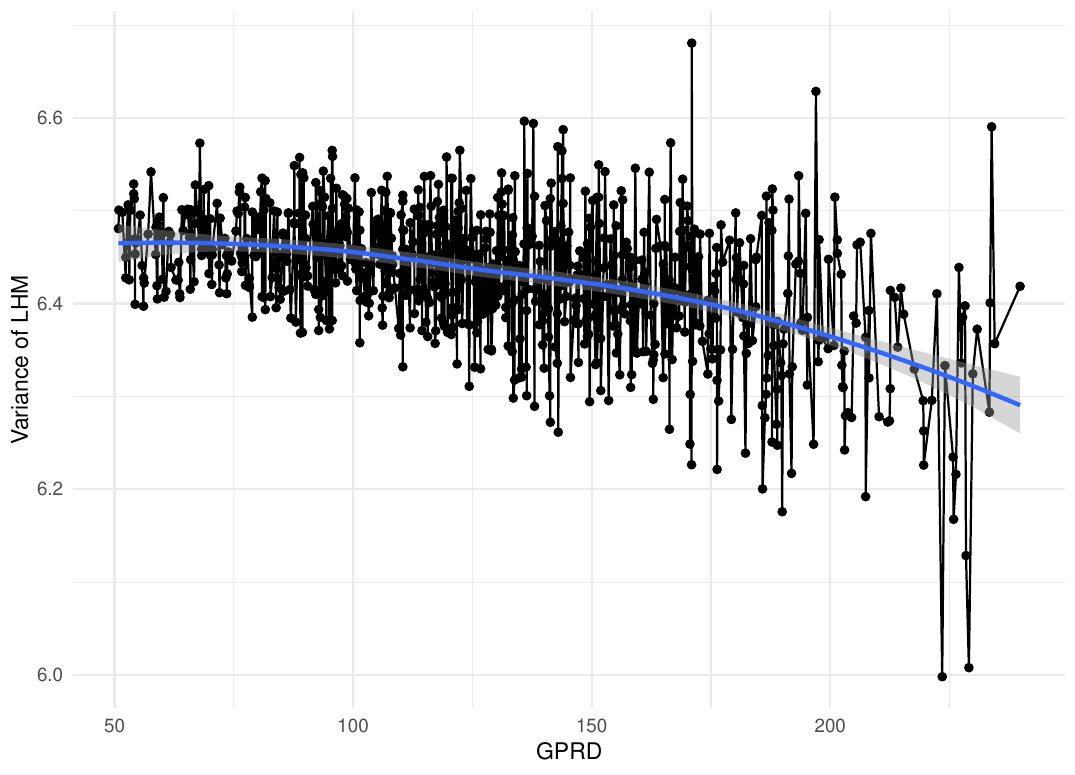}
        \caption{Var(LHM) - GPRD}
        \label{fig:Var_Cov_GPR_Time a}
        \vspace{0.3cm}
        
        \includegraphics[width=\textwidth]{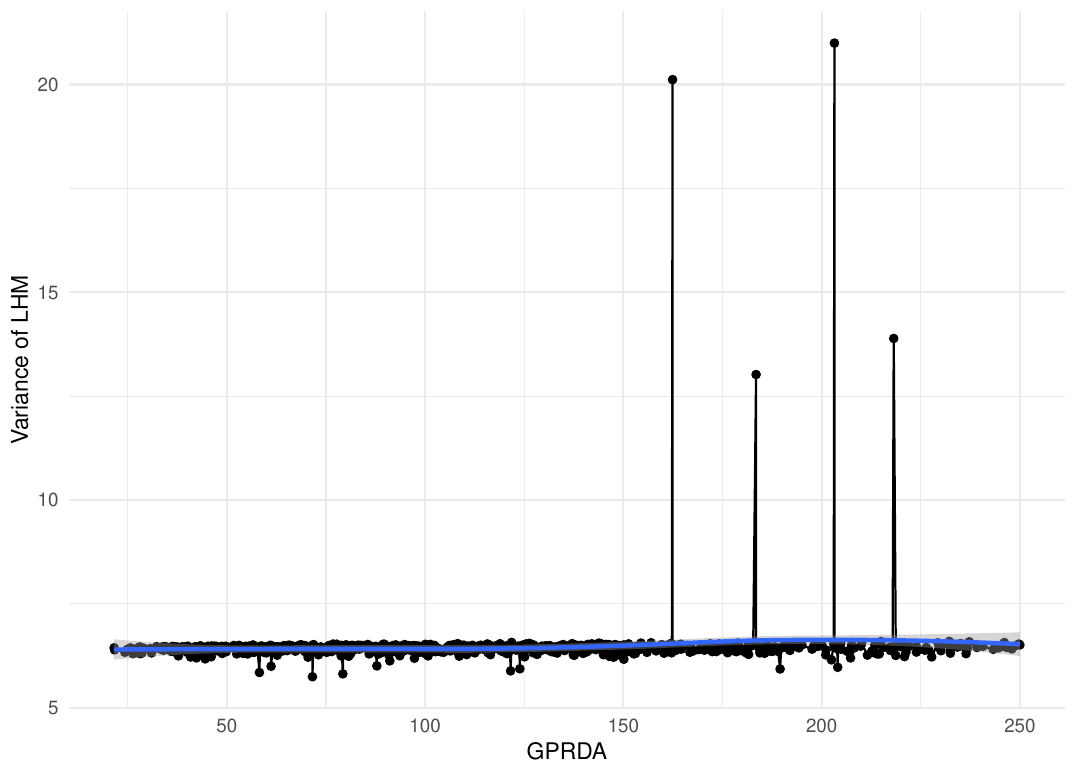}
        \caption{Var(LHM) - GPRDA}
        \label{fig:Var_Cov_GPR_Time b}
        \vspace{0.3cm}
        
        \includegraphics[width=\textwidth]{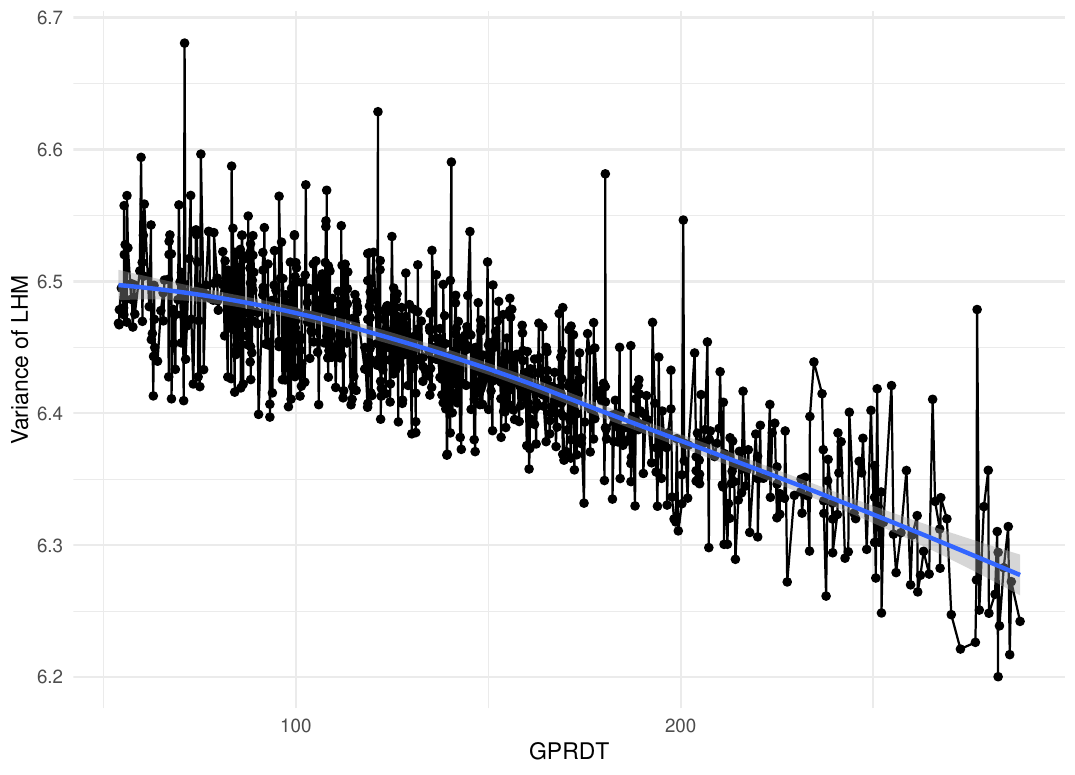}
        \caption{Var(LHM) - GPRDT}
        \label{fig:Var_Cov_GPR_Time c}
        \vspace{0.3cm}
        
        \includegraphics[width=\textwidth]{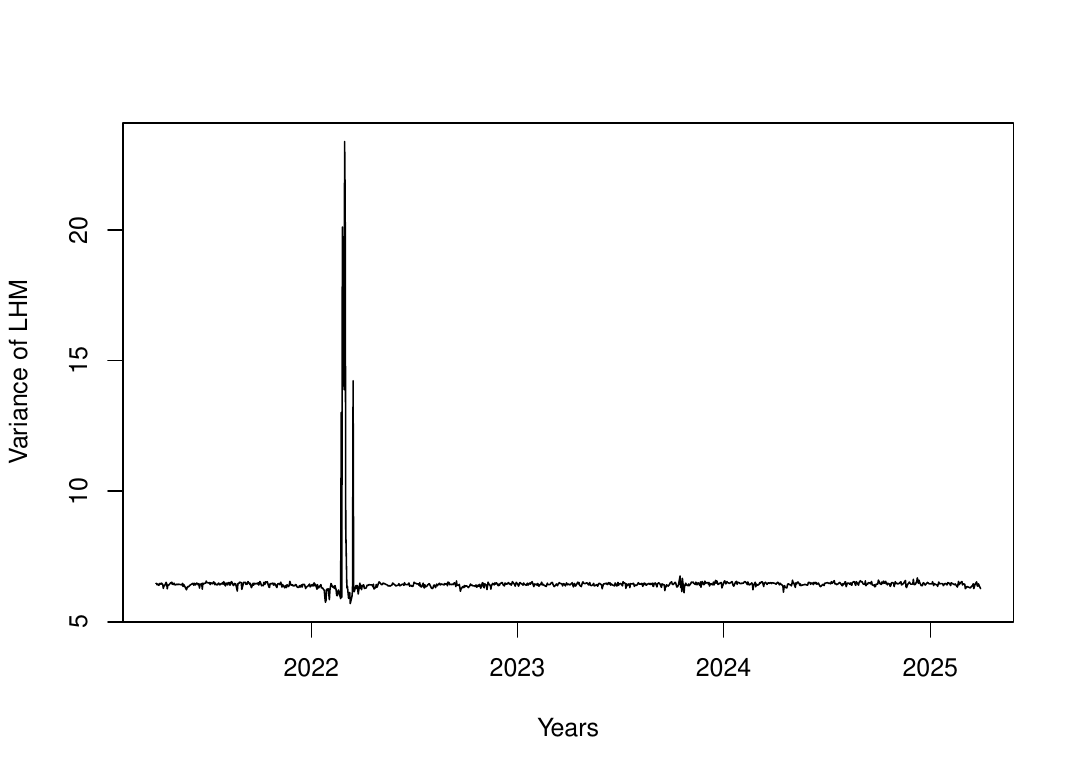}
        \caption{Var(LHM) - Time}
        \label{fig:Var_Cov_GPR_Time d}
    \end{subfigure}
    \hfill
    \begin{subfigure}[t]{0.32\textwidth}
        \centering
        \includegraphics[width=\textwidth]{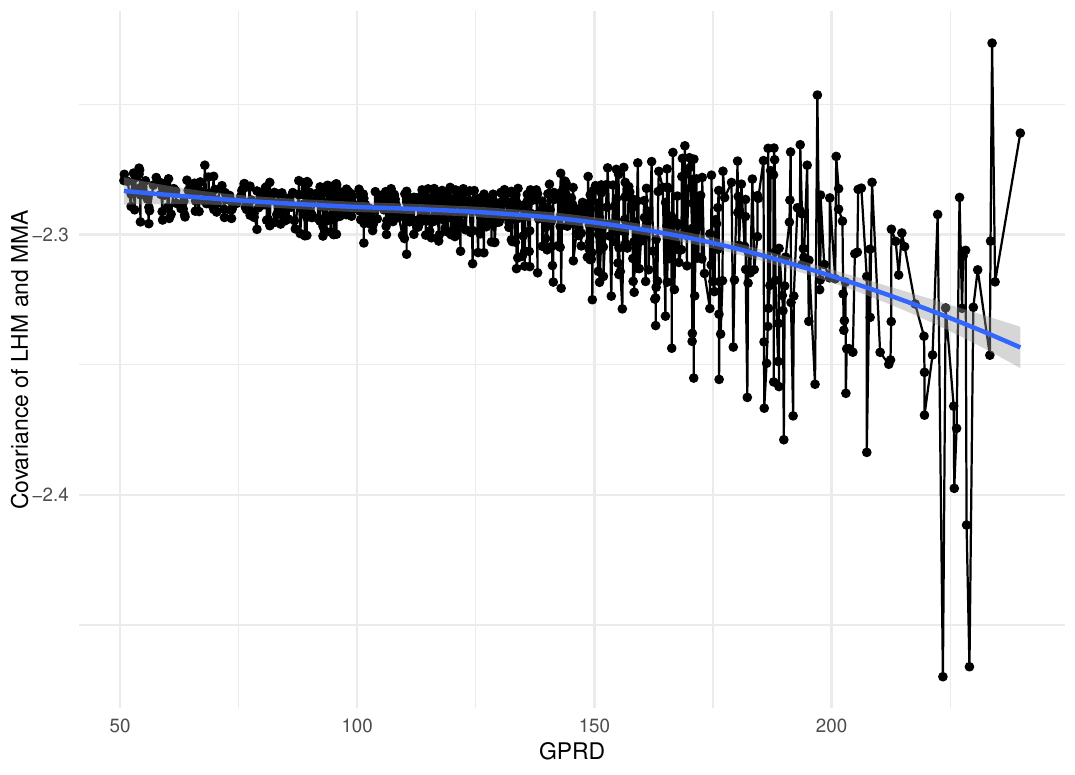}
        \caption{Cov(LHM, MMA) - GPRD}
        \label{fig:Var_Cov_GPR_Time e}
        \vspace{0.3cm}
        
        \includegraphics[width=\textwidth]{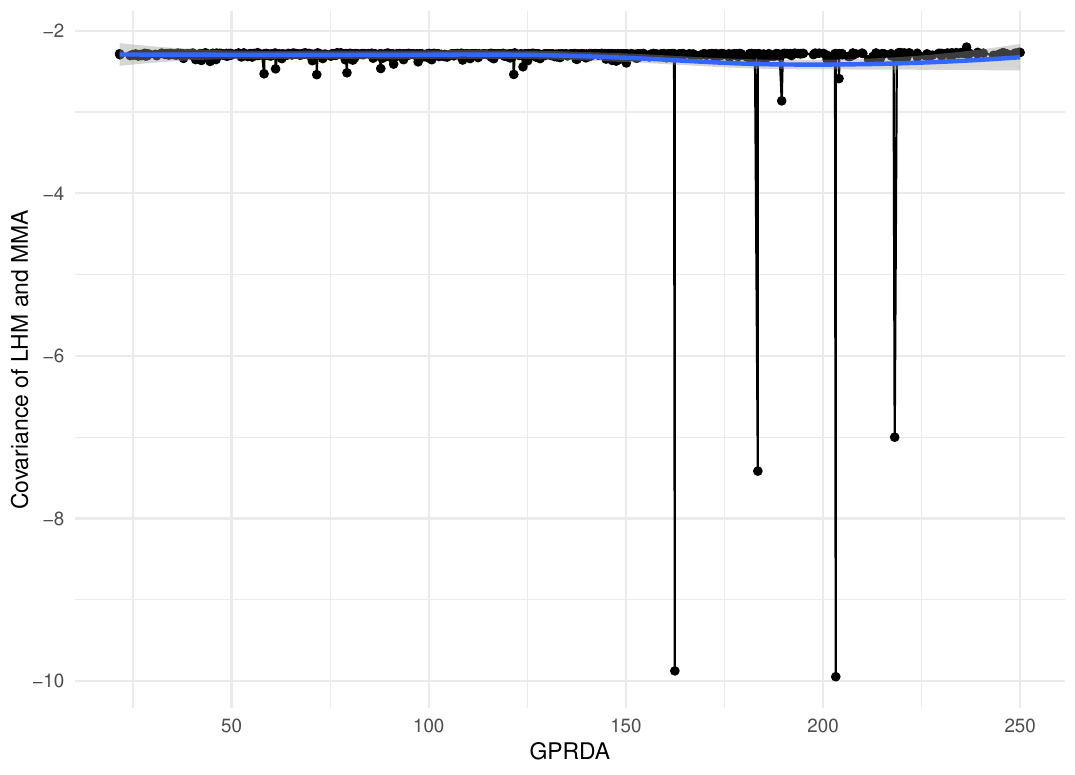}
        \caption{Cov(LHM, MMA) - GPRDA}
        \label{fig:Var_Cov_GPR_Time f}
        \vspace{0.3cm}
        
        \includegraphics[width=\textwidth]{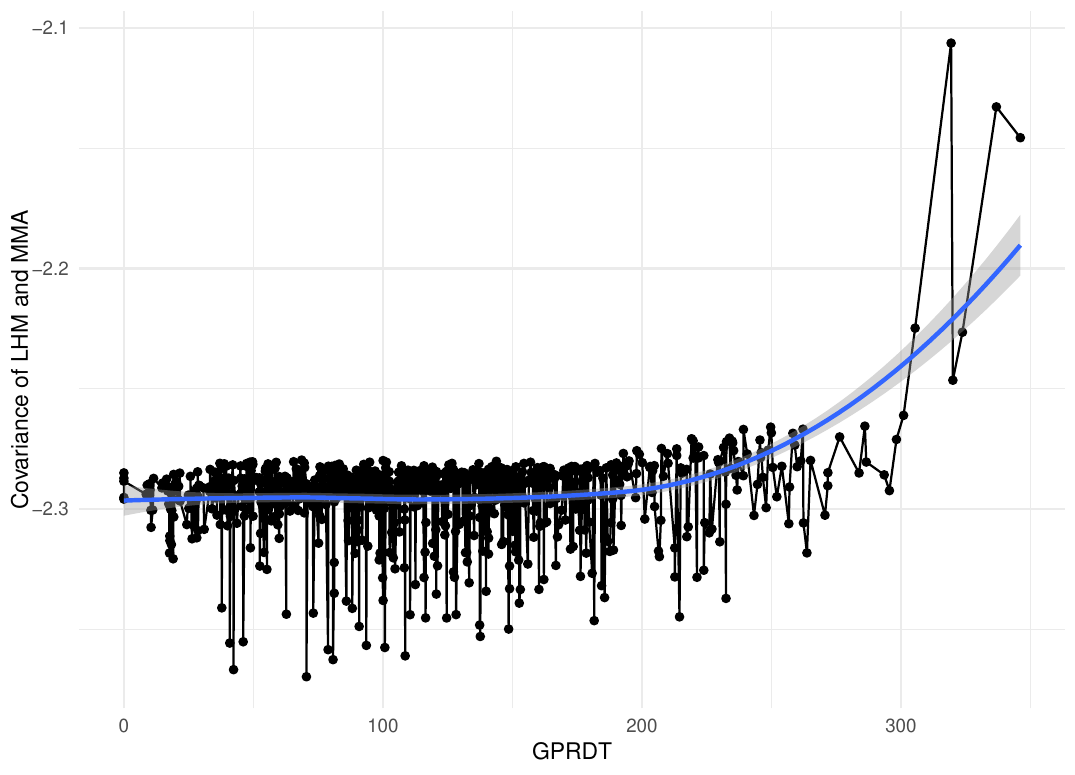}
        \caption{Cov(LHM, MMA) - GPRDT}
        \label{fig:Var_Cov_GPR_Time g} 
        \vspace{0.3cm}
        
        \includegraphics[width=\textwidth]{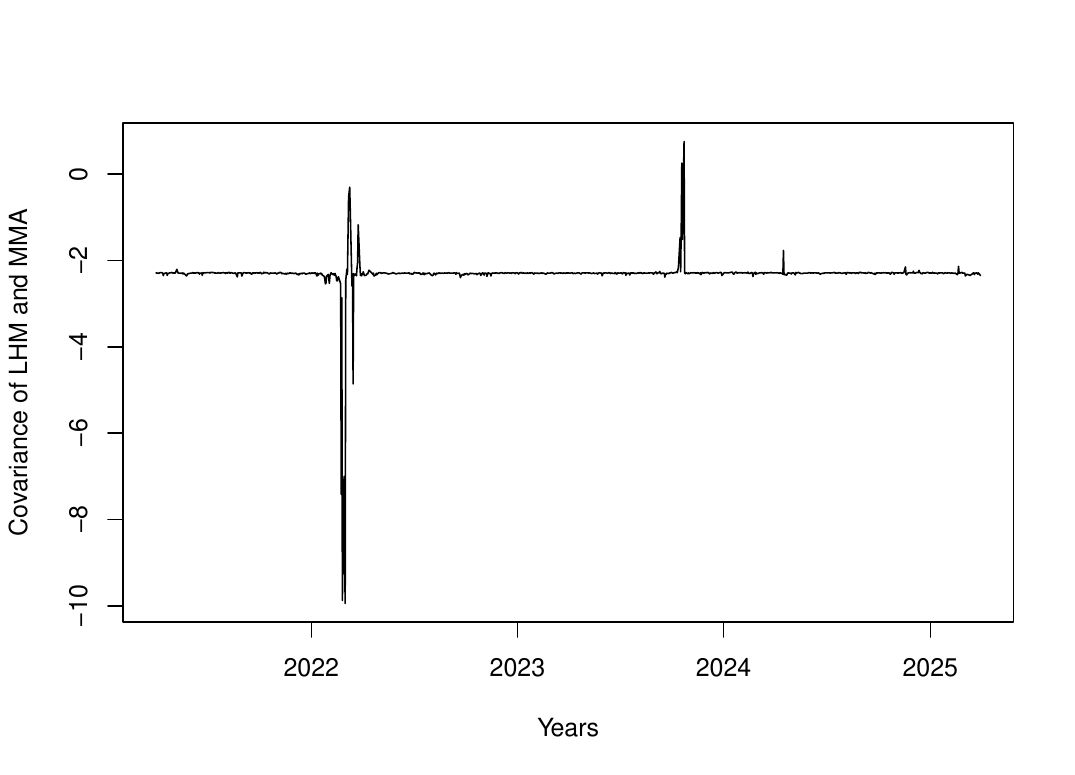}
        \caption{Cov(LHM, MMA) - Time}
        \label{fig:Var_Cov_GPR_Time h}
    \end{subfigure}
    \begin{subfigure}[t]{0.32\textwidth}
        \centering
        \includegraphics[width=\textwidth]{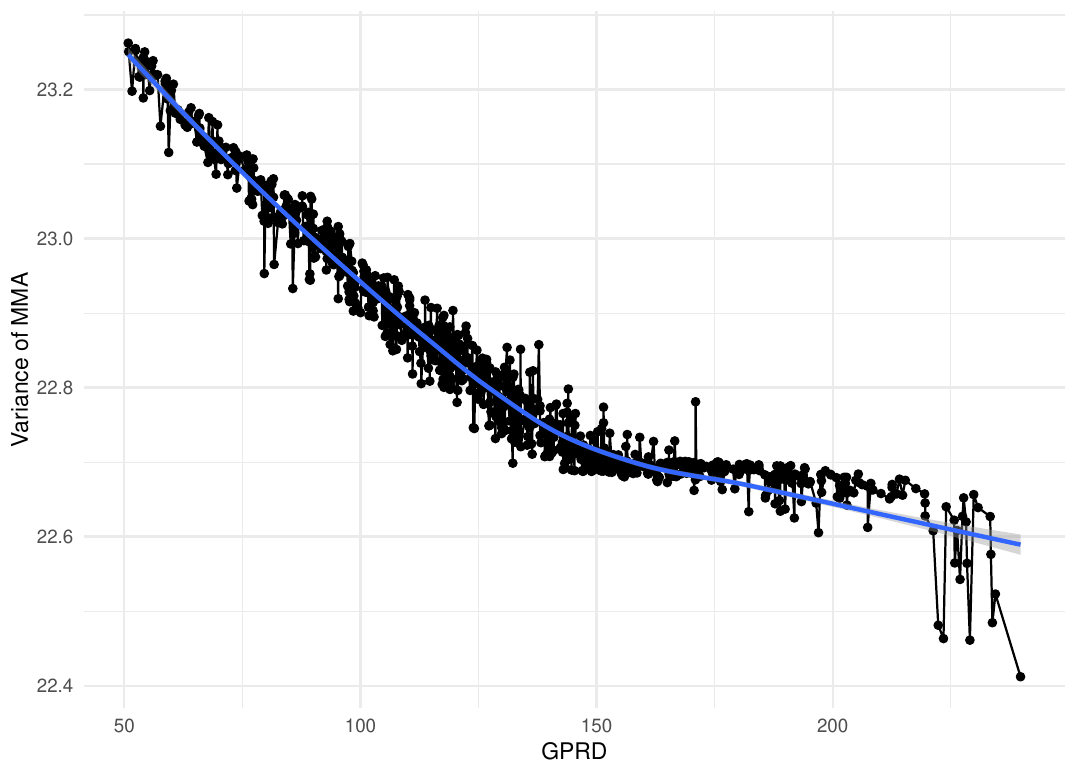}
        \caption{Var(MMA) - GPRD}
        \label{fig:Var_Cov_GPR_Time i}
        \vspace{0.3cm}
        
        \includegraphics[width=\textwidth]{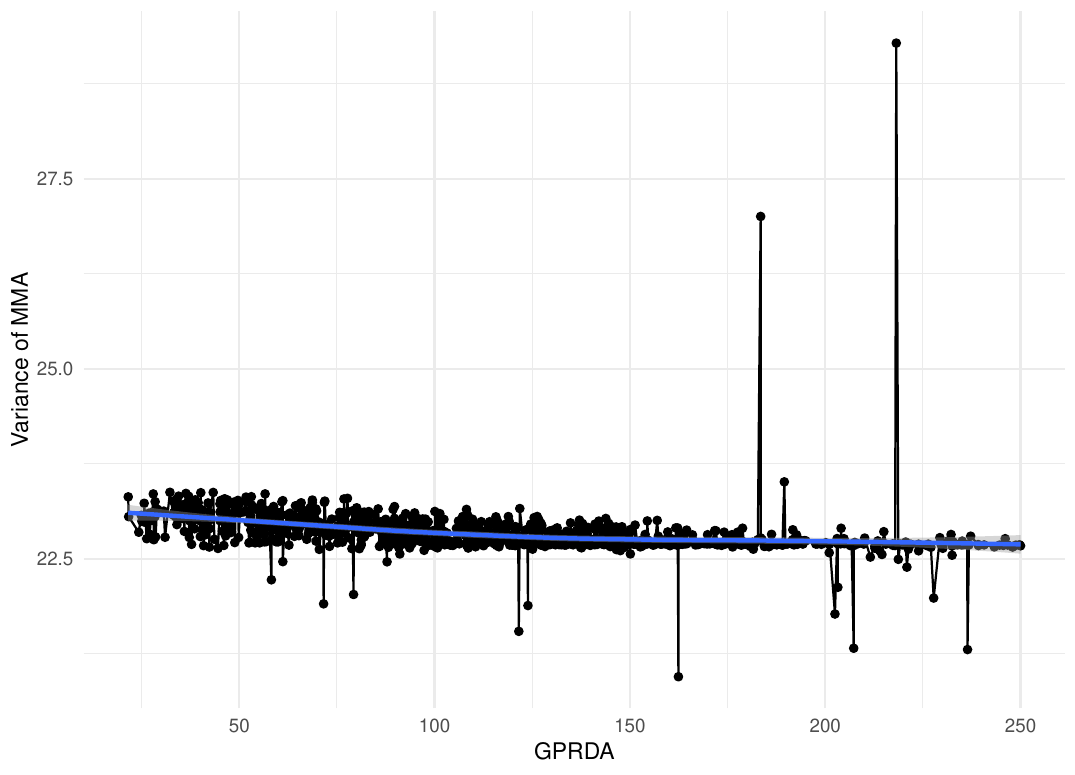}
        \caption{Var(MMA) - GPRDA}
        \label{fig:Var_Cov_GPR_Time j}
        \vspace{0.3cm}
        
        \includegraphics[width=\textwidth]{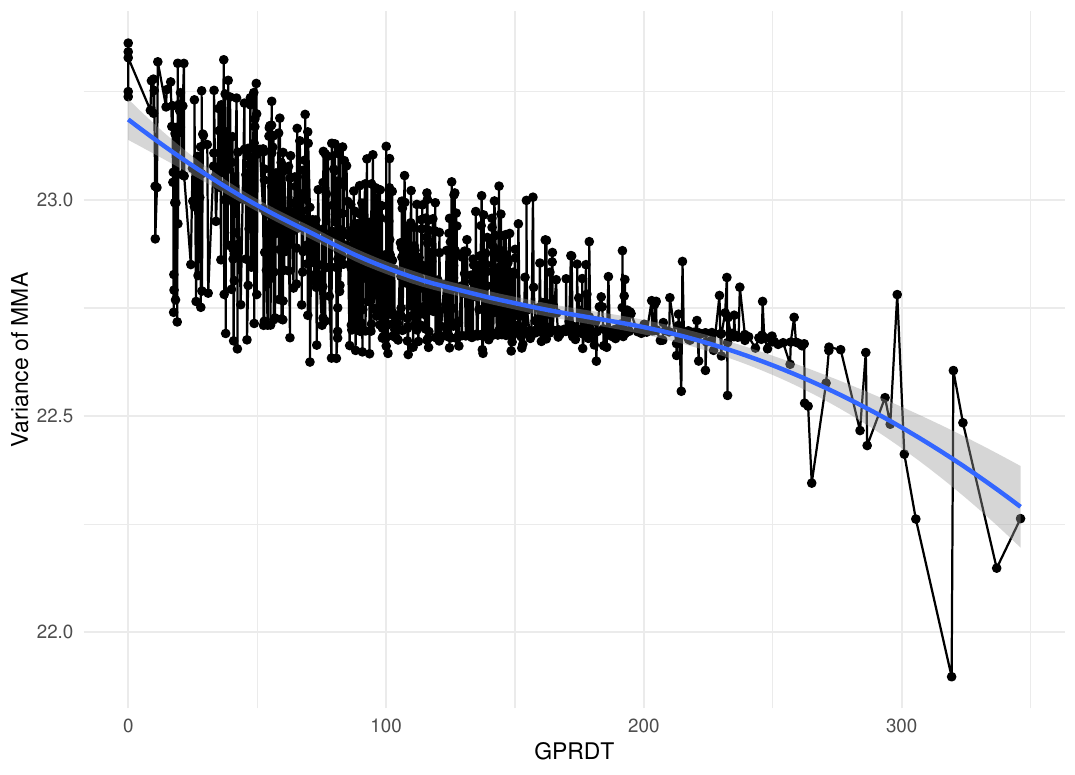}
        \caption{Var(MMA) - GPRDT}
        \label{fig:Var_Cov_GPR_Time k}
        \vspace{0.3cm}
        
        \includegraphics[width=\textwidth]{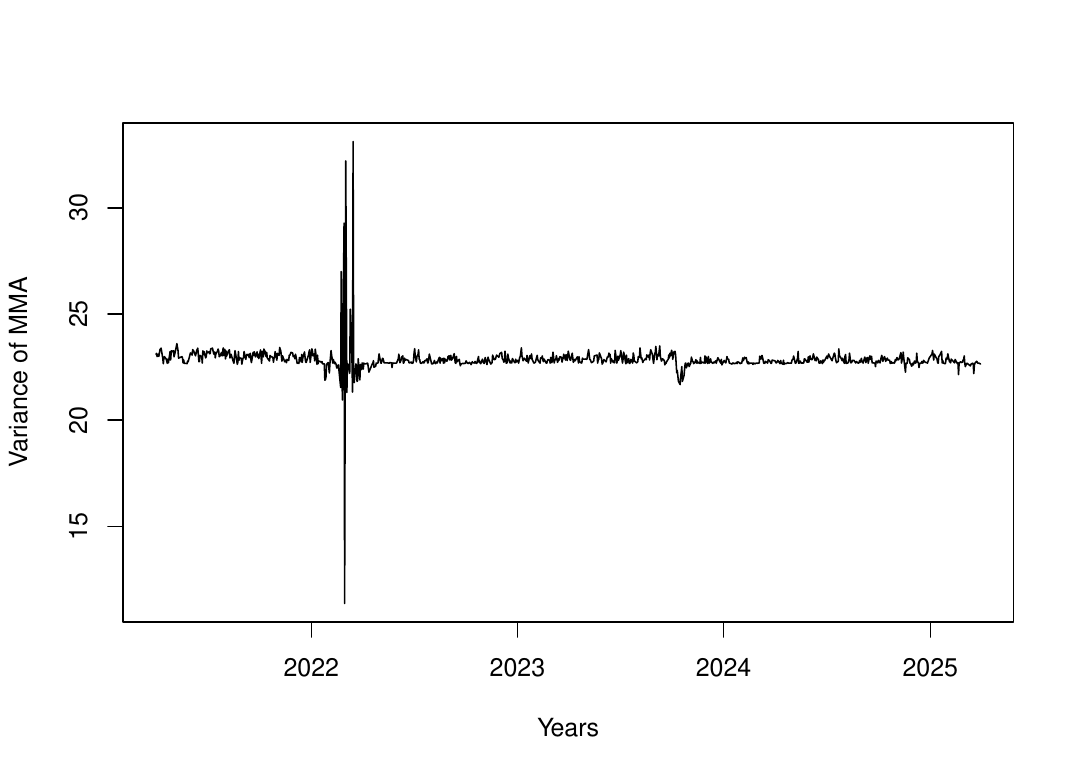}
        \caption{Var(MMA) - Time}
        \label{fig:Var_Cov_GPR_Time l}
    \end{subfigure}
    \hfill
    \caption{Components of the estimated conditional covariance matrix against the three geopolitical risk indices and time}
    \label{fig:Var_Cov_GPR_Time}
\end{figure}

\end{document}